\newcommand \yhl[1]{\ifthenelse{\boolean{HIGHCOMM}}{\textcolor{blue}{#1}}{#1}}
\newcommand \shl[1]{\ifthenelse{\boolean{SHOW_WHAT_IT_WAS}}{\yhl{\sout{#1}}}{}}
\newcommand*\mycirc[1]{%
\begin{tikzpicture}
\node[draw,circle,inner sep=1pt] {#1};
\end{tikzpicture}}
\title{\LARGE \bf
On the Minimal Revision Problem of Specification Automata
}
\author{Kangjin Kim, Georgios Fainekos and Sriram Sankaranarayanan%
\thanks{This work has been partially supported by award NSF CNS 1116136.}%
\thanks{K. Kim and G. Fainekos are with the School of Computing, Informatics and Decision Systems Engineering, Arizona State University, Tempe, AZ 85281, USA { \tt\small \{Kangjin.Kim,fainekos\}@asu.edu} }
\thanks{S. Sankaranarayanan is with the Department of Computer Science, University of Colorado, Boulder, CO { \tt\small srirams@colorado.edu}}%
}
\begin{document}
\pagenumbering{arabic}

\maketitle
%

\begin{abstract} 
As robots are being integrated into our daily lives, it becomes necessary to provide guarantees on the safe and provably correct operation.
Such guarantees can be provided using automata theoretic task and mission planning where the requirements are expressed as temporal logic specifications.
However, in real-life scenarios, it is to be expected that not all user task requirements can be realized by the robot.
In such cases, the robot must provide feedback to the user on why it cannot accomplish a given task.
Moreover, the robot should indicate what tasks it can accomplish which are as ``close" as possible to the initial user intent. 
This paper establishes that the latter problem, which is referred to as the minimal specification revision problem, is NP complete.
A heuristic algorithm is presented that can compute good approximations to the Minimal Revision Problem (MRP) in polynomial time.
The experimental study of the algorithm demonstrates that in most problem instances the heuristic algorithm actually returns the optimal solution.
Finally, some cases where the algorithm does not return the optimal solution are presented.

\end{abstract}

\begin{IEEEkeywords}
Motion planning, temporal logics, specification revision, hybrid control.
\end{IEEEkeywords}


\section{Introduction}

As robots become mechanically more capable, they are going to be more and more integrated into our daily lives.
Non-expert users will have to communicate with the robots in a natural language setting and request a robot or a team of robots to accomplish complicated tasks.
Therefore, we need methods that can capture the high-level user requirements, solve the planning problem and map the solution to low level continuous control actions.
In addition, such frameworks must come with mathematical guarantees of safe and correct operation for the whole system and not just the high level planning or the low level continuous control.

Linear Temporal Logic (LTL) (see \citealt{ClarkeGP99}) can provide the mathematical framework that can bridge the gap between 
\begin{enumerate}
\item natural language and high-level planning algorithms (e.g., \citealt{KressGazitFP08ar,DzifcakSBS09icra}), and  
\item high-level planning algorithms and control (e.g., \citealt{FainekosGKGP09automatica,KaramanSF08cdc,BhatiaKV10icra,WongpiromsarnTM10hscc,RoyTM11hscc}).
\end{enumerate}

LTL has been utilized as a specification language in a wide range of robotics applications.
For a good coverage of the current research directions, the reader is referred to  \citealt{FainekosGKGP09automatica,KressGazitFP09tro,KaramanSF08cdc,KloetzerB10tro,BhatiaKV10icra,WongpiromsarnTM10hscc,RoyTM11hscc,BobadillaEtAlRSS11,UlusoyEtAl2011iros,LacerdaL2011rss,LaViersEtAl2011iccps,FilippidisDK12cdc} and the references therein.

For instance, in \citealt{FainekosGKGP09automatica}, the authors present a framework for motion planning of a single mobile robot with second order dynamics.
The problem of reactive planning and distributed controller synthesis for multiple robots is presented in \citealt{KressGazitFP09tro} for a fragment of LTL (Generalized Reactivity 1 (GR1)).
The authors in \citealt{WongpiromsarnTM10hscc} present a method for incremental planning when the specifications are provided in the GR1 fragment of LTL.
The papers \citealt{KloetzerB10tro,UlusoyEtAl2011iros} address the problem of centralized control of multiple robots where the specifications are provided as LTL formulas.
An application of LTL planning methods to humanoid robot dancing is presented in \citealt{LaViersEtAl2011iccps}.
In \citealt{KaramanSF08cdc}, the authors convert the LTL planning problem into Mixed Integer Linear Programming (MILP) or Mixed Integer Quadratic Programming (MIQP) problems.
The use of sampling-based methods for solving the LTL motion planning problem is explored in \citealt{BhatiaKV10icra}.
All the previous applications assume that the robots are autonomous agents with full control over their actions.
An interesting different approach is taken in \citealt{BobadillaEtAlRSS11} where the agents move uncontrollably in the environment and the controller opens and closes gates in the environment.

All the previous methods are based on the assumption that the LTL planning problem has a feasible solution.
However, in real-life scenarios, it is to be expected that not all complex task requirements can be realized by a robot or a team of robots.
In such failure cases, the robot needs to provide feedback to the non-expert user on why the specification failed.
Furthermore, it would be desirable that the robot proposes a number of plans that can be realized by the robot and which are as ``close" as possible to the initial user intent.
Then, the user would be able to understand what are the limitations of the robot and, also, he/she would be able to choose among a number of possible feasible plans.

In \citealt{Fainekos11icra}, we made the first steps towards solving the debugging (i.e., why the planning failed) and revision (i.e., what the robot can actually do) problems for automata theoretic LTL planning (\citealt{GiacomoV99ecp}).
We remark that a large number of robotic applications, e.g.,  \citealt{FainekosGKGP09automatica,KloetzerB10tro,BhatiaKV10icra,BobadillaEtAlRSS11,UlusoyEtAl2011iros} and \citealt{LaViersEtAl2011iccps}, are utilizing this particular LTL planning method.

In the follow-up paper \citealt{KimFS12icra}, we studied the theoretical foundations of the specification revision problem when both the system and the specification can be represented by $\omega$-automata (\citealt{Buchi60}).
In particular, we focused on the Minimal Revision Problem (MRP), i.e., finding the ``closest" satisfiable specification to the initial specification, and we proved that the problem is NP-complete even when severely restricting the search space.
Furthermore, we presented an encoding of MRP as a satisfiability problem and we demonstrated experimentally that we can quickly get the exact solution to MRP  for small problem instances. 

In \citealt{KimFS12iros}, we revisited MRP and we presented a heuristic algorithm that can approximately solve MRP in polynomial time.
We experimentally established that the heuristic algorithm almost always returns the optimal solution on random problem instances and on LTL planning scenarios from our previous work.
Furthermore, we demonstrated that we can quickly return a solution to the MRP problem on large problem instances.
Finally, we provided examples where the algorithm is guaranteed not to return the optimal solution.

This paper is an extension of the preliminary work by \citealt{KimFS12icra} and \citealt{KimFS12iros}. 
In this extended journal version, we present a unified view of the theory alongside with the proofs that were omitted in the aforementioned papers.
The class of specifications that can be handled by our framework has been extended as well.
The framework that was presented in \citealt{KimFS12icra} was geared towards robotic motion planning specifications as presented in \citealt{FainekosGKGP09automatica}.
In addition, we prove that the heuristic algorithm that we presented in \citealt{KimFS12iros} has a constant approximation ratio only on a specific class of graphs.
Furthermore, we have included several running examples to enhance the readability of the paper as well as more case studies to demonstrate the feasibility of the framework.
On the other hand, we have excluded some details on the satisfiability encoding of the MRP problem which can be found in \citealt{KimFS12icra}.

\section{Problem Formulation}
\label{sec:problem}

In this paper, we work with discrete abstractions (Finite State Machines) of the continuous robotic control system (\citealt{FainekosGKGP09automatica}).
This is a common practice in approaches that hierarchically decompose the control synthesis problem into high level discrete planning synthesis and low level continuous feedback controller composition (e.g., \citealt{FainekosGKGP09automatica,KressGazitFP09tro,LaViersEtAl2011iccps,KloetzerB10tro,UlusoyEtAl2011iros}). 
Each state of the Finite State Machine (FSM) $\FTS$ is labeled by a number of symbols from a set $\Pi= \{\pi_0,$ $\pi_1,$ $\dots,$ $\pi_n\}$ that represent regions in the configuration space (see \citealt{LaValle05} or \citealt{ChosetLHKBKT05book}) of the robot or, more generally, actions that can be performed by the robot.
The control requirements for such a system can be posed using specification automata $\Bc$ with B{\"u}chi acceptance conditions (see \citealt{Buchi60}) also known as $\omega$-automata.


The following example presents a scenario for motion planning of a mobile robot.

\begin{figure}[t]
\centering
\includegraphics[width=7cm]{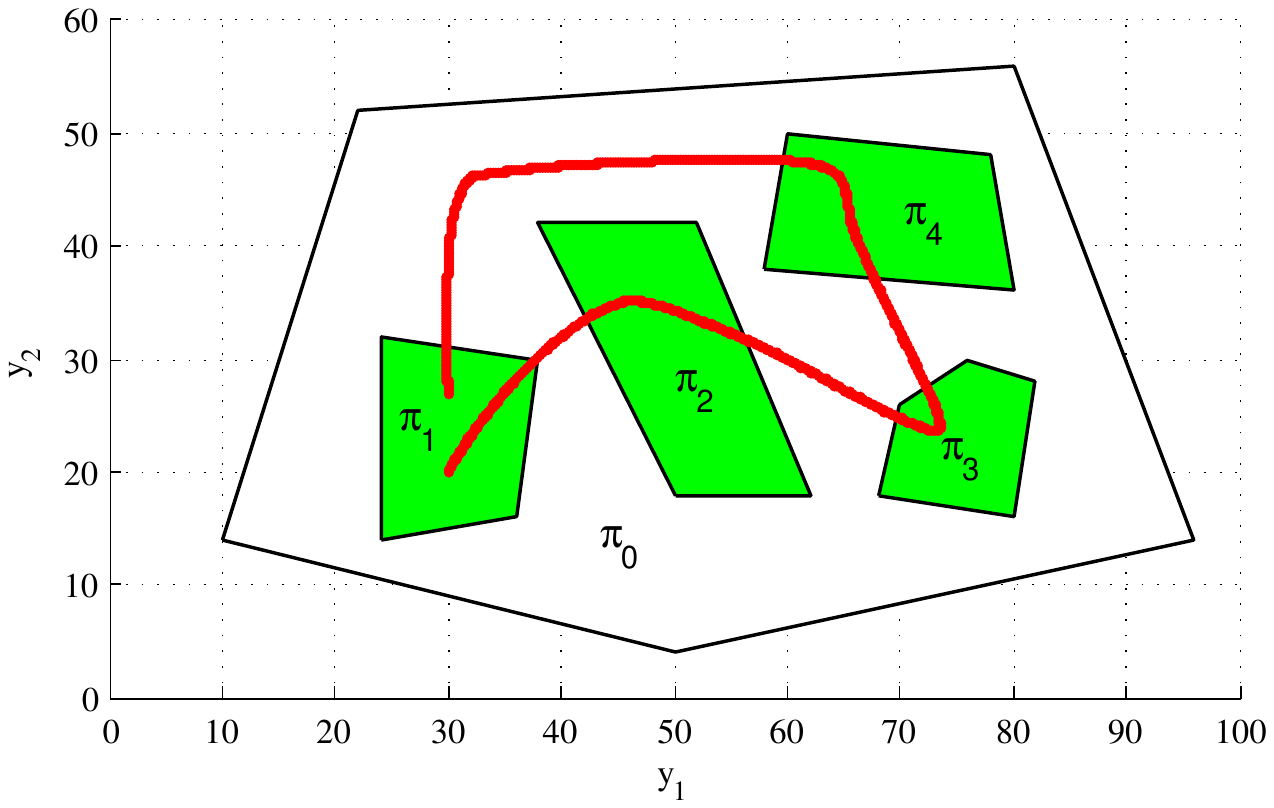}
\caption{The simple environment of Example \ref{exm:areas4} along with a low speed mobile robot trajectory that satisfies the specification.}
\label{pic:areas4}
\end{figure}

\ifthenelse {\boolean{BGRAPHPDF}}
{
\begin{figure}[t]
\centering
\includegraphics[width=8cm]{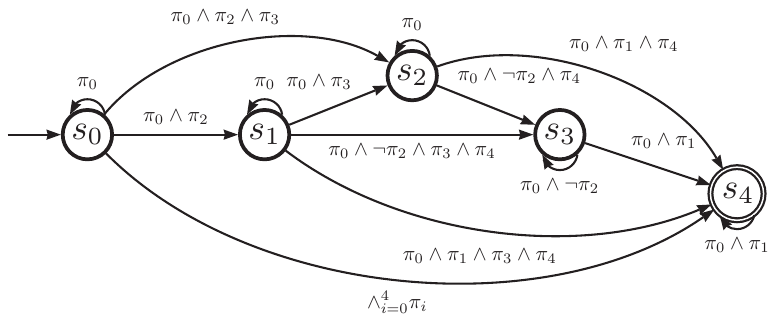}
\caption{The specification automaton $\Bc_{m}$ of Example \ref{exm:areas4}.}
\label{fig:exmp:init:spec}
\end{figure}
}
{
\begin{figure}
\begin{center}
\VCDraw{%
\begin{VCPicture}{(-1,-2.5)(10,2)}
\State[s_0]{(-1,0)}{A} \State[s_2]{(4.5,1)}{B} \State[s_1]{(2,0)}{C}
\State[s_3]{(7,0)}{0} \FinalState[s_4]{(10,-1)}{2}  
\Initial{A}  
\ncline{->}{A}{C} \naput{${\pi}_0 \wedge \pi_2$}
\ncline{->}{C}{B} \naput{${\pi}_0 \wedge \pi_3$}
\ncline{->}{C}{0} \nbput{${\pi}_0 \wedge \neg \pi_2 \wedge \pi_3 \wedge \pi_4$}
\ncline{->}{B}{0} \naput{${\pi}_0 \wedge \neg \pi_2 \wedge \pi_4$}
\ncline{->}{0}{2} \naput{${\pi}_0 \wedge \pi_1$}
\ncarc[arcangle=40]{->}{A}{B} \naput{$\pi_0 \wedge \pi_2 \wedge \pi_3$}
\ncarc[arcangle=-40]{->}{A}{2} \nbput{$\wedge_{i=0}^4 \pi_i$}
\ncarc[arcangle=-30]{->}{C}{2} \nbput{$\pi_0 \wedge \pi_1 \wedge \pi_3 \wedge \pi_4$}
\ncarc[arcangle=40]{->}{B}{2} \naput{$\pi_0 \wedge \pi_1 \wedge \pi_4$}
\nccircle{->}{A}{0.3cm}\nbput{${\pi}_0$}
\nccircle{->}{B}{0.3cm}\nbput{${\pi}_0$}
\nccircle{->}{C}{0.3cm}\nbput{${\pi}_0$}
\nccircle{<-}{0}{-0.3cm}\naput{$\pi_0 \wedge \neg \pi_2$}
\nccircle{<-}{2}{-0.3cm}\naput{${\pi}_0 \wedge \pi_1$}
\end{VCPicture}}
\end{center}
\caption{The specification automaton of Example \ref{exm:areas4}.}
\label{fig:exmp:init:spec}
\end{figure}
}

\begin{exmp}[Robot Motion Planning] 
\label{exm:areas4} 
We consider a mobile robot which operates in a planar environment. 
The continuous state variable $x(t)$ models the internal dynamics of the robot whereas only its position $y(t)$ is observed. 
In this paper, we will consider a 2nd order model of the motion of a planar robot (dynamic model):
\begin{align*}
\dot{x}_1(t)& = x_2(t), &  \; x_1(t)\in \Re^2,\;  x_1(0)\in X_{1,0} \\
\dot{x}_2(t)& = u(t), & \; x_2(t) \in \Re^2, \;   x_2(0)=0, \; u(t) \in U \\
y(t)&= x_1(t). & 
\end{align*}
The robot is moving in a convex polygonal environment $\pi_0$ with four areas of interest denoted by $\pi_1,\pi_2,\pi_3,\pi_4$ (see Fig.~\ref{pic:areas4}). 
Initially, the robot is placed somewhere in the region labeled by $\pi_1$.
The robot must accomplish the task: ``Stay always in $\pi_0$ and visit area $\pi_2$, then area $\pi_3$, then area $\pi_4$ and, finally, return to and stay in region $\pi_1$ while avoiding area $\pi_2$," which is captured by the specification automaton in Fig. \ref{fig:exmp:init:spec}.

In \citealt{FainekosGKGP09automatica}, we developed a hierarchical framework for motion planning for dynamic models of robots.
The hierarchy consists of a high level logic planner that solves the motion planning problem for a kinematic model of the robot, e.g.,
\ifthenelse {\boolean{TECHREP}}
{
\begin{align*}
\dot{z}(t)& = u(t), & \; z(t) \in \Re^2, \;   z(0) \in Z_0 \\
y'(t)&= z(t). & 
\end{align*}
}
{
\[ \dot{z}(t) = u(t), \; y'(t)= z(t), \; z(t) \in \Re^2, \;   z(0) \in Z_0.  \]
}
Then, the resulting hybrid controller is utilized for the design of an approximate tracking controller for the dynamic model.
Since the tracking is approximate, the sets $\pi$ need to be modified (see Fig. \ref{pic:notraj} for an example) depending on the maximum speed of the robot so that the controller has a guaranteed tracking performance.
For example, in Fig. \ref{pic:notraj}, the regions that now must be visited are the contracted yellow regions, while the regions to be avoided are the expanded red regions.
However, the set modification might make the specification unrealizable, e.g., in Fig. \ref{pic:notraj} the robot cannot move from $\pi_4$ to $\pi_1$ while avoiding $\pi_2$, even though the specification can be realized on the workspace of the robot that the user perceives.
In this case, the user is entirely left in the dark as of why the specification failed and, more importantly, on what actually the system can achieve under these new constraints. 
This is especially important since the low level controller synthesis details should be hidden from the end user.
\exmend
\end{exmp}

\begin{figure}[t]
\centering
\includegraphics[width=6.5cm]{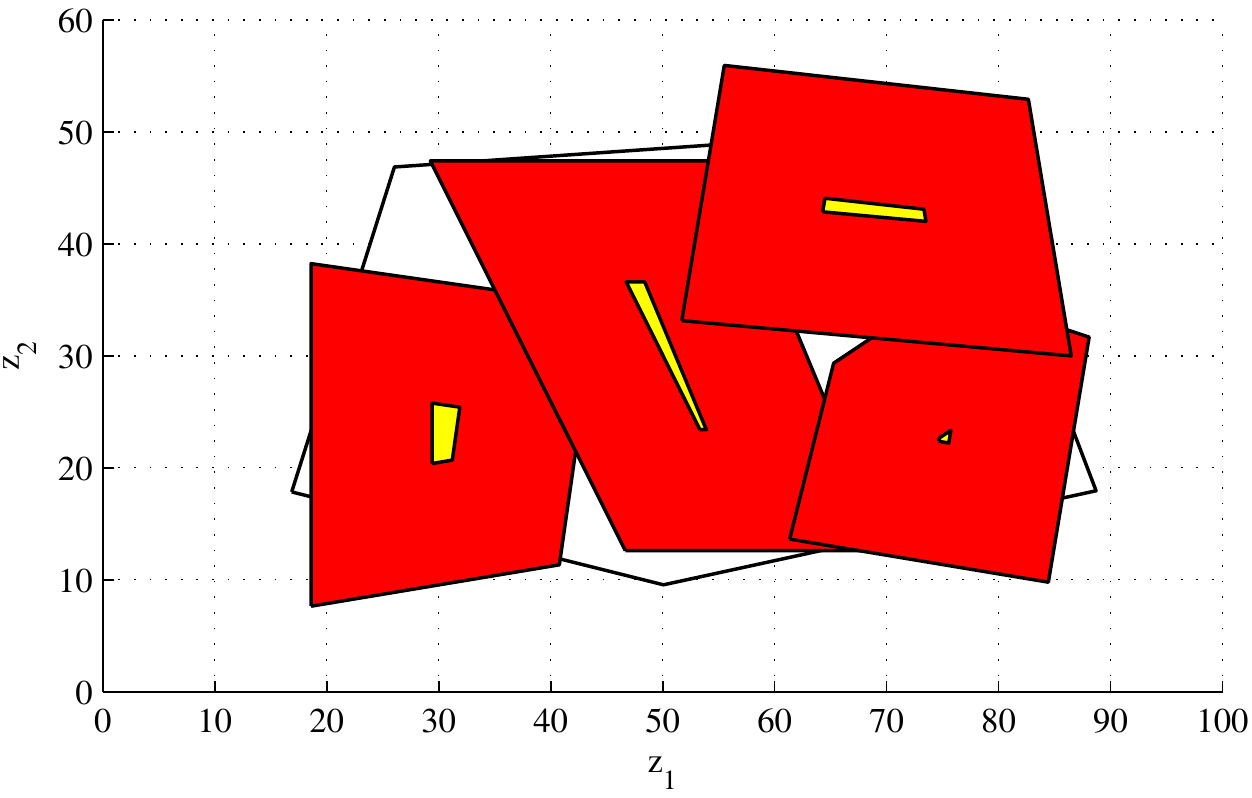}
\caption{The modified environment of Fig. \ref{pic:areas4} under large bounds on the permissible acceleration $U$.
The red regions indicate areas that should be avoided in order to satisfy $\neg \pi_i$ while the yellow regions indicate areas that should be visited in order to satisfy $\pi_i$.}
\label{pic:notraj}
\end{figure}

The next example presents a typical scenario for task planning with two agents.

\ifthenelse {\boolean{BGRAPHPDF}}
{
\begin{figure}[t]
\centering
\includegraphics[width=5.5cm]{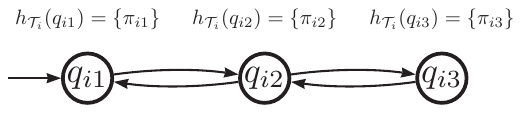}
\caption{Simple FSM model $\FTS_i$ of an autonomous agent. Each state $j$ of the FSM $i$ is labeled by an atomic proposition $\pi_{ij}$.} 
\label{fig:simp:sys}
\end{figure}
}
{
\begin{figure}
\begin{center}
\VCDraw{%
\begin{VCPicture}{(-1,-0.2)(7,1.8)}
\State[q_{i1}]{(0,0)}{A} \State[q_{i2}]{(3,0)}{B} \State[q_{i3}]{(6,0)}{C}
\Initial{A}  
\rput(3,1){$h_{\FTS_i}(q_{i2})=\{\pi_{i2}\}$}
\rput(0,1){$h_{\FTS_i}(q_{i1})=\{\pi_{i1}\}$}
\rput(6,1){$h_{\FTS_i}(q_{i3})=\{\pi_{i3}\}$}
\ncarc{->}{A}{B} 
\ncarc{->}{B}{A} 
\ncarc{->}{B}{C} 
\ncarc{->}{C}{B} 
\end{VCPicture}} 
\end{center}
\caption{Simple FSM model $\FTS_i$ of an object or an autonomous agent. Each state $q_{ij}$ of the FSM $i$ is labeled by an atomic proposition $\pi_{ij}$.}
\label{fig:simp:sys}
\end{figure}

}

\ifthenelse {\boolean{BGRAPHPDF}}
{
\begin{figure}[t]
\centering
\includegraphics[width=7cm]{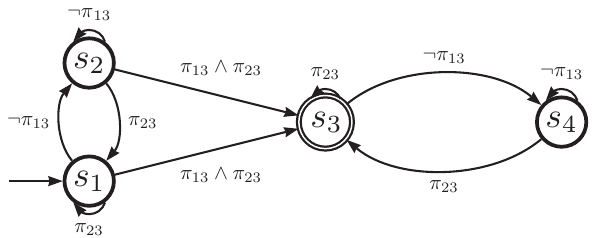}
\caption{The specification automaton of Example \ref{exm:twoagents}.}
\label{fig:exmp:init:spec2}
\end{figure}
}
{
\begin{figure}
\begin{center}
\VCDraw{%
\begin{VCPicture}{(0,-1.7)(8,1.5)}
\State[s_1]{(0,-1)}{A} \State[s_2]{(0,1)}{B} \FinalState[s_3]{(4,0)}{C}
\State[s_4]{(8,0)}{D}
\Initial{A}  
\ncline{->}{A}{C} \nbput{${\pi}_{13} \wedge \pi_{23}$}
\ncline{->}{B}{C} \naput{${\pi}_{13} \wedge \pi_{23}$}
\ncarc[arcangle=40]{->}{A}{B} \naput{$\neg \pi_{13}$}
\ncarc[arcangle=40]{->}{B}{A} \naput{$\pi_{23}$}
\ncarc[arcangle=40]{->}{C}{D} \naput{$\neg \pi_{13}$}
\ncarc[arcangle=40]{->}{D}{C} \naput{$\pi_{23}$}
\nccircle{<-}{A}{-0.28cm}\naput{${\pi}_{23}$}
\nccircle{->}{B}{0.28cm}\nbput{${\neg \pi}_{13}$}
\nccircle{->}{C}{0.28cm}\nbput{${\pi}_{23}$}
\nccircle{->}{D}{0.28cm}\nbput{$\neg \pi_{13}$}
\end{VCPicture}}
\end{center}
\caption{The specification automaton of Example \ref{exm:twoagents}.}
\label{fig:exmp:init:spec2}
\end{figure}
}

\begin{exmp}[Multi-Agent Planning] 
\label{exm:twoagents} 
We consider two autonomous agents whose independent actions can be modeled using an FSM as in Fig. \ref{fig:simp:sys}.
In this example, each state $q_i$ represents a location $i$.
In order to construct a simple example, we assume that at each discrete time instance only one agent can move.
Alternatively, we can think of these agents as being objects moved around by a mobile manipulator and that the manipulator can move only one object at a time.
This means that the state of both objects can be described by the asynchronous composition of the two state machines $\FTS_1$ and $\FTS_2$.
The asynchronous composition results in a FSM $\FTS$ with 9 states where each state $(q_{1i},q_{2j})$ is labeled by $h_{\FTS}(q_{1i},q_{2j}) = h_{\FTS_1}(q_{1i}) \cup h_{\FTS_2}(q_{2j})$.

The system must accomplish the task: ``Object 2 should be placed in Location 3 after Object 1 is placed in Location 3".
Note that this requirement could be used to enforce that Object 2 is going to be positioned on top of Object 1 at the end of the system execution.
However, the requirement permits temporary placing Object 2 in Location 3 before Object 1 is placed in Location 3.
This should be allowed for problems where a temporary reposition of the objects is necessary.
Now, let's assume that the aforementioned task is just a part from a long list or requirements which also include the task: ``Always, Object 1 should not be in Location 3 until Object 2 moves in Location 3".

These are informal requirements and in order to give them mathematical meaning we will have to use a formal language.
In Linear Temporal Logic (LTL) (see \citealt{ClarkeGP99}), the requirements become%
\footnote{Here, F stands for eventually in the future, $\Un$ for until and G for always. 
Further introduction of LTL is out of the scope of this paper and the interested reader can explore the logic in \citealt{FainekosGKGP09automatica}. 
We use LTL in the following for succinctness in the presentation.}:
F($\pi_{13}$ $\wedge$ F$\pi_{23}$) and G(($\neg$$\pi_{13}$)$\Un$$\pi_{23}$).
We remark that the conjunction of these two requirements is actually a satisfiable specification (even though the requirements appear conflicting) and the corresponding specification automaton is presented in Fig. \ref{fig:exmp:init:spec2}.
The specification remains satisfiable because the semantics of the logic permit both objects to be place in Location 3 at the same time (see transitions on label $\pi_{13} \wedge \pi_{23}$ from states $s_1$ and $s_2$ to state $s_3$).

However, there is no trajectory of the FSM $\FTS$ that will satisfy the specification. 
Recall that the model does not allow for simultaneous transitions of the two objects.
Again, the user does not know why the specification failed and, more importantly, on what actually the system can achieve that is ``close" to the initial user intent.
\exmend
\end{exmp}

When a specification $\Bc$ is not satisfiable on a particular system $\FTS$, then the current motion planning  and control synthesis methods (e.g., \citealt{FainekosGKGP09automatica,KloetzerB10tro,LaViersEtAl2011iccps}) based on automata theoretic concepts (see \citealt{GiacomoV99ecp}) simply return that the specification is not satisfiable without any other user feedback. 
In such cases, we would like to be able to solve the following problem and provide feedback to the user.

\begin{prob}[Minimal Revision Problem (MRP)]
Given a system $\FTS$ and a specification automaton $\Bc$, if the specification $\Bc$ cannot be satisfied on $\FTS$, then find the ``closest" specification $\Bc'$ to $\Bc$ which can be satisfied on $\FTS$.
\label{prb:main}
\end{prob}

Problem \ref{prb:main} was first introduced in \citealt{Fainekos11icra} for Linear Temporal Logic (LTL) specifications. 
In \citealt{Fainekos11icra}, we provided solutions to the debugging and (not minimal) revision problems and we demonstrated that we can easily get a minimal revision of the specification when the discrete controller synthesis phase fails due to unreachable states in the system.

\begin{ass}
All the states on $\FTS$ are reachable.
\label{ass:reach}
\end{ass}

In \citealt{KimFS12icra}, we introduced a notion of distance on a restricted space of specification automata and, then, we were able to demonstrate that MRP is in NP-complete even on that restricted search space of possible solutions.
Since brute force search is prohibitive for any reasonably sized problem, we presented an encoding of MRP as a satisfiability problem.
Nevertheless, even when utilizing state of the art satisfiability solvers, the size of the systems that we could handle remained  small (single robot scenarios in medium complexity environments).

In \citealt{KimFS12iros}, we provided an approximation algorithm for MRP. 
The algorithm is based on Dijkstra's single-source shortest path algorithm (see \citealt{CormenLRS01}), which can be regarded both as a greedy and a dynamic programming algorithm (see \citealt{Sniedovich06cc}).
We demonstrated through numerical experiments that not only the algorithm returns an optimal solution in various scenarios, but also that it outperforms in computation time our satisfiability based solution.
Then, we presented some scenarios where the algorithm is guaranteed not to return the optimal solution.


{\bf Contributions:}
In this paper, we define the MRP problem and we provide the proof that MRP is NP-complete even when restricting the search space (e.g., Problem \ref{prb:formal}). 
Then, we provide an approximation algorithm for MRP and theoretically establish the upper bound of the algorithm for a special case. 
Furthermore, we show that for our heuristic algorithm a constant approximation ratio cannot be established, in general.
We also present experimental results of the scalability of our framework and establish some experimental approximation bounds on random problem instances.
Finally, in order to improve the paper presentation, we also provide multiple examples that have not been published before.

\section{Preliminaries}
\label{sec:prelim}

In this section, we review some basic results on the automata theoretic planning and the specification revision problem from \citealt{FainekosGKGP09automatica,Fainekos11icra}.

\yhl{ %
Throughout the paper, we will use the notation $\mathcal{P}(A)$ for representing the powerset of a set $A$, i.e., $\mathcal{P}(A) = \{B \;|\; B \subseteq A\}$. 
Clearly, it includes $\emptyset$ and $A$ itself. We also define the set difference as $A \setminus B = \{x \in A\;|\;x \notin B\}$.
}

\subsection{Constructing Discrete Controllers}
\label{sec:ltlplanning}

We assume that the combined actions of the robot/team of robots and their operating environment can be represented using an FSM.
 
\begin{defn} [FSM] A Finite State Machine is a tuple $\FTS = (Q, Q_0, \rightarrow_\FTS, h_\FTS, \Pi)$ where:
\ifthenelse {\boolean{TECHREP}}
{
\begin{itemize}
\item 
}{}
$Q$ is a set of states; 
\ifthenelse {\boolean{TECHREP}}
{
\item 
}{}
$Q_0 \subseteq Q$ is the set of possible initial states; 
\ifthenelse {\boolean{TECHREP}}
{
\item 
}{}
$\rightarrow_\FTS \subseteq Q \times Q$ is the transition relation; and, 
\ifthenelse {\boolean{TECHREP}}
{
\item 
}{}
$h_\FTS : Q \rightarrow \Pc(\Pi)$ maps each state $q$ to the set of atomic propositions that are true on $q$.
\ifthenelse {\boolean{TECHREP}}
{
\end{itemize}
}{}
\end{defn}

We define a {\it path} on the FSM to be a sequence of states and a {\it trace} to be the corresponding sequence of sets of propositions.
Formally, a path is a function $p : \Ne \rightarrow Q$ such that for each $i\in \Ne$ we have $p(i) \rightarrow_\FTS p(i+1)$ and the corresponding trace is the function composition $\bar p = h_\FTS \circ p : \Ne \rightarrow \Pc(\Pi)$.
The language $\Lc(\FTS)$ of $\FTS$ consists of all possible traces.

\begin{exmp}
For the two agent system in Example \ref{exm:twoagents}, a path would be $(q_{11},q_{21})(q_{11},q_{22})(q_{12},q_{22}) \ldots$ and the corresponding trace would be $\{\pi_{11},\pi_{21}\}\{\pi_{11},\pi_{22}\}\{\pi_{12},\pi_{22}\} \ldots$.
\end{exmp}


In this work, we are interested in the $\omega$-automata that will impose certain requirements on the traces of $\FTS$.
Omega automata differ from the classic finite automata in that they accept infinite strings (traces of $\FTS$ in our case).

\begin{defn}
\label{def:buchi}
A automaton is a tuple $\BUCHI = (S_\BUCHI, s_{0}^{\BUCHI}, \Omega,  \yhl{\rightarrow_{\BUCHI}}, F_{\BUCHI})$ where:
\ifthenelse {\boolean{TECHREP}}
{
\begin{itemize}
\item 
}{}
$S_\BUCHI$ is a finite set of states;
\ifthenelse {\boolean{TECHREP}}
{
\item 
}{}
$s_0^\BUCHI$ is the initial state; 
\ifthenelse {\boolean{TECHREP}}
{
\item 
}{}
 $\Omega$ is an input alphabet;
\ifthenelse {\boolean{TECHREP}}
{
\item 
}{}
\yhl{ $\rightarrow_{\BUCHI} \subseteq S_\BUCHI \times \Omega \times S_\BUCHI$ is a transition relation;} and
\ifthenelse {\boolean{TECHREP}}
{
\item 
}{}
$F_{\BUCHI} \subseteq S_\BUCHI$ is a set of final states.
\ifthenelse {\boolean{TECHREP}}
{
\end{itemize}
}{}
\end{defn}

We also write $s \stackrel{l}{\rightarrow}_\BUCHI s'$ instead of $(s, l, s') \in \rightarrow_\BUCHI$.
A {\it specification} automaton is an automaton with B{\"u}chi acceptance condition where the input alphabet is the powerset of the labels of the system $\Tc$, i.e., $\Omega = \Pc(\Pi)$.

A {\it run} $r$ of a specification automaton $\BUCHI$ is a sequence of states $r : \Ne \rightarrow S_\BUCHI$ that occurs under an input trace $\bar p$ taking values in $\Omega = \Pc(\Pi)$.
That is, for $i = 0$ we have $r(0) = s_{0}^{\BUCHI}$ and for all $i \geq 0$ we have $r(i) \stackrel{\bar p(i)}{\rightarrow}_\BUCHI r(i+1)$.  
Let $\lim(\cdot)$ be the function that returns the set of states that are encountered infinitely often in the run $r$ of $\BUCHI$. 
Then, a run $r$ of an automaton $\BUCHI$ over an infinite trace $\bar p$ is {\it accepting} if and only if $\lim(r) \cap F_{\BUCHI} \neq \emptyset$.
This is called a B{\"u}chi acceptance condition.
Finally, we define the language $\Lc(\BUCHI)$ of $\BUCHI$ to be the set of all traces $\bar p$ that have a run that is accepted by $\BUCHI$.

\yhl{Even though the definition of specification automata (Def. \ref{def:buchi}) only uses sets of atomic propositions for labeling transitions, it is convenient for the user to read and write specification automata with propositional formulas on the transitions.
The popular translation tools from LTL to automata, e.g., \cite{GastinO01cav}, label the transitions with propositional formulas in Disjunctive Normal Form (DNF). 
A DNF formula on a transition of a specification automaton can represent multiple transitions between two states.
}
In the subsequent sections, we will be making the following simplifying assumption on the structure of the specification automata that we consider.

\begin{ass}
All the propositional formulas that appear on the transitions of a specification automaton are in Disjunctive Normal Form (DNF).
\yhl{That is, for any two states $s_1$, $s_2$ of an automaton $\BUCHI$, we represent the propositional formula that labels the corresponding transition by $\Phi_{\BUCHI}(s_1,s_2) = \bigvee_{i \in D_{s_1s_2}} \bigwedge_{j \in C^i_{s_1s_2}} \psi_{ij}$ for some appropriate set of indices $D_{s_1s_2}$ and $C^i_{s_1s_2}$.
Here, $\psi_{ij}$ is a literal which is $\pi$ or $\neg\pi$ for some $\pi \in \Pi$.
Finally, we assume that when any subformula in $\Phi_{\BUCHI}(s_1,s_2)$ is a tautology or a contradiction, then it is replaced by $\top$ (true) or $\bot$ (false), respectively.}
\label{ass:trans}
\end{ass}

\yhl{The last assumption is necessary in order to avoid converting a contradiciton like $\pi \wedge \neg \pi$ into a satisfiable formula (see Sec. \ref{sec:specrev}).}
We remark that the Assumption \ref{ass:trans} does not restrict the scope of this work.
Any propositional formula can be converted in DNF where any negation symbol appears in front of an atomic proposition.

\begin{exmp}
Let us consider the specification automaton in Fig. \ref{fig:exmp:init:spec2}.
The propositional formulas over the set of atomic propositions $\Pi$ are shorthands for the subsets of $\Pi$ that would label the corresponding transitions. 
For example, the label $\pi_{13}\wedge \pi_{23}$ over the edge $(s_2,s_3)$ succinctly represents all the transitions $(s_2,l,s_3)$ such that $\{\pi_{13},\pi_{23}\} \subseteq  l \subseteq \Pi$.
On the other hand, the label $\neg \pi_{13}$ over the edge $(s_1,s_2)$ succinctly represents all the transitions $(s_1,l,s_2)$ such that $l \subseteq \Pi$ and $\pi_{13}\not\in l$.
On input trace $\{\pi_{11},\pi_{21}\}\{\pi_{11},\pi_{22}\}\{\pi_{12},\pi_{22}\} \ldots$ the corresponding run would be $s_1 s_2 s_2 s_2 \ldots$. \exmend
\end{exmp}

In brief, our goal is to generate paths on $\FTS$ that satisfy the specification $\ASPEC$.
In automata theoretic terms, we want to find the subset of the language $\Lc(\FTS)$ which also belongs to the language $\Lc(\ASPEC)$.
This subset is simply the intersection of the two languages $\Lc(\FTS) \cap \Lc(\ASPEC)$ and it can be constructed by taking the product $\FTS \times \ASPEC$ of the FSM $\FTS$ and the specification automaton $\ASPEC$.
Informally, the automaton $\ASPEC$ restricts the behavior of the system $\FTS$ by permitting only certain acceptable transitions.
Then, given an initial state in the FSM $\FTS$, we can choose a particular trace from $\Lc(\FTS) \cap \Lc(\ASPEC)$ according to a preferred criterion.

\begin{defn}
The product automaton $\Ac = \FTS \times \ASPEC$ is the automaton $\Ac = (S_\Ac, s_{0}^{\Ac}, \Pc(\Pi), \yhl{\rightarrow_\Ac}, F_\Ac)$ where:
\begin{itemize}
\item $\Sc_\Ac = Q \times S_{\ASPEC}$,
\item $s_{0}^{\Ac} = \{(q_0,s_{0}^{\ASPEC}) \; | \; q_0 \in Q_0\}$,
\item \yhl{$\rightarrow_\Ac \subseteq S_\Ac \times \Pc(\Pi) \times S_\Ac$ s.t. $(q_i,s_i) \stackrel{l}{\rightarrow}_{\Ac} (q_j,s_j)$ iff  $q_i \rightarrow_\FTS q_j$ and  $s_i \stackrel{l}{\rightarrow}_{\ASPEC} s_j$ with $l = h_\FTS(q_j)$}, and
\item $F_\Ac = Q \times F_{\BUCHI}$ is the set of accepting states.
\end{itemize}
\end{defn}

Note that $\Lc(\Ac) = \Lc(\FTS) \cap \Lc(\ASPEC)$.
We say that $\ASPEC$ is {\it satisfiable} on $\FTS$ if $\Lc(\Ac) \neq \emptyset$.
Moreover, finding a satisfying path on $\FTS \times \ASPEC$ is an easy algorithmic problem (see \citealt{ClarkeGP99}). 
First, we convert automaton $\FTS \times \ASPEC$ to a directed graph and, then, we find the strongly connected components (SCC) in that graph. 

If at least one SCC that contains a final state is reachable from an initial state, then there exist accepting (infinite) runs on $\FTS \times \ASPEC$ that have a finite representation.
Each such run consists of two parts: {\bf prefix:} a part that is executed only once (from an initial state to a final state) and, {\bf lasso:} a part that is repeated infinitely (from a final state back to itself). 
Note that if no final state is reachable from the initial or if no final state is within an SCC, then the language $\Lc(\Ac)$ is empty and, hence, the high level synthesis problem does not have a solution.
{\it Namely, the synthesis phase has failed and we cannot find a system behavior that satisfies the specification.}

\begin{exmp}
The product automaton of Example \ref{exm:twoagents} has 36 states and 240 number of transitions. However, no final state is reachable from the initial state. \exmend
\end{exmp}

\section{The Specification Revision Problem}
\label{sec:specrev}


Intuitively, a revised specification is one that can be satisfied on the discrete abstraction of the workspace or the configuration space of the robot.
In order to search for a minimal revision, we need first to define an ordering relation on automata as well as a distance function between automata.
Similar to the case of LTL formulas in \citealt{Fainekos11icra}, we do not want to consider the ``space" of all possible automata, but rather the ``space" of specification automata which are semantically close to the initial specification automaton $\ASPEC$.
The later will imply that we remain close to the initial intention of the designer.
We propose that this space consists of all the automata that can be derived from $\ASPEC$ by relaxing \yhl{the restrictions for transitioning from one state to another}.
\yhl{In other words, we introduce possible transitions between two states of the specification automaton.}

\begin{exmp}
\yhl{
Consider the specification automaton $\ASPEC$ and the automaton $\BUCHI_1$ in Fig. \ref{fig:valid_rel_B}.
The transition relations of the two automata are defined as:
}
\begin{align*}
\text{\yhl{$ \rightarrow_{\ASPEC} = $}} 
& 
\text{\yhl{
$ \stackrel{0\xrightarrow{\neg \pi_0}_{\ASPEC} 0 }{\overbrace{\{ (0, l, 0) \; | \; l \subseteq \Pi \text{ and } \pi_0 \not \in l \}}} 
\cup 
\stackrel{0\xrightarrow{\pi_2}_{\ASPEC} 0 }{\overbrace{\{(0, l ,0) \; | \; \{\pi_2\} \subseteq l \subseteq \Pi \}}} 
\cup 
\stackrel{0\xrightarrow{\neg \pi_0 \wedge \pi_1}_{\ASPEC} 1 }{\overbrace{\{(0, \{\pi_1\}, 1), (0, \{\pi_1, \pi_2\}, 1) \}}}
\cup
$
}}
\displaybreak[3] \\
& \text{\yhl{
$ 
 \cup 
 \stackrel{1\xrightarrow{\neg \pi_0 }_{\ASPEC} 1 }{\overbrace{\{ (1, l, 1) \; | \; l \subseteq \Pi \text{ and } \pi_0 \not \in l \}}}  
 \cup  
\stackrel{1\xrightarrow{\neg \pi_0 \wedge \pi_2}_{\ASPEC} 2 }{\overbrace{\{ (1, \{\pi_2 \} ,2), (1, \{\pi_1, \pi_2 \} ,2)  \}}}
 \cup 
 \stackrel{0\xrightarrow{\neg \pi_0 \wedge \pi_1 \wedge \pi_2}_{\ASPEC} 2 }{\overbrace{\{(0, \{\pi_1, \pi_2 \} ,2)  \}}}
 \cup$
 }} 
\displaybreak[3] \\
& \text{\yhl{
$ \cup 
\stackrel{2\xrightarrow{\neg \pi_0}_{\ASPEC} 2 }{\overbrace{\{ (2, l, 2) \; | \; l \subseteq \Pi \text{ and } \pi_0 \not \in l \}}} $
}} 
\displaybreak[3] \\
%
\text{\yhl{$ \rightarrow_{\BUCHI_1} = $}}   
&
 \text{\yhl{ 
 $ \stackrel{0\xrightarrow{\top}_{\BUCHI_1} 0 }{\overbrace{\{ (0, l, 0) \; | \; l\subseteq \Pi\} }}
  \cup 
  \stackrel{0\xrightarrow{\neg \pi_0 \wedge \pi_1}_{\BUCHI_1} 1 }{\overbrace{\{ (0, \{\pi_1\} ,1), (0, \{\pi_1, \pi_2\}, 1)\}}} 
  \cup 
  \stackrel{1\xrightarrow{\top}_{\BUCHI_1} 1 }{\overbrace{\{ (1, l, 1) \; | \; l\subseteq \Pi\} }} 
  \cup 
  \stackrel{1\xrightarrow{\neg \pi_0 \wedge \pi_2}_{\BUCHI_1} 2 }{\overbrace{\{ (1, \{\pi_2 \} ,2), (1, \{\pi_1, \pi_2 \} ,2)  \}}}
   \cup $
   }} 
   \\
& \text{\yhl{ 
$ \cup 
\stackrel{1\xrightarrow{\pi_2}_{\BUCHI_1} 2 }{\overbrace{\{ (1, l, 2) \; | \; \{\pi_2\}\subseteq l \subseteq \Pi\} }} 
\cup 
\stackrel{2\xrightarrow{\neg \pi_0}_{\BUCHI_1} 2 }{\overbrace{\{ (2, l, 2) \; | \;  l \subseteq \Pi \text{ and } \pi_0 \not\in l \}}}$
}}
\displaybreak[3] \\
\end{align*}
\yhl{
and, hence, $ \rightarrow_{\ASPEC} \subset \rightarrow_{\BUCHI_1} $.
In other words, any transition allowed by $\ASPEC$ is also allowed by $\BUCHI_1$ and, thus, any trace accepted by $\ASPEC$ is also accepted $\BUCHI_1$.
If we view this from the perspective of the specification, then this means that the specification imposes less restrictions or that the specification automaton $\BUCHI_1$ is a relaxed specification with respect to $\ASPEC$.
}
\label{exmp:relax:simple}
\end{exmp}

\yhl{
As the previous example indicated, specification relaxation could be defined as the subset relation between the transition relations of the specification automata.
However, this is not sufficient from the perspective of user requirements.
For instance, consider again Example \ref{exmp:relax:simple}.
The transition $\mycirc{1}\xrightarrow{\neg \pi_0 \wedge \pi_2}_{\ASPEC} \mycirc{2}$ could be relaxed as $\mycirc{1} \xrightarrow{(\neg \pi_0 \wedge \pi_2) \vee \pi_3}_{\ASPEC} \mycirc{2}$. 
A relevant relaxation from the user perspective should be removing either of the constraints $\neg \pi_0$ or $\pi_2$ rather than introducing a new requirement $\pi_3$. 
Introducing $\pi_3$ may implicitly relaxe both constraints $\neg \pi_0$ and $\pi_2$ in certain contexts.
However, our goal in this paper  is to find a minimal relaxation.
}

\yhl{In Sec. \ref{sec:prelim}, we indicated that a transition relation could be compactly represented using a proposition formula $\Phi_{\BUCHI}(s_1,s_2) = \bigvee_{i \in D_{s_1s_2}} \bigwedge_{j \in C^i_{s_1s_2}} \psi_{ij}$ in DNF. 
Given a pair of states $(s_1,s_2)$ and a set of indices $\hat D_{s_1s_2}$, $\hat C^i_{s_1s_2}$, we define a {\it substitution} $\theta$ as 
\[\theta(s_1,s_2,\psi_{ij}) = \left \{
\begin{array}{ll}
\top  & \text{ if } i \in \hat D_{s_1s_2} \text{ and } j \in \hat C^i_{s_1s_2}\\
\psi_{ij}  & \text{ otherwise } \\
\end{array}
\right.
\]
That is, a substitution only relaxes the constraints on the possible transitions between two automaton states.
}
 
\begin{defn}[Relaxation]
Let $\BUCHI_1 = (S_{\BUCHI_1}$, $s_0^{\BUCHI_1}$, $\Omega$, $\rightarrow_{\BUCHI_1}$, $F_{\BUCHI_1})$ and $\BUCHI_2 = (S_{\BUCHI_2}, s_{0}^{\BUCHI_2}, \Omega, \rightarrow_{\BUCHI_2}, F_{\BUCHI_2})$  be two B{\"u}chi automata. 
Then, we say that $\BUCHI_2$ is a relaxation of $\BUCHI_1$ and we write $\BUCHI_1 \preceq \BUCHI_2$ if and only if (1) $S_{\BUCHI_1} = S_{\BUCHI_2} = S$, (2) $s_{0}^{\BUCHI_1} = s_{0}^{\BUCHI_2}$, (3) $F_{\BUCHI_1} = F_{\BUCHI_2}$ and \yhl{ (4) there exists a substitution $\theta$ such that for all $(s,s') \in S\times S$, we have  $\Phi_{\BUCHI_2}(s,s') \equiv \bigvee_{i \in D_{ss'}} \bigwedge_{j \in C^i_{ss'}} \theta(s,s',\psi_{ij})$ when $\Phi_{\BUCHI_1}(s,s') = \bigvee_{i \in D_{ss'}} \bigwedge_{j \in C^i_{ss'}} \psi_{ij}$.}
\label{def:valid_rel_B}
\end{defn}

\ifthenelse {\boolean{BGRAPHPDF}}
{
\begin{figure}[t]
\centering
\includegraphics[width=8cm]{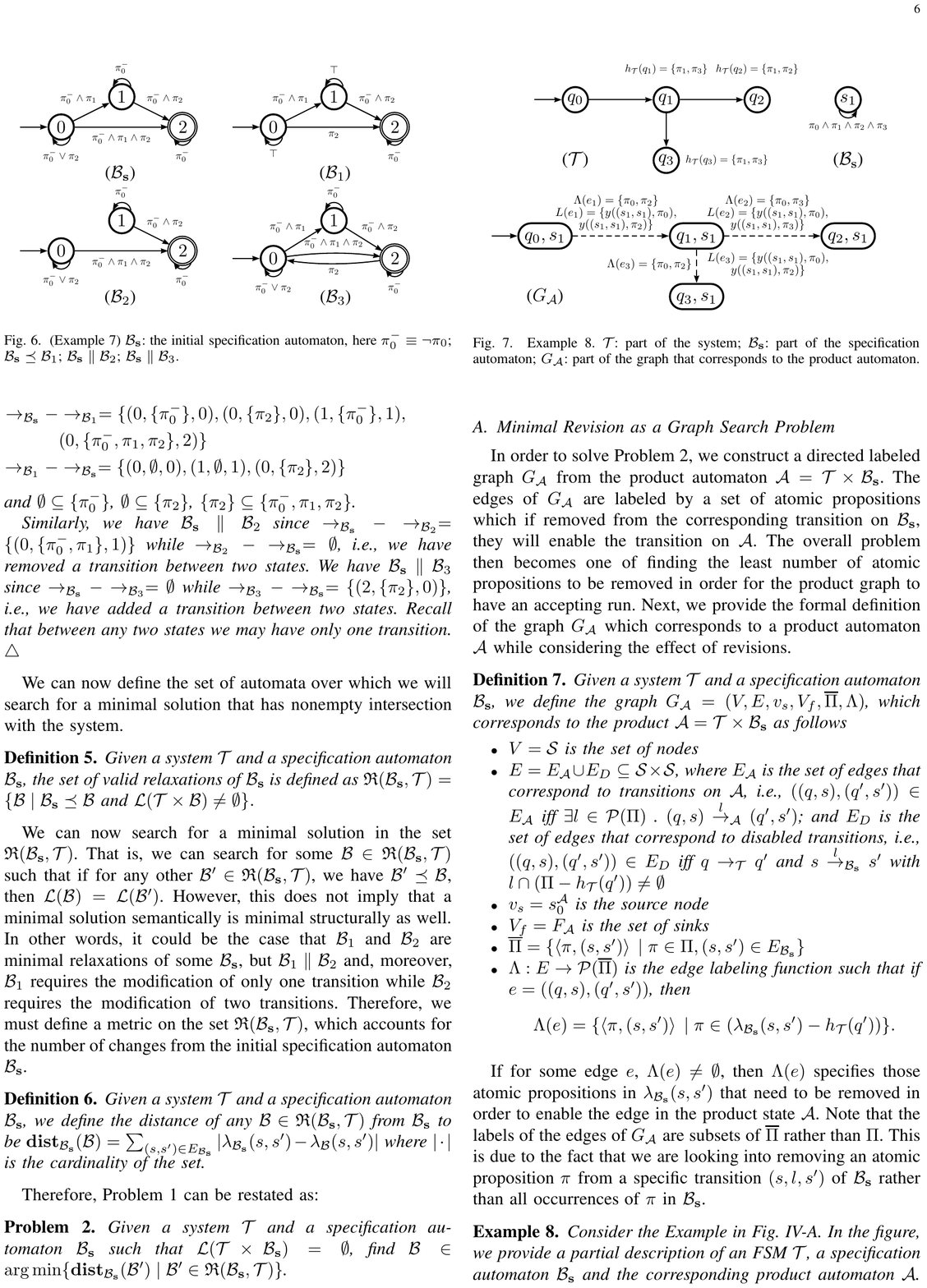}
\caption{Example \ref{exmp:valid_rel_B}. $\ASPEC$: the initial specification automaton,
here $\pi_0^- \equiv \neg \pi_0$; 
$\ASPEC \preceq \BUCHI_1$; 
$\ASPEC \parallel \BUCHI_2$; 
$\ASPEC \parallel \BUCHI_3$. 
}
\label{fig:valid_rel_B}
\end{figure}
}
{
\begin{figure}
\begin{center}
\VCDraw{%
\begin{VCPicture}{(-1,-1)(10,2.5)}
\State[0]{(-1,0)}{A} \State[1]{(1,1)}{B} \FinalState[2]{(3,0)}{C}
\Initial{A}  
\ncline{->}{A}{B} \naput{${\pi}_0^- \wedge \pi_1$}
\ncline{->}{A}{C} \nbput{${\pi}_0^- \wedge \pi_1 \wedge \pi_2$}
\ncline{->}{B}{C} \naput{${\pi}_0^- \wedge \pi_2$}
\nccircle{<-}{A}{-0.3cm}\naput{${\pi}_0^- \vee {\pi}_2$}
\nccircle{->}{B}{0.3cm}\nbput{${\pi}_0^-$}
\nccircle{<-}{C}{-0.3cm}\naput{${\pi}_0^-$}
\State[0]{(6,0)}{0} \State[1]{(8,1)}{1} \FinalState[2]{(10,0)}{2}  
\Initial{0}  
\ncline{->}{0}{1} \naput{${\pi}_0^- \wedge \pi_1$} 
\ncline{->}{1}{2} \naput{${\pi}_0^- \wedge \pi_2$} 
\ncline{->}{0}{2} \nbput{$\pi_2$}
\nccircle{<-}{0}{-0.3cm}\naput{$\top$}
\nccircle{->}{1}{0.3cm}\nbput{$\top$}
\nccircle{<-}{2}{-0.3cm}\naput{${\pi}_0^-$}
\end{VCPicture}}\\
\vspace{5pt}
($\ASPEC$) \qquad \qquad \qquad \qquad \qquad ($\BUCHI_1$) \\
\vspace{5pt}
\VCDraw{%
\begin{VCPicture}{(-1,-1)(10,2)}
\State[0]{(-1,0)}{A} \State[1]{(1,1)}{B} \FinalState[2]{(3,0)}{C}
\Initial{A}  
\ncline{->}{A}{C} \nbput{${\pi}_0^- \wedge \pi_1 \wedge \pi_2$}
\ncline{->}{B}{C} \naput{${\pi}_0^- \wedge \pi_2$}
\nccircle{<-}{A}{-0.3cm}\naput{${\pi}_0^- \vee {\pi}_2$}
\nccircle{->}{B}{0.3cm}\nbput{${\pi}_0^-$}
\nccircle{<-}{C}{-0.3cm}\naput{${\pi}_0^-$}
\State[0]{(6,-0.25)}{0} \State[1]{(8,1)}{1} \FinalState[2]{(10,-0.25)}{2}  
\Initial{0}  
\ncline{->}{0}{1} \naput{${\pi}_0^- \wedge \pi_1$} 
\ncline{->}{1}{2} \naput{${\pi}_0^- \wedge \pi_2$} 
\ncarc{->}{0}{2} \naput{$\pi_0^- \wedge \pi_1 \wedge \pi_2$}
\ncarc{->}{2}{0} \naput{$\pi_2$}
\nccircle{<-}{0}{-0.3cm}\naput{$\pi_0^- \vee {\pi}_2$}
\nccircle{->}{1}{0.3cm}\nbput{$\pi_0^-$}
\nccircle{<-}{2}{-0.3cm}\naput{${\pi}_0^-$}
\end{VCPicture}} \\
\vspace{5pt}
($\BUCHI_2$) \qquad \qquad \qquad \qquad \qquad ($\BUCHI_3$) \\
\end{center}
\caption{(Example \ref{exmp:valid_rel_B}) $\ASPEC$: the initial specification automaton,
here $\pi_0^- \equiv \neg \pi_0$; 
$\ASPEC \preceq \BUCHI_1$; 
$\ASPEC \parallel \BUCHI_2$; 
$\ASPEC \parallel \BUCHI_3$. 
}
\label{fig:valid_rel_B}
\end{figure}
}

\yhl{In Def. \ref{def:valid_rel_B}, when defining $\Phi_{\BUCHI_2}(s,s')$, we use the equivalence relation $\equiv$ rather then equality $=$ in order to highlight that in the resulting DNF formula no constant $\top$ appears in a subformula. 
The only case where $\top$ can appear is when $\Phi_{\BUCHI_2}(s,s') \equiv \top$.}
We remark that if $\BUCHI_1 \preceq \BUCHI_2$, then $\Lc(\BUCHI_1) \subseteq \Lc(\BUCHI_2)$ since the relaxed automaton allows more behaviors to occur%
\footnote{ \yhl{ Note that $\BUCHI_1 \preceq \BUCHI_2$ implies that $\BUCHI_2$ simulates $\BUCHI_1$ under the usual notion of simulation relation \cite{ClarkeGP99}.
However, clearly, if $\BUCHI_2$ simulates $\BUCHI_1$, then we cannot infer that $\BUCHI_2$ is a relaxation of $\BUCHI_1$ as defined in Def. \ref{def:valid_rel_B}.}}%
.
If two automata $\BUCHI_1$ and $\BUCHI_2$ cannot be compared under relation $\preceq$, then we write $\BUCHI_1 \parallel \BUCHI_2$.
The intuition behind the ordering relation in Def. \ref{def:valid_rel_B} is better explained by an example.

\begin{exmp}[Continuing from Example \ref{exmp:relax:simple}]
Consider the specification automaton $\ASPEC$ and the automata $\BUCHI_1$-$\BUCHI_3$ in Fig. \ref{fig:valid_rel_B}.
Definition \ref{def:valid_rel_B} specifies that the two automata must have transitions between exactly the same states%
\footnote{ To keep the presentation simple, we do not extend the definition of the ordering relation to isomorphic automata. 
Also, this is not required in our technical results since we are actually going to construct automata which are relaxations of a specification automaton.
The same holds for bisimilar automata (e.g., \citealt{Park1981cai}.)}%
.
\yhl{Moreover, if the propositional formula that labels a transition between the same pair of states on the two automata differs, then the propositional formula on the relaxed automaton must be derived by the corresponding label of the original automaton by removing literals. 
The latter means that we have relaxed the constraints that permit a transition on the specification automaton.}


\yhl{By visual inspection of $\ASPEC$ and $\BUCHI_1$ in Fig. \ref{fig:valid_rel_B}, we see that $\ASPEC \preceq \BUCHI_1$.
For example, the transition $\mycirc{0} \xrightarrow{\pi_2}_{\BUCHI_1} \mycirc{2}$ is derived from $\mycirc{0} \xrightarrow{\neg \pi_0 \wedge \pi_1 \wedge \pi_2}_{\ASPEC} \mycirc{2}$ by replacing the literals $\neg \pi_0$ and $\pi_1$ with $\top$.
Similarly, we have $\ASPEC \parallel \BUCHI_2$ since $\Phi_{\ASPEC}(0,1) = \neg \pi_0 \wedge \pi_1$ and $\Phi_{\BUCHI_2}(0,1) = \bot$, i.e., we have removed a transition between two states.
We also have $\ASPEC \parallel \BUCHI_3$ since $\Phi_{\ASPEC}(2,0) = \bot$ and $\Phi_{\BUCHI_3}(2,0) = \pi_2$, i.e., we have added a transition between two states.}
\exmend
\label{exmp:valid_rel_B}
\end{exmp}

\yhl{We remark that we restrict the space of relaxed specification automata to all automata that have the same number of states, the same initial state, and the same set of final states for computational reasons.
Namely, under these restrictions, we can convert the specification revision problem into a graph search problem.
Otherwise, the graph would have to me mutated by adding and/or removing states.}

We can now define the set of automata over which we will search for a minimal solution that has nonempty intersection with the system.

\begin{defn}
Given a system $\FTS$ and a specification automaton $\ASPEC$, the set of {\it valid relaxations} of $\ASPEC$ is defined as 
$\Rr(\ASPEC,\FTS) = \{ \BUCHI \; | \; \ASPEC \preceq \BUCHI \mbox{ and } \Lc(\FTS \times \BUCHI) \neq \emptyset \}.$
\end{defn}

We can now search for a minimal solution in the set $\Rr(\ASPEC,\FTS)$.
That is, we can search for some $\BUCHI \in \Rr(\ASPEC,\FTS)$ such that if for any other $\BUCHI' \in \Rr(\ASPEC,\FTS)$, we have $\BUCHI' \preceq \BUCHI$, then $\Lc(\BUCHI) = \Lc(\BUCHI')$.
However, this does not imply that a minimal solution semantically is minimal structurally as well.
In other words, it could be the case that $\BUCHI_1$ and $\BUCHI_2$ are minimal relaxations of some $\ASPEC$, but $\BUCHI_1 \parallel \BUCHI_2$ and, moreover, $\BUCHI_1$ requires the modification of only one transition while $\BUCHI_2$ requires the modification of two transitions. 
Therefore, we must define a metric on the set $\Rr(\ASPEC,\FTS)$, which accounts for the number of changes from the initial specification automaton $\ASPEC$. 


\yhl{
\begin{defn} (Distance)
Given a system $\FTS$ and a specification automaton $\ASPEC$, we define the distance of any $\BUCHI \in \Rr(\ASPEC,\FTS)$ that results from $\ASPEC$ under substitution $\theta$ to be $\mathbf{dist}_{\ASPEC}(\BUCHI)$ = $\sum_{(s,s')\in E_{\ASPEC}} \sum_{i \in \hat D_{ss'}} | \hat C^i_{ss'} |$ where $| \cdot |$ is the cardinality of the set.
\label{defn:dist}
\end{defn}
}

\yhl{We remark that given two relaxations $\BUCHI_1$ and $\BUCHI_2$ of some $\ASPEC$ where $\BUCHI_1 \preceq \BUCHI_2$, but $\BUCHI_2 \npreceq \BUCHI_1$, then $\mathbf{dist}_{\ASPEC}(\BUCHI_1) \leq \mathbf{dist}_{\ASPEC}(\BUCHI_2)$.
}

Therefore, Problem \ref{prb:main} can be restated as:

\begin{prob}
Given a system $\FTS$ and a specification automaton $\ASPEC$ such that $\Lc(\FTS \times \ASPEC) = \emptyset$, find $\BUCHI \in \arg\min\{ \mathbf{dist}_{\ASPEC}(\BUCHI') \; | \; \BUCHI' \in \Rr(\ASPEC,\FTS) \}$.
\label{prb:formal}
\end{prob}

\subsection{Minimal Revision as a Graph Search Problem}

In order to solve Problem~\ref{prb:formal}, we construct a directed labeled graph $G_\Ac$ from the product automaton $\Ac = \FTS \times \ASPEC$.
The edges of $G_\Ac$ are labeled by a set of atomic propositions which if removed from the corresponding transition on $\ASPEC$, they will enable the transition on $\Ac$.
The overall problem then becomes one of finding the least number of atomic propositions to be removed in order for the product graph to have an accepting run. 
Next, we provide the formal definition of the graph $G_{\Ac}$ which
corresponds to a product automaton $\Ac$ while considering the effect
of revisions.

\yhl{
To formally define the graph search problem, we will need some additional notation.
We first create two new sets of symbols from the set of atomic propositions $\Pi$:
\begin{itemize}
\item $\Pi^- = \{\pi^-\;|\; \pi \in \Pi\}$.
\item $\widetilde{\Pi} = \{l^+ \cup l^-\;|\;l^+ \subseteq \Pi \text{ and } l^- \subseteq \xi_S(\Pi \setminus l^+)\}$.
\end{itemize}
Given a transition between two states $s_1$ and $s_2$ of some $\BUCHI$ and a formula in DNF on the transition, we denote: 
\begin{itemize}
\item $\xi_l(\psi)$ = $\pi$ if $\psi = \pi$, and $\pi^-$ if $\psi = \neg\pi$.
\item $\xi_S(l) = \{\pi^- \;| \; \pi \in l\}$ where $l \subseteq \Pi$.
\end{itemize}
Now, we can introduce the following notation. 
We define
\begin{itemize}
\item the set $E_\BUCHI \subseteq S_\BUCHI^2$, such that $(s,s') \in E_\BUCHI$ iff $\exists l \in \mathcal{P}(\Pi)$ , $s \stackrel{l}{\rightarrow}_\BUCHI s'$; and, 
\item the function $\lambda_{\BUCHI}(s_1,s_2) = \{\{\xi_l(\psi_{ij})\;|\;j \in C^i_{s_1s_2}\}\;|\; i \in D_{s_1s_2} \}$ as a transition function that maps a pair of states to the set of symbols that represent conjunctive clause.
\end{itemize}
}

\yhl{
That is, if $(s,s') \in E_\BUCHI$, then $\exists l \in \Pc(\Pi)$. $s \stackrel{l}{\rightarrow}_\BUCHI s'$; and if $(s,s') \not \in E_{\BUCHI}$, then $\lambda_{\BUCHI}(s,s') = \emptyset$.
Also, if $\lambda_{\BUCHI}(s,s') \neq \emptyset$, then $\forall l \in \lambda_{\BUCHI}(s,s') $ . $l \subseteq \Pc(\widetilde{\Pi})$. 
}

\begin{exmp}
\label{exmp:new:notation:1}
\yhl{
Consider a set $\Pi = \{\pi_0, \pi_1\}$. Then, $\widetilde{\Pi} = \{\emptyset, \{\pi_0\}, \{\pi^-_0\}, \{\pi_1\}, \{\pi^-_1\}, \{\pi_0, \pi^-_1\}, \{\pi^-_0,\pi_1\}, \Pi, \Pi^-\}$.
Given a transition $(s_1,s_2)$ of $\BUCHI$ and same $\Pi$, consider $\Phi_\BUCHI(s_1,s_2) = \neg\pi_0 \vee \pi_1$ on that transition. Then, $\lambda_{\BUCHI}(s_1,s_2) = \{\{\pi^-_0\},\{\pi_1\}\}$. If $\Phi_\BUCHI(s_1,s_2) = \neg\pi_0 \wedge \pi_1$, then $\lambda_{\BUCHI}(s_1,s_2) = \{\{\pi^-_0,\pi_1\}\}$.
}
\end{exmp}

\begin{defn}
  Given a system $\FTS$ and a specification automaton $\ASPEC$, we
  define the graph $G_{\Ac} = (V,E,v_s,V_f,\APRem,\Lambda_S)$, which corresponds to
  the product $\Ac = \FTS \times \ASPEC$ as follows
\begin{itemize}
\item $V = \Sc$ is the set of nodes;
\item $E = E_\Ac \cup E_D \subseteq \Sc \times \Sc$, where $E_\Ac$ is
  the set of edges that correspond to transitions on $\Ac$, i.e.,
  $((q,s),(q',s')) \in E_\Ac$ iff $\exists l \in \Pc(\Pi)$ . $(q,s)
  \stackrel{l}{\rightarrow}_\Ac (q',s')$; and $E_D$ is the set of
  edges that correspond to disabled transitions, i.e.,
  $((q,s),(q',s')) \in E_D$ iff $q \rightarrow_\FTS q'$ and \yhl{$(s,s') \in E_{\ASPEC}$, but there does not exist $(s,l,s') \in \rightarrow_{\ASPEC}$ such that $l = h_\FTS(q'))$};
\item $v_s = s_0^{\Ac}$ is the source node;
\item $V_f = F_\Ac$ is the set of sinks;
\item $\APRem = \{ \tupleof{\pi,(s,s')} \; | \; \pi \in \text{ \yhl{ $\widetilde{\Pi}$ } }, (s,s') \in E_{\ASPEC} \}$
\item \yhl{$\Lambda_S : E \rightarrow \Pc(\APRem)$} is the edge labeling function such that if $e = ((q,s),
(q',s'))$, then
\[
\text{\yhl{$\Lambda_S(e) = \{ \tupleof{l',(s,s')} \; | \;  l \in \lambda_{\ASPEC}(s, s') \text{, } l^+ = (l \cap \Pi) \setminus h_\FTS(q') \text{, } l^- = (l \cap \Pi^-) \setminus \xi_S(\Pi \setminus h_\FTS(q')) \text{ and } l' = l^+ \cup l^-\}$.}}
\]
\end{itemize}
\label{def:graph}
\end{defn}

\begin{exmp}[Continuing Example \ref{exmp:new:notation:1}]
\yhl{
We will derive $\Lambda_S(e)$ for an edge $e=((q,s_1),(q',s_2))$. 
Assume $\Phi_{\BUCHI}(s_1,s_2) = \neg\pi_0 \vee \pi_1$, then $\lambda_{\BUCHI}(s_1,s_2) = \{\{\pi^-_0\},\{\pi_1\}\}$. 
If $h_\FTS(q') = \{\pi_0\}$, then for $l = \{\pi^-_0\}$, $l^+ = \emptyset$, $l^- = \{\pi^-_0\}$ and $l' = \{\pi^-_0\}$, and for $l = \{\pi_1\}$, $l^+ = \{\pi_1\}$, $l^- = \emptyset$ and $l' = \{\pi_1\}$.
Thus, $\Lambda_S(e) = \{\tupleof{\{\pi^-_0\}, (s_1,s_2)}, \tupleof{\{\pi_1\},(s_1,s_2)}\}$.
}

\yhl{
Now assume $\Phi(s_1,s_2) = \neg\pi_0 \wedge \pi_1$, then $\lambda_{\BUCHI}(s_1,s_2) = \{\{\pi^-_0,\pi_1\}\}$.
If $h_\FTS(q') = \{\pi_2\}$, then $l = \{\pi^-_0, \pi_1\}$, $l^+ = \{\pi_1\} \setminus \{\pi_2\} = \{\pi_1\}$, $l^- = \{\pi^-_0\} \setminus \{\pi^-_0, \pi^-_1\} = \emptyset$, and $l' = \{\pi_1\}$.
Thus, $\Lambda_S(e) = \{\tupleof{\{\pi_1\}, (s_1,s_2)} \}$.
} 
\end{exmp}

\yhl{
In order to determine which atomic propositions we must remove from a transition of the specification automaton, we need to make sure that we can uniquely identify them.
Recall that $\Lambda_S$ returns a set, e.g., $\Lambda_S(e) = \{\tupleof{\{\pi^-_0,\pi_1\}, (s,s')}, \tupleof{\{\pi_1\},(s,s')}\}$.
Saying that we need to remove $\pi_1$ from the label of $(s,s')$, it may not be clear which element of the set $\Lambda_S$ $\pi_1$ refers to.
This affects both the theoretical connection of the graph $G_{\Ac}$ as a tool for solving Problem \ref{prb:formal} and the practical implementation of any graph search algorithm (e.g., see line \ref{alg:aamrp:sub:relax} of Alg. \ref{alg:aamrp:sub} in Sec. \ref{sec:heuristic}).
}

\yhl{
Thus, in the following, we assume that $G_{\Ac}$ uses a function $\Lambda : E \rightarrow \APRem$ instead of $\Lambda_S$.
Now, $\Lambda$ maps each edge $e$ to just a single tuple $\tupleof{l,(s,s')}$ instead of a set as in Def. \ref{def:graph}.
This can be easily achieved by adding some dummy states in the graph with incoming edges labeled by the tuples in the original set $\Lambda_S(e)$. 
We can easily convert the original graph $G_\Ac$ to the modified one. 
First, for each edge $e = ((q,s),(q',s')) \in E_D$, we add $|\Lambda_S((q,s),(q',s'))|$ new nodes $\tilde{v}^e_i$, $i = 1, \ldots, |\Lambda_S((q,s),(q',s'))|$. 
Second, we add the edge $((q,s),\tilde{v}^e_i)$ to $E$ and set each label $\Lambda((q,s),\tilde{v}^e_i)$ with exactly one $\tupleof{l,(s,s')} \in \Lambda_S((q,s),(q',s'))$.
Then, we add the edges $(\tilde{v}^e_i,(q',s'))$ to $E$ and set $\Lambda(\tilde{v}^e_i,(q',s'))$ $=$ $\tupleof{\emptyset,(s,s')}$. 
Finally, we repeat until all the edges are labeled by tuples rather than sets.
}

\yhl{
We remark that every time we add a new node $v^e_i$ for an edge that corresponds to the same transition $(s,s')$ of the specification automaton, then we use the same index $i$ for each member of $\Lambda_S(e)$.
This is so that later we can map each edge of the modified graph $G_{\Ac}$ to the correct clause in the DNF formula of the specification automaton.
The total number of new nodes that we need to add depends on the number of disjunctions on each label of the specification automaton and the structure of the FSM.
}

Now, if for some edge $e$, $\Lambda(e) \not= \emptyset$, then $\Lambda(e)$ specifies those atomic propositions in $\lambda_{\ASPEC}(s,s')$ that need to be removed in order to enable the edge in the product state of $\Ac$.
Note that the labels of the edges of $G_\Ac$ are elements of $\APRem$ rather than subsets of $\Pi$.
This is due to the fact that we are looking into removing an atomic proposition $\pi$ from a specific transition $(s,l,s')$ of $\ASPEC$ rather than all occurrences of $\pi$ in $\ASPEC$.



\yhl{
In the following, we assume that $v_s \not \in V_f$, otherwise, we would not have to revise the specification.
Furthermore, we define $|\tupleof{l,(s,s'),i}| = |l|$, i.e., the size of a tuple $\tupleof{l,(s,s'),i} \in \widetilde{\Pi} \times E_{\ASPEC} \times \Ne$ is defined to be the size of the set $l$.
}

\begin{defn}(Path Cost)
\yhl{
For some $n>0$, let $\mu = e_1 e_2 \ldots e_n$ be a finite path on the graph $G_{\Ac}$ that consists of edges of $G_{\Ac}$.
Let 
\[ \widetilde{\Lambda}(\mu) = \left \{ \tupleof{l,(s,s'),k} \; |\; l = \cup_{j \in J} l_j, J \subseteq \{ 1\ldots n\},  \forall j \in J, \Lambda(e_j) = \tupleof{l_j,(s,s')}, k =  
\left \{
\begin{array}{ll}
i & \text{if } e_j = ((q,s),\tilde{v}^{((q,s),(q',s'))}_i) \\
0 & \text{otherwise}
\end{array}
\right .
\right \}.\]
We define the cost of the path $\mu$ to be 
\[ Cost(\mu) = \sum_{\lambda \in \widetilde{\Lambda}(\mu)} |\lambda|. \]
}
\end{defn}

\yhl{
In the above definition, $\widetilde{\Lambda}(\mu)$ collects in the same tuple all the atomic propositions that must be relaxed in the same transition of the specification automaton.
It is easy to see that given some set $\widetilde{\Lambda}(\mu)$, then we can construct a substitution $\theta_\mu$ for the corresponding relaxed specification automaton since all the required information is contained in the members of $\widetilde{\Lambda}(\mu)$.
}

\begin{exmp}
Consider the Example in Fig. \ref{fig:product}.  
In the figure, we provide a partial description of an FSM $\FTS$, a specification automaton $\ASPEC$ and the corresponding product automaton $\Ac$.
The dashed edges indicate disabled edges which are labeled by the  atomic propositions that must be removed from  the specification in order to enable the transition on the system.
\yhl{
In this example, we do not have to add any new nodes since we have only one conjunctive clause on the transition of the specification automaton.
}  

\yhl{
It is easy to see now that in order to enable the path $(q_0,s_1),(q_1,s_1),(q_3,s_1)$ on the product automaton, we need to replace $\pi_0$ and $\pi_2$ with $\top$ in $\Phi_{\ASPEC}(s_1,s_1) = {\pi}_0 \wedge {\pi}_1 \wedge \pi_2 \wedge \pi_3$.
On the graph $G_{\Ac}$, this path corresponds to $\widetilde{\Lambda}(e_1e_3) = \{\tupleof{\{\pi_0, \pi_2\},(s_1,s_1),0}\}$.
Similarly, in order to enable the path $(q_0,s_1),(q_1,s_1),(q_2,s_1)$ on the product automaton, we need to replace $\pi_0$, $\pi_2$  and $\pi_3$ with $\top$ in $\Phi_{\ASPEC}(s_1,s_1)$.
On the graph $G_{\Ac}$, this path corresponds to $\widetilde{\Lambda}(e_1e_2) = \{ \tupleof{\{\pi_0, \pi_2\, \pi_3\},(s_1,s_1),0}\}$.
} 

\yhl{
Therefore, the path defined by edges $e_1$ and $e_3$ is preferable over the path defined by edges $e_1$ and $e_2$.
In the first case, we have cost $C(e_1e_3) = 2$ which corresponds to relaxing 2 requirements, i.e., $\pi_0$ and $\pi_2$, while in the latter case, we have cost $C(e_1e_2) = 3$ which corresponds to relaxing 3 requirements, i.e., $\pi_0$, $\pi_2$  and $\pi_3$.
}\exmend
\label{exmp:simple:nota}
\end{exmp}

\ifthenelse {\boolean{BGRAPHPDF}}
{
\begin{figure}[t]
\centering

\includegraphics[width=8cm]{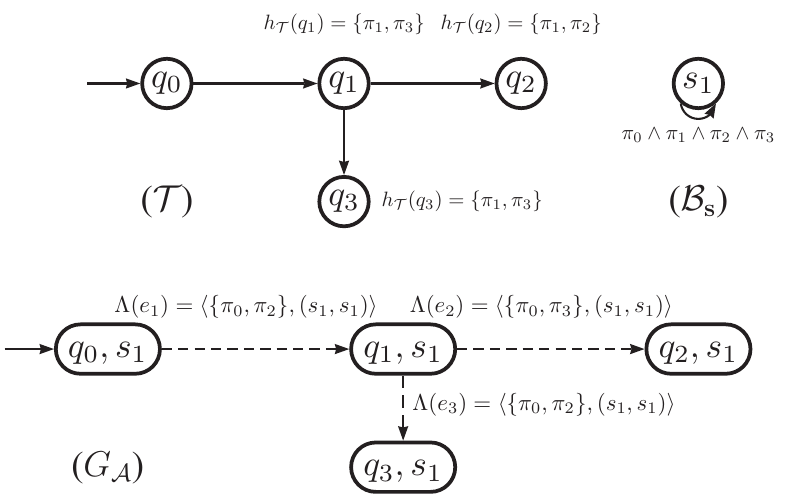}
\caption{Example \ref{exmp:simple:nota}.
$\FTS$: part of the system;
$\ASPEC$: part of the specification automaton;
$G_\Ac$: part of the graph that corresponds to the product automaton.
}

\label{fig:product}
\end{figure}
}
{
\begin{figure}
\begin{center}
\VCDraw{%
\begin{VCPicture}{(-1,-3)(7,2.6)}
\rput(-1,-1){\LARGE($\FTS$)}
\rput(8,-1){\LARGE($\ASPEC$)}
\State[q_0]{(-1,1)}{A} 
\State[q_1]{(2,1)}{B} 
\rput(2,2){$h_\FTS(q_1)=\{\pi_1,\pi_3\}$}
\State[q_2]{(5,1)}{C} 
\rput(5,2){$h_\FTS(q_2)=\{\pi_1,\pi_2\}$}
\State[q_3]{(2,-1)}{D} 
\rput(4,-1){$h_\FTS(q_3)=\{\pi_1,\pi_3\}$}
\State[s_1]{(8,1)}{E} 
\Initial{A}  
\ncline{->}{A}{B} 
\ncline{->}{B}{C} 
\ncline{->}{B}{D} 
\nccircle{->}{E}{-0.3cm}\naput{${\pi}_0 \wedge {\pi}_1 \wedge \pi_2 \wedge \pi_3$}
\end{VCPicture}}\\
\VCDraw{%
\begin{VCPicture}{(-1,-2)(9,1.5)}
\rput(-1,-1){\LARGE($G_{\Ac}$)}
\StateVar[{q_0, s_1}]{(-1,1)}{A} \StateVar[q_1, s_1]{(4,1)}{B} \StateVar[q_2, s_1]{(9,1)}{C} \StateVar[q_3, s_1]{(4,-1)}{D}
\Initial{A}
\ChgEdgeLineStyle{dashed} 
\ChgEdgeLabelScale{0.65}
\EdgeL{A}{B} {\StackTwoLabels{\Lambda(e_1) = \tupleof{\{\pi_0, \pi_2\},(s_1,s_1)}}{\;}}
\EdgeL{B}{C} {\StackTwoLabels{\Lambda(e_2) = \tupleof{\{\pi_0, \pi_3\},(s_1,s_1)}}{\;}}
\EdgeL{B}{D} {\Lambda(e_3) = \tupleof{\{\pi_0, \pi_2\},(s_1,s_1)}}  
\end{VCPicture}}
\caption{Example \ref{exmp:simple:nota}.
$\FTS$: part of the system;
$\ASPEC$: part of the specification automaton;
$G_\Ac$: part of the graph that corresponds to the product automaton.
}
\end{center}
\label{fig:product}
\end{figure}

}

\yhl{
A valid relaxation $\BUCHI$ should produce a reachable $v_f \in V_f$ with prefix and lasso path such that $\Lc(\FTS \times \BUCHI) \neq \emptyset$.
The next section provides an algorithmic solution to this problem.
}

%

\section{A Heuristic Algorithm for MRP}
\label{sec:heuristic}

In this section, we present an approximation algorithm (AAMRP) for the Minimal Revision Problem (MRP). 
It is based on Dijkstra's shortest path algorithm (\citealt{CormenLRS01}). 
The main difference from Dijkstra's algorithm is that instead of finding the minimum weight path to reach each node, AAMRP tracks the number of atomic propositions that must be removed from each edge on the paths of the graph $G_\Ac$. 

The pseudocode for the AAMRP is presented in Algorithms \ref{alg:aamrp:main} and \ref{alg:aamrp:sub}.
The main algorithm (Alg. \ref{alg:aamrp:main}) divides the problem into two tasks.
First, in Line \ref{alg:main:prefix}, it finds an approximation to the minimum number of atomic propositions from $\APRem$ that must be removed to have a prefix path to each reachable sink (see Section \ref{sec:ltlplanning}).
Then, in Line \ref{alg:main:lasso}, it repeats the process from each reachable final state to find an approximation to the minimum number of atomic propositions that must be removed so that a lasso path is enabled.
The combination of prefix/lasso that removes the minimal number of atomic propositions is returned to the user. \yhl{We remark that from line \ref{alg:main:store}, a set of atomic propositions found from prefix part is used when it starts searching for lasso path of every reachable $v_f \in \Vc \cap V_f$.}

\ifthenelse {\boolean{BGRAPHPDF}}
{
\begin{algorithm}[tb]
{\bf Inputs}: a graph $G_\Ac = (V,E,v_s,V_f,\APRem,\Lambda)$. \\
{\bf Outputs}: the list $L$ of atomic propositions form $\APRem$ that must be removed $\ASPEC$. 
\caption{AAMRP}
\label{alg:aamrp:main}
\begin{algorithmic}[1]
\Procedure{AAMRP}{$G_\Ac$}
\State $L \gets \APRem$
\State $\Mc[:,:] \gets (\APRem,\infty)$
\State $\Mc[v_s,:] \gets (\emptyset,0)$ \Comment{Initialize the source node}
\State $\tupleof{\Mc, \Pb, \Vc} \gets  \text{\sc FindMinPath}(G_\Ac,\Mc,0)$ \label{alg:main:prefix}
\State \yhl{$\text{\sc Acceptable} \gets False$}
\For {$v_f \in \Vc \cap V_f$} \label{alg:main:loop}
\State $L_p \gets \text{\sc GetAPFromPath}(v_s,v_f,\Mc,\Pb)$  \label{alg:main:lasso}
\State $\Mc'[:,:] \gets (\APRem,\infty)$
\State \yhl{$\Mc'[v_f,:] \gets (L_p,|L_p|)$} \label{alg:main:store} \Comment{Store APs from prefix path $v_s \leadsto v_f$ to $\Mc'[v_f,:]$}
\State $G_\Ac' \gets (V,E,v_f,\{v_f\},\APRem,\Lambda)$
\State $\tupleof{\Mc', \Pb', \Vc'} \gets  \text{\sc FindMinPath}(G'_\Ac,\Mc',1)$
\If {$v_f \in \Vc'$} \label{alg:main:cycled}
\State \yhl{$L' \gets \text{\sc GetAPFromPath}(v_f,v_f,\Mc',\Pb')$} \Comment{Get APs of prefix $v_s \leadsto v_f$ and lasso $v_f \leadsto v_f$ from $\Mc'[v_f,:]$}
\If {\yhl{$|L'|$} $\leq |L|$} \label{alg:main:less}
\State $L \gets L'$
\EndIf
\State \yhl{$\text{\sc Acceptable} \gets True$}
\EndIf
\EndFor
\If {\yhl{$\neg \text{\sc Acceptable}$}} \label{alg:main:unreachable}
\State $L \gets \emptyset$
\EndIf
\State \Return $L$
\EndProcedure
\end{algorithmic}
The function \text{\sc GetAPFromPath}($(v_s,v_f,\Mc,\Pb)$) returns the atomic propositions that must be removed from $\ASPEC$ in order to enable a path on $\Ac$ from a starting state $v_s$ to a final state $v_f$ given the tables $\Mc$ and $\Pb$.
\end{algorithm}
}
{
}

Algorithm \ref{alg:aamrp:sub} follows closely Dijkstra's shortest path algorithm (\citealt{CormenLRS01}). 
It maintains a list of visited nodes $\Vc$ and a table $\Mc$ indexed by the graph vertices which stores the set of atomic propositions that must be removed in order to reach a particular node on the graph.
Given a node $v$, the size of the set $|\Mc[v,1]|$ is an upper bound on the minimum number of atomic propositions that must be removed.
That is, if we remove all $\overline{\pi} \in \Mc[v,1]$ from $\ASPEC$, then we enable a simple path (i.e., with no cycles) from a starting state to the state $v$. 
The size of $|\Mc[v,1]|$ is stored in $\Mc[v,2]$ which also indicates that the node $v$ is reachable when $\Mc[v,2] < \infty$.

The algorithm works by maintaining a queue with the unvisited nodes on the graph.
Each node $v$ in the queue has as key the number of atomic propositions that must be removed so that $v$ becomes reachable on $\Ac$.
The algorithm proceeds by choosing the node with the minimum number of atomic propositions discovered so far (line \ref{alg:aamrp:sub:min}).
Then, this node is used in order to updated the estimates for the minimum number of atomic propositions needed in order to reach its neighbors (line \ref{alg:aamrp:sub:relax}).
A notable difference of Alg. \ref{alg:aamrp:sub} from Dijkstra's shortest path algorithm is the check for lasso paths in lines \ref{alg:aamrp:sub:1}-\ref{alg:aamrp:sub:end}.
After the source node is used for updating the estimates of its neighbors, its own estimate for the minimum number of atomic propositions is updated either to the value indicated by the self loop or the maximum possible number of atomic propositions.
This is required in order to compare the different paths that reach a node from itself.

\ifthenelse {\boolean{BGRAPHPDF}}
{
\begin{algorithm}[tb]
\caption{{\sc FindMinPath}}
{\bf Inputs}: a graph $G_\Ac = (V,E,v_s,V_f,\APRem,\Lambda)$, a table
$\Mc$ and a flag $lasso$ on whether this is a lasso path search. \\
{\bf Variables}: a queue $\Qc$, a set $\Vc$ of visited nodes and a
table $\mathbf{P}$ indicating the parent of each node on a path. \\
{\bf Output}: the tables $\Mc$ and $\mathbf{P}$ and the visited nodes $\Vc$
\label{alg:aamrp:sub}
\begin{algorithmic}[1]
\Procedure{{\sc FindMinPath}}{$G_\Ac$,$\Mc$,$lasso$}
\State $\Vc \gets \{v_s\}$
\State $\mathbf{P}[:]  \gets \emptyset$ \Comment{Each entry of
$\mathbf{P}$ is set to $\emptyset$}
\State $\Qc \gets V - \{v_s\}$
\For {$v \in V$ such that $(v_s,v) \in E$ and $v \neq v_s$}
\label{alg:aamrp:sub:for1}
\State $\tupleof{\Mc,\Pb} \gets \text{\sc Relax}((v_s,v),\Mc,\Pb,\Lambda)$
\EndFor
\If {$lasso=1$} \label{alg:aamrp:sub:1}
\If {$(v_s,v_s) \in E$}
\State $\Mc[v_s,1] \gets \Mc[v_s,1] \cup \Lambda(v_s,v_s)$
\State $\Mc[v_s,2] \gets |\Mc[v_s,1] \cup \Lambda(v_s,v_s)|$
\State $\Pb[v_s] = v_s$
\Else
\State $\Mc[v_s,:] \gets (\APRem,\infty)$
\EndIf
\EndIf \label{alg:aamrp:sub:end}
\While {$\Qc \neq \emptyset$} \label{alg:aamrp:sub:while}
\State $u \gets$ {\sc ExtractMIN}($\Qc$) \label{alg:aamrp:sub:min} \Comment{Get node $u$ with minimum $\Mc[u,2]$}
\If {$\Mc[u,2] < \infty$}
\State $\Vc \gets \Vc \cup \{u\}$
\For {$v \in V$ such that $(u,v) \in E$} \label{alg:aamrp:sub:for2}
\State $\tupleof{\Mc,\Pb} \gets \text{\sc
Relax}((u,v),\Mc,\Pb,\Lambda)$ \label{alg:aamrp:sub:relax}
\EndFor
\EndIf
\EndWhile
\State \Return $\Mc$, $\mathbf{P}$, $\Vc$
\EndProcedure \label{alg:aamrp:sub:endproc}
\end{algorithmic}
\end{algorithm}
}
{
}

\ifthenelse {\boolean{BGRAPHPDF}}
{
\begin{algorithm}[tb]
\caption{{\sc Relax}}
{\bf Inputs}: an edge $(u,v)$, the tables $\Mc$ and $\Pb$ and the edge labeling function $\Lambda$ \\
{\bf Output}: the tables $\Mc$ and $\mathbf{P}$
\label{alg:aamrp:relax}
\begin{algorithmic}[1]
\Procedure{{\sc Relax}}{$(u,v)$,$\Mc$,$\Pb$,$\Lambda$}
\If {$|\Mc[u,1] \cup \Lambda(u,v)| < \Mc[v,2]$}
\State $\Mc[v,1] \gets \Mc[u,1] \cup \Lambda(u,v)$
\State $\Mc[v,2] \gets |\Mc[u,1] \cup \Lambda(u,v)|$
\State $\mathbf{P}[v] \gets u$
\EndIf
\State \Return $\Mc$, $\mathbf{P}$
\EndProcedure
\end{algorithmic}
\end{algorithm}
}
{
}

\ifthenelse {\boolean{BGRAPHPDF}}
{
\begin{figure}[t]
\centering
\includegraphics[width=8cm]{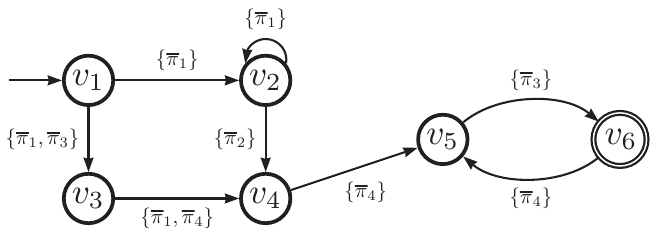}
\caption{The graph of Example \ref{exmp:heur:01}. The source $v_s = v_1$ is denoted by an arrow and the sink $v_6$ by double circle ($V_f = \{v_6\}$).}
\label{fig:exmp:heu:graph}
\end{figure}
}
{
\begin{figure}
\begin{center}
\VCDraw{%
\begin{VCPicture}{(0,-1.7)(9,2)}
\State[v_1]{(0,1)}{A} \State[v_2]{(3,1)}{B} \State[v_3]{(0,-1)}{C}
\State[v_4]{(3,-1)}{D} \State[v_5]{(6,0)}{E} \FinalState[v_6]{(9,0)}{F}
\Initial{A}  
\ncline{->}{A}{B} \naput{$\{\overline{\pi}_1\}$}
\ncline{->}{A}{C} \nbput{$\{\overline{\pi}_1,\overline{\pi}_3\}$}
\ncline{->}{C}{D} \nbput{$\{\overline{\pi}_1,\overline{\pi}_4\}$}
\ncline{->}{B}{D} \nbput{$\{\overline{\pi}_2\}$}
\ncline{->}{D}{E} \nbput{$\{\overline{\pi}_4\}$}
\ncarc[arcangle=40]{->}{E}{F} \naput{$\{\overline{\pi}_3\}$}
\ncarc[arcangle=40]{->}{F}{E} \naput{$\{\overline{\pi}_4\}$}
\nccircle{->}{B}{0.35cm}\nbput{$\{\overline{\pi}_1\}$}
\end{VCPicture}}
\end{center}
\caption{The graph of Example \ref{exmp:heur:01}. The source $v_s = v_1$ is denoted by an arrow and the sink $v_6$ by double circle ($V_f = \{v_6\}$).}
\label{fig:exmp:heu:graph}
\end{figure}
}

The following example demonstrates how the algorithm works and indicates the structural conditions on the graph that make the algorithm non-optimal.

\begin{exmp}
\label{exmp:heur:01}
Let us consider the graph in Fig. \ref{fig:exmp:heu:graph}. 
The source node of this graph is $v_s = v_1$ and the set of sink nodes is $V_f = \{v_6\}$. 
The $\APRem$ set of this graph is $\{\aprem_1,\ldots,\aprem_4\}$.
Consider the first call of {\sc FindMinPath} (line \ref{alg:main:prefix} of Alg. \ref{alg:aamrp:main}).

\begin{itemize}
\item Before the first execution of the while loop (line \ref{alg:aamrp:sub:while}): 
The queue contains $\Qc = \{v_2, \ldots, v_6 \}$.
The table $\Mc$ has the following entries: $\Mc{[v_1,:]} = \tupleof{\emptyset,0}$, $\Mc{[v_2,:]} = \tupleof{\{\aprem_1\},1}$, $\Mc{[v_3,:]} = \tupleof{\{\aprem_1, \aprem_3\},2}$, $\Mc{[v_4,:]} = \ldots = \Mc{[v_6,:]} = \tupleof{\APRem,\infty}$. 

\item Before the second execution of the while loop (line \ref{alg:aamrp:sub:while}): 
The node $v_2$ was popped from the queue since it had $\Mc{[v_2,2]} = 1$.
The queue now contains $\Qc = \{v_3, \ldots, v_6 \}$.
The table $\Mc$ has the following rows: $\Mc{[v_1,:]} = \tupleof{\emptyset,1}$, $\Mc{[v_2,:]} = \tupleof{\{\aprem_1\},1}$, $\Mc{[v_3,:]} = \tupleof{\{\aprem_1, \aprem_3\},2}$, $\Mc{[v_4]} = \tupleof{\{\aprem_1, \aprem_2\},2}$, $\Mc{[v_5,:]} = \Mc{[v_6,:]} = \tupleof{\APRem,\infty}$. 

\item At the end of {\sc FindMinPath} (line \ref{alg:aamrp:sub:endproc}): 
The queue now is empty.
The table $\Mc$ has the following rows: $\Mc{[v_1,:]} = \tupleof{\emptyset,0}$, $\Mc{[v_2,:]} = \tupleof{\{\aprem_1\},1}$, $\Mc{[v_3,:]} = \tupleof{\{\aprem_1, \aprem_3\},2}$, $\Mc{[v_4,:]} = \tupleof{\{\aprem_1, \aprem_2\},2}$, $\Mc{[v_5,:]} = \tupleof{\{\aprem_1, \aprem_2, \aprem_4\},3}$, $\Mc{[v_6,:]} = \tupleof{\APRem,4}$, which corresponds to the path $v_1$, $v_2$, $v_4$, $v_5$, $v_6$.

\end{itemize}

Note that algorithm returns a set of atomic propositions $\yhl{L'} = \APRem$ which is not optimal \yhl{$|L'|$} $= 4$.
 The path $v_1$, $v_3$, $v_4$, $v_5$, $v_6$ would return \yhl{$L'$} $= \{\aprem_1, \aprem_3, \aprem_4\}$ with \yhl{$|L'|$} $= 3$.
\exmend
\end{exmp}

{\bf Correctness:} 
The correctness of the algorithm AAMRP is based upon the fact that a node $v \in V$ is reachable on $G_\Ac$ if and only if $\Mc[v,2] < \infty$.
The argument for this claim is similar to the proof of correctness of Dijkstra's shortest path algorithm in \citealt{CormenLRS01}.
If this algorithm returns a set of atomic propositions $L$ which removed from $\ASPEC$, then the language $\Lc(\Ac)$ is non-empty. 
This is immediate by the construction of the graph $G_{\Ac}$ (Def. \ref{def:graph}).


\yhl{We remark that AAMRP does not solve Problem \ref{prb:formal} exactly since MRP is NP-Complete. 
However, AAMRP guarantees that it returns a valid relaxation $\BUCHI$ where $\ASPEC \preceq \BUCHI$.}

\yhl{
\begin{thm}
If a valid relaxation exists, then AAMRP always returns a valid relaxation $\BUCHI$ of some initial $\ASPEC$ such that $\Lc(\FTS \times \BUCHI) \neq \emptyset$.
\label{thm:valid_relaxation}
\end{thm}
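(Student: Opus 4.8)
The plan is to verify that each ingredient AAMRP produces actually certifies the three structural conditions of Definition \ref{def:valid_rel_B} together with the language non-emptiness condition. First I would argue that whenever a valid relaxation of $\ASPEC$ exists — i.e., $\Rr(\ASPEC,\FTS) \neq \emptyset$ — the graph $G_\Ac$ has a reachable sink $v_f \in V_f$ together with a lasso at $v_f$. This follows from the construction of $G_\Ac$ (Def. \ref{def:graph}): a valid relaxation $\BUCHI$ with $\Lc(\FTS \times \BUCHI) \neq \emptyset$ yields an accepting run of $\FTS \times \BUCHI$ with a prefix/lasso decomposition, and every transition used in that run is either already present in $\Ac$ (an $E_\Ac$ edge) or is a disabled transition whose label $\Lambda$ records exactly the literals removed by the substitution $\theta$ witnessing $\ASPEC \preceq \BUCHI$; hence the corresponding path exists in $G_\Ac$.

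Next I would invoke the correctness argument for {\sc FindMinPath}, which mirrors the correctness proof of Dijkstra's algorithm in \citealt{CormenLRS01}: a node $v$ is reachable in $G_\Ac$ (via a simple path from the source) if and only if $\Mc[v,2] < \infty$ at termination, and in that case $\Mc[v,1]$ is a set of tuples from $\APRem$ whose removal from $\ASPEC$ enables that path. Consequently, the first call in line \ref{alg:main:prefix} discovers every reachable $v_f \in V_f$ and populates $\Vc$ accordingly, and for each such $v_f$ the second call (with the prefix APs pre-loaded into $\Mc'[v_f,:]$ per line \ref{alg:main:store}) searches for a lasso back to $v_f$; when the lasso exists, $v_f \in \Vc'$ and line \ref{alg:main:cycled} triggers, so {\sc Acceptable} becomes $True$ and some $L$ is recorded. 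Thus, under the hypothesis that a valid relaxation exists, the algorithm does not fall through to line \ref{alg:main:unreachable}, so it returns a nonempty $L \subseteq \APRem$.

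Finally I would show that the returned $L$ determines a valid relaxation $\BUCHI$. From $L$ one reads off, for each transition $(s,s') \in E_{\ASPEC}$, the literals to be replaced by $\top$; this defines a substitution $\theta$ in the sense of Section \ref{sec:specrev}, since $L$ only ever contains symbols from $\widetilde{\Pi}$ attached to existing edges of $\ASPEC$ (note $\APRem$ is built only over $E_{\ASPEC}$, so no new transitions between new state pairs are introduced, and the state set, initial state, and final set are untouched). Hence $\ASPEC \preceq \BUCHI$ by Definition \ref{def:valid_rel_B}. Moreover, because $L$ contains at least the APs along some prefix path $v_s \leadsto v_f$ and some lasso path $v_f \leadsto v_f$ in $G_\Ac$, removing those literals enables an accepting run of $\FTS \times \BUCHI$ with a prefix/lasso form, so $\Lc(\FTS \times \BUCHI) \neq \emptyset$ — exactly as noted in the "Correctness" paragraph from the construction of $G_\Ac$.

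The main obstacle I expect is the bookkeeping in the second step: ensuring that pre-loading the prefix APs into $\Mc'[v_f,:]$ and then running {\sc FindMinPath} with the self-loop handling of lines \ref{alg:aamrp:sub:1}–\ref{alg:aamrp:sub:end} genuinely finds a lasso whenever one is reachable in $G_\Ac$, rather than being blocked by the fact that $v_f$ is both source and sink of the modified graph $G'_\Ac$. This requires carefully checking that the special-casing of $v_s = v_f$ (resetting $\Mc[v_s,:]$ to $(\APRem,\infty)$ when there is no self-loop, and otherwise seeding it with $\Lambda(v_s,v_s)$) still lets $v_f$ be re-reached through a nontrivial cycle — which is a reachability invariant analogous to, but slightly more delicate than, the standard Dijkstra argument.
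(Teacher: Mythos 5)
Your proposal is correct and follows essentially the same route as the paper's proof: you argue (in contrapositive form) that a valid relaxation forces a reachable prefix/lasso in $G_\Ac$ so the algorithm cannot return empty, and then you read off a substitution $\theta$ from the returned label set to obtain a relaxation $\BUCHI$ with $\ASPEC \preceq \BUCHI$ whose enabled prefix/lasso witnesses $\Lc(\FTS \times \BUCHI) \neq \emptyset$. Your treatment of the Dijkstra-style reachability invariant and the lasso seeding is somewhat more explicit than the paper's, which simply appeals to the correctness of Dijkstra's algorithm, but the underlying argument is the same.
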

}

\begin{proof}
\yhl{First, we will show that if AAMRP returns $\emptyset$, then there is no valid relaxation of $\ASPEC$.
AAMRP returns $\emptyset$ when there is no reachable $v_f \in V_f$ with prefix and lasso path or {\sc GetAPFromPath} returns $\emptyset$. 
If there is no reachable $v_f$, then either the accepting state is not reachable on $\ASPEC$ or on $\FTS$.
Recall that the Def. \ref{def:graph} constructs a graph where all the transitions of $\FTS$ and $\BUCHI$ are possible.
If it returns $\emptyset$ as a valid solution, then there is a path on the graph that does not utilize any labeled edge by $\Lambda$.
Thus, $\Lc(\FTS \times \ASPEC) \neq \emptyset$.
Since we assume that $\ASPEC$ is unsatisfiable on $\FTS$, this is contradiction.
}

\yhl{Second, without loss of generality, suppose that AAMRP returns $\widetilde{\Lambda}(\mu)$.
Using this $\widetilde{\Lambda}(\mu)$, we can build a relax specification automaton $\BUCHI$.
Using each $\tupleof{l,(s,s'),k} \in \widetilde{\Lambda}(\mu)$ and for each $\pi \in l$, we add the indices of the literal $\phi_{ij}$ in $\Phi_{\ASPEC(s,s')}$ that corresponds to $\pi$ to the sets $\hat D_{ss'}$ and $\hat C^i_{ss'}$.
The resulting substitution $\theta$ produces a relaxation.
Moreover, it is a valid relaxation, because by removing the atomic propositions in $\theta$ from $\ASPEC$, we get a path that satisfies the prefix and lasso components on the product automaton.
}
\end{proof}

{\bf Running time:} 
The running time analysis of the AAMRP is similar to that of Dijkstra's shortest path algorithm.
In the following, we will abuse notation when we use the $O$ notation and treat each set symbol $S$ as its cardinality $|S|$.

First, we will consider {\sc FindMinPath}.
The fundamental difference of AAMRP over Dijkstra's algorithm is that we have set theoretic operations.
We will assume that we are using a data structure for sets that supports $O(1)$ set cardinality quarries, $O(\log n)$ membership quarries and element insertions (\citealt{CormenLRS01}) and $O(n)$ set up time.
Under the assumption that $\Qc$ is implemented in such a data structure, each {\sc ExtractMIN} takes $O(\log V)$ time.
Furthermore, we have $O(V)$ such operations (actually $|V|-1$) for a total of $O(V \log V)$.

Setting up the data structure for $\Qc$ will take $O(V)$ time.
Furthermore, in the worst case, we have a set $\Lambda(e)$ for each edge $e \in E$ with set-up time $O(E \APRem)$.
Note that the initialization of $\Mc[v,:]$ to $\tupleof{\APRem,\infty}$ does not have to be implemented since we can have indicator variables indicating when a set is supposed to contain all the (known in advance) elements.

Assuming that $E$ is stored in an adjacency list, the total number of calls to {\sc Relax} at lines \ref{alg:aamrp:sub:for1} and \ref{alg:aamrp:sub:for2} of Alg. \ref{alg:aamrp:sub} will be $O(E)$ times.
Each call to {\sc Relax} will have to perform a union of two sets ($\Mc[u,1]$ and $\Lambda(u,v)$).
Assuming that both sets have in the worst case $|\APRem|$ elements, each union will take $O(\APRem \log \APRem)$ time.
Finally, each set size quarry takes $O(1)$ time and updating the keys in $\Qc$ takes $O(\log V)$ time.
Therefore, the running time of  {\sc FindMinPath} is $O(V + E \APRem + V \log V + E (\APRem \log \APRem + \log V))$.

Note that even if under Assumption \ref{ass:reach} all nodes of $\FTS$ are reachable $(|V| < |E|)$, the same property does not hold for the product automaton. (e.g, think of an environment $\FTS$ and a specification automaton whose graphs are Directed Acyclic Graphs (DAG). However, even in this case, we have $(|V| < |E|)$.
The running time of {\sc FindMinPath} is  $O(E (\APRem \log \APRem + \log V))$.
Therefore, we observe that the running time also depends on the size of the set $\APRem$.
However, such a bound is very pessimistic since not all the edges will be disabled on $\Ac$ and, moreover, most edges will not have the whole set $\APRem$ as candidates for removal.

Finally, we consider {\sc AAMRP}.
The loop at line \ref{alg:main:loop} is going to be called $O(V_f)$ times.
At each iteration, {\sc FindMinPath} is called.
Furthermore, each call to {\sc GetAPFromPath} is going to take $O(V \APRem \log \APRem)$ time (in the worst case we are going to have $|V|$ unions of sets of atomic propositions).
Therefore, the running time of {\sc AAMRP} is $O(V_f ( V \APRem \log \APRem + E (\APRem \log \APRem + \log V))) = O(V_f E (\APRem \log \APRem + \log V))$ which is polynomial in the size of the input graph.

{\bf Approximation bound:}
AAMRP does not have a constant approximation ratio on arbitrary graphs.

\ifthenelse {\boolean{BGRAPHPDF}}
{
\begin{figure}[t]
\centering
\includegraphics[width=8cm]{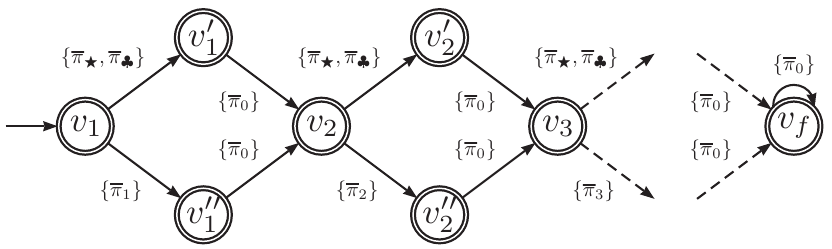}
\caption{The graph of Example \ref{exmp:heur:03}. The source $v_s = v_1$ is denoted by an arrow and the sink $v_f$ by double circle ($V_f = \{v_f\}$).}
\label{fig:exmp:heu:graph3}
\end{figure}
}
{
\begin{figure}
\begin{center}
\VCDraw{%
\begin{VCPicture}{(-1,-1.5)(12,1.5)}
\FinalState[v_1]{(0,0)}{A1}
\FinalState[v_1']{(2,1.5)}{B1}
\FinalState[v_1'']{(2,-1.5)}{C1}
\FinalState[v_2]{(4,0)}{A2}
\FinalState[v_2']{(6,1.5)}{B2}
\FinalState[v_2'']{(6,-1.5)}{C2}
\FinalState[v_3]{(8,0)}{A3}
\HideState
\State[v_m']{(10,1.5)}{Bm}
\State[v_m'']{(10,-1.5)}{Cm}
\ShowState
\FinalState[v_f]{(12,0)}{Af}
\Initial{A1}
\ncline{->}{A1}{B1} \naput{$\{\overline{\pi}_\bigstar,\overline{\pi}_\clubsuit\}$}
\ncline{->}{A1}{C1} \nbput{$\{\overline{\pi}_1\}$}
\ncline{->}{B1}{A2} \nbput{$\{\overline{\pi}_0\}$}
\ncline{->}{C1}{A2} \naput{$\{\overline{\pi}_0\}$}
\ncline{->}{A2}{B2} \naput{$\{\overline{\pi}_\bigstar,\overline{\pi}_\clubsuit\}$}
\ncline{->}{A2}{C2} \nbput{$\{\overline{\pi}_2\}$}
\ncline{->}{B2}{A3} \nbput{$\{\overline{\pi}_0\}$}
\ncline{->}{C2}{A3} \naput{$\{\overline{\pi}_0\}$}
\nccircle{<-}{Af}{0.35cm}\nbput{$\{\overline{\pi}_0\}$}
\ChgEdgeLineStyle{dashed}
\ncline{->}{A3}{Bm} \naput{$\{\overline{\pi}_\bigstar,\overline{\pi}_\clubsuit\}$}
\ncline{->}{A3}{Cm} \nbput{$\{\overline{\pi}_3\}$}
\ncline{->}{Bm}{Af} \nbput{$\{\overline{\pi}_0\}$}
\ncline{->}{Cm}{Af} \naput{$\{\overline{\pi}_0\}$}
\end{VCPicture}}
\end{center}
\caption{The graph of Example \ref{exmp:heur:03}. The source $v_s = v_1$ is denoted by an arrow and the sink $v_f$ by double circle ($V_f = \{v_f\}$).}
\label{fig:exmp:heu:graph3}
\end{figure}
}

\begin{exmp}[Unbounded Approximation]
\label{exmp:heur:03}
The graph in Fig. \ref{fig:exmp:heu:graph3} is the product of a specification automaton with a single state and a self transition with label $\{\overline{\pi}_0,\overline{\pi}_1,\ldots,\overline{\pi}_m,\overline{\pi}_\bigstar,\overline{\pi}_\clubsuit\}$ and an environment automaton with the same structure as the graph in Fig. \ref{fig:exmp:heu:graph3} but with appropriately defined state labels.
In this graph, AAMRP will choose the path $v_1$,$v_1''$, $v_2$, $v_2''$, $v_3$, $\ldots$, $v_f$. 
The corresponding revision will be the set of atomic propositions $L_p = \{\overline{\pi}_0, \overline{\pi}_1, \overline{\pi}_2, \ldots, \overline{\pi}_m\}$ with $|L_p| = m+1$.
This is because in $v_2$, AAMRP will choose the path through $v_1''$ rather than $v_1'$ since the latter will produce a revision set of size $|\{\overline{\pi}_0, \overline{\pi}_\bigstar,\overline{\pi}_\clubsuit\}| =3$ while the former a revision set of size $|\{\overline{\pi}_0, \overline{\pi}_1\}| =2$.
Similarly at the next junction node $v_3$, the two candidate revision sets $\{\overline{\pi}_0, \overline{\pi}_1, \overline{\pi}_\bigstar,\overline{\pi}_\clubsuit\}$ and $\{\overline{\pi}_0, \overline{\pi}_1, \overline{\pi}_2\}$ have sizes 4 and 3, respectively.
Therefore, the algorithm will always choose the path through the nodes $v_i''$ rather than $v_i'$ producing, thus, a solution of size $m+1$. 
However, in this graph, the optimal revision would have been $L_p = \{\overline{\pi}_0, \overline{\pi}_\bigstar,\overline{\pi}_\clubsuit\}$ with $|L_p| = 3$.
Hence, we can see that in this example for $m>2$ AAMRP returns a solution which is $m-2$ times bigger than the optimal solution.
\exmend
\end{exmp}

There is also a special case where AAMRP returns a solution whose size is at most twice the size of the optimal solution.

\begin{thm}
AAMRP on planar Directed Acyclic Graphs (DAG) where all the paths merge on the same node is a 2-approximation algorithm.
\end{thm}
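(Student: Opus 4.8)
The plan is to translate both an optimal revision and the revision returned by AAMRP into source--to--sink paths of $G_\Ac$ and charge the cost of the algorithm's path against the optimum through the common merge node. First I would unpack the hypothesis. Saying that all paths merge on the same node means there is a vertex $m$ on every $v_s \leadsto v_f$ path such that the portions of these paths strictly between $v_s$ and $m$ are pairwise internally disjoint (a shared internal vertex would be a second merge point), and the portion from $m$ to the sink $v_f$ is a single path $Q$ (two distinct continuations after $m$ would have to re-merge at a node $\neq m$, since $v_f$ is the unique sink). Planarity lets me linearly order this bundle of parallel paths $P_1,\dots,P_k$ between $v_s$ and $m$, and it is also what one invokes in the more delicate case where the bundle is only \emph{laminarly} shared rather than strictly disjoint. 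I would also observe that it suffices to analyze the prefix search: the lasso search from $v_f$ runs on a graph of the same shape (in a DAG, a self-loop at $v_f$ carrying some label set $R'$), AAMRP merely adds the already-paid prefix symbols $L_p$ to its accounting before that search (line~\ref{alg:main:store} of Algorithm~\ref{alg:aamrp:main}), which can only help, and the dummy-node expansion of Def.~\ref{def:graph} is an edge subdivision that changes none of this.

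The core step is to read off what AAMRP computes. For a vertex $u$ on some $P_j$ strictly before $m$ there is a unique $v_s \leadsto u$ path, so the relaxation step sets $\Mc[u,1]$ to exactly the running union $\widetilde\Lambda$ along that path; hence the candidates AAMRP compares when it finalizes $m$ are precisely $S_1,\dots,S_k$ with $S_j = \widetilde\Lambda(P_j)$, and one checks (internal disjointness keeps Dijkstra's finalization invariant valid up to $m$) that it commits to $\Mc[m,1] = \widehat S$ for some $\widehat S \in \{S_1,\dots,S_k\}$ of minimum cardinality. Appending $Q$ and the lasso and writing $R = \widetilde\Lambda(Q) \cup R'$, the algorithm returns a revision of size $\mathit{ALG} = |\widehat S \cup R|$. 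Every source-to-sink (prefix/lasso) path of this graph is some $P_j$ followed by $Q$ and the loop, so its revision has size $|S_j \cup R|$; let $S^\star \cup R$ realize the optimum, $\mathit{OPT} = |S^\star \cup R| = \min_j |S_j \cup R|$. Then
\[
\mathit{ALG} \;=\; |\widehat S \cup R| \;\le\; |\widehat S| + |R| \;\le\; |S^\star| + |R| \;\le\; 2\,|S^\star \cup R| \;=\; 2\,\mathit{OPT},
\]
using $|\widehat S| \le |S^\star|$ by the greedy choice at $m$ and $|S^\star|,|R| \le |S^\star \cup R|$. The same construction shows $2$ is essentially unimprovable: with $R$ disjoint from $\widehat S$, $\widehat S$ of size $|S^\star|-1$, and $S^\star \subseteq R$, the ratio tends to $2$, so AAMRP on this class admits no constant factor better than $2$.

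The step I expect to be the real work is making the structural claim airtight, in particular handling the case where the pre-$m$ paths are not literally disjoint but share subpaths in a nested, planar way. Then $\Mc[\cdot,1]$ at a shared vertex is a genuine minimum over several incoming nested subpaths, the running union AAMRP carries past that vertex may differ from the one an optimal path would carry through the same vertex, and one must argue --- using planarity (laminarity of the ``lenses'' bounded by the algorithm's path and an optimal path) together with the existence of a single global merge node --- that each symbol of $\mathit{OPT}$ is re-paid by AAMRP at most once, which is exactly the additive $|R| \le \mathit{OPT}$ overcharge in the displayed chain. This is also where Example~\ref{exmp:heur:03} is ruled out: its chain of merge nodes is precisely what lets that overcharge compound linearly, while a single merge node confines it to one extra copy of the optimum. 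Finally, one records that the returned $L$ corresponds, as in the proof of Theorem~\ref{thm:valid_relaxation}, to a bona fide valid relaxation $\BUCHI$ with $\ASPEC \preceq \BUCHI$ and $\Lc(\FTS \times \BUCHI) \neq \emptyset$, so the $2$-approximate solution is feasible for Problem~\ref{prb:formal}.
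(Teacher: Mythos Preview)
Your core argument is correct and matches the paper's: both identify a merge vertex (your $m$, the paper's $v_j$), use that AAMRP's greedy choice gives $|\widehat S| \le |S^\star|$ (the paper's Remark~2, $|\Lambda_a| \le |\Lambda_{opt}|$), and then bound $|\widehat S \cup R|$ against $|S^\star \cup R|$. The chief difference is presentation. You write the direct chain
\[
|\widehat S \cup R| \le |\widehat S| + |R| \le |S^\star| + |R| \le 2\,|S^\star \cup R|,
\]
whereas the paper assumes $2|\Lambda_{opt} \cup \Lambda| < |\Lambda_a \cup \Lambda|$ for contradiction and then exhausts four cases according to whether $\Lambda_{opt}\cap\Lambda$ and $\Lambda_a\cap\Lambda$ are empty, deriving an impossible negative quantity in each. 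Your one-line inequality subsumes all four cases and is the cleaner route; the paper's case split buys nothing extra. You also go further than the paper in two respects: you spell out what ``all paths merge on the same node'' forces structurally (the bundle $P_1,\dots,P_k$ followed by a single tail $Q$), and you sketch tightness of the factor~$2$, neither of which the paper addresses. Your caveat about the laminar case is honest --- the paper's proof is equally silent there, simply positing a single divergence/merge pair $(v_i,v_j)$ without justifying that this covers all configurations.
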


The proof is provided in the Appendix \ref{app:bound}.

\section{Examples and Numerical Experiments}
\label{sec:experiments}

In this section, we present experimental results using our prototype implementation of AAMRP. 
The prototype implementation is in Python (see \citealt{SRPT}).
Therefore, we expect the running times to substantially improve with a C implementation using state-of-the-art data structure implementations.

We first present some examples and expand few more example scenarios.

\begin{figure}[t]
\centering
\includegraphics[width=7cm]{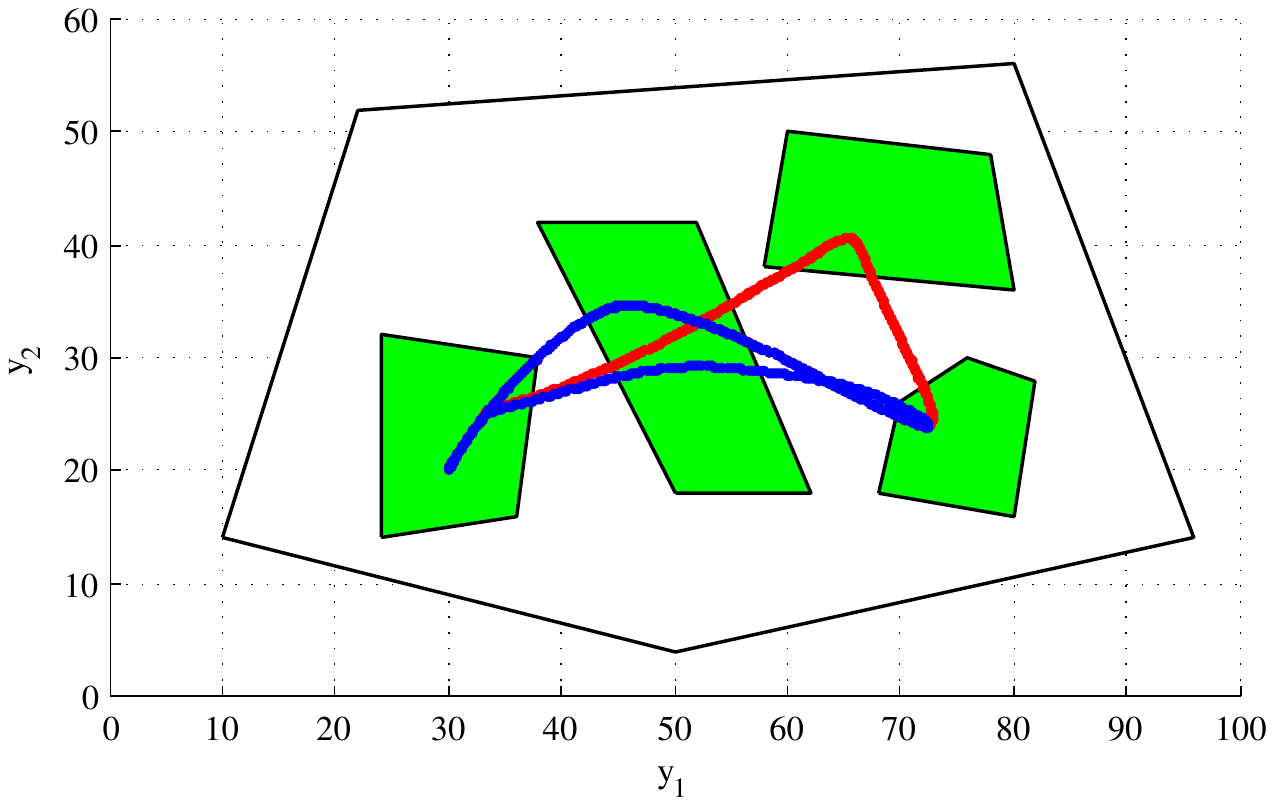}
\caption{The simple environment of Example \ref{exm:robot_motion_planning} along with a low speed mobile robot trajectory that satisfies the specification.}
\label{fig:rev:traj}
\end{figure}

\begin{exmp}
\label{exm:robot_motion_planning}
We revisit Example \ref{exm:areas4}. The product automaton of this example has 85 states, 910 transitions and 17 reachable final states.
It takes 0.095 sec by AAMRP.
AAMRP returns the set of atomic propositions $\{ \tupleof{\pi_{4}, (s_{2},s_{4})}\}$ as a minimal revision to the problem, which is revision (3) among the three minimal revisions of the example: one of the blue trajectories in Fig. \ref{fig:rev:traj}.
Thus, it is an optimal solution.
\exmend
\end{exmp}

\begin{exmp}
We revisit Example \ref{exm:twoagents}.
The graph of this example has 36 states, 240 transitions and 9 reachable sinks.
AAMRP returns the set of atomic propositions $\{ \tupleof{\pi_{13}, (s_2,s_3) }\}$ as minimal revision to the problem.
It takes 0.038 sec by AAMRP.
Intuitively, AAMRP recommends dropping the requirement that $\pi_{13}$ should be reached from the specification.
Therefore, Object 1 will remain where it is, while Object 2 will follow the path $q_1$, $q_2$, $q_3$, $q_2$, $q_3$, \ldots.
\exmend
\end{exmp}

With our prototype implementation, we could expand our experiment to few more example scenarios introduced in \citealt{UlusoyEtAl2011iros, UluosoySDB2012}.

\begin{figure}[t]
\centering
\includegraphics[width=7cm]{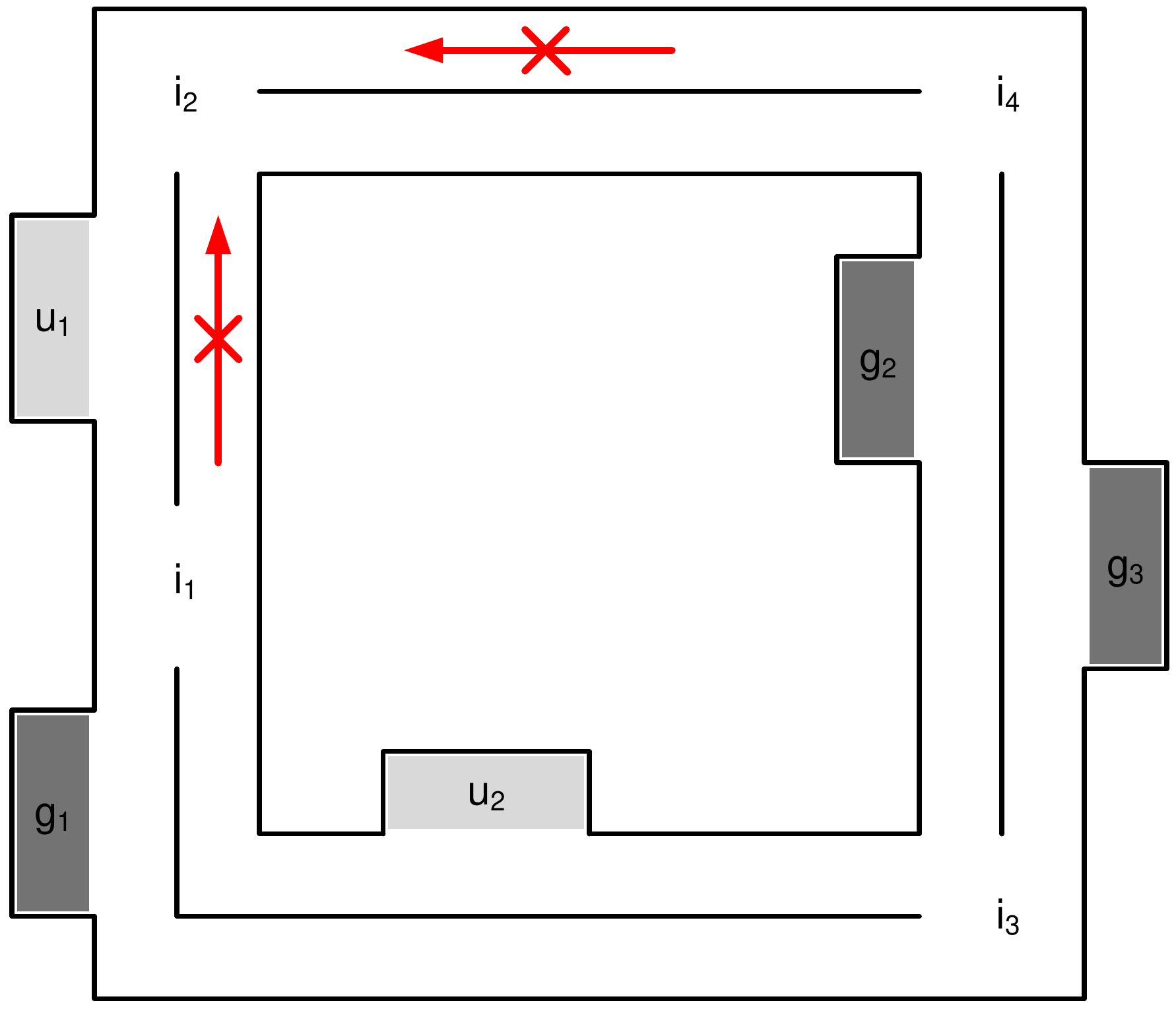}
\caption{Schematic illustration of the simple road network environment of Example \ref{exm:single_robot_gathering}. The robot is required to drive right-side of the road.}
\label{pic:single_robot_road}
\end{figure}

\begin{exmp}[Single Robot Data Gathering Task]
\label{exm:single_robot_gathering}

In this example, we use a simplified road network having three gathering locations and two upload locations with four intersections of the road. In Fig. \ref{pic:single_robot_road}, the data gather locations, which are labeled $g_1$, $g_2$, and $g_3$, are dark gray, the data upload locations, which are labeled $u_1$ and $u_2$, are light gray, and the intersections are labeled $i_1$ through $i_4$.
In order to gather data and upload the gather-data persistently, the following LTL formula may be considered:
$\phi_A$ $:=$ GF($\varphi_g$) $\wedge$ GF($\pi$), where $\varphi_g := g_1 \vee g_2 \vee g_3$ and $\pi := u_1 \vee u_2$.
The following formula can make the robot move from gather locations to upload locations after gathering data:
$\phi_G$ $:=$ G($\varphi_g \rightarrow$ X($\neg\varphi_g \Un \pi$).
In order for the robot to move to gather location after uploading, the following formula is needed:
$\phi_U$ $:=$ G($\pi \rightarrow$ X($\neg\pi \Un \varphi_g$).

Let us consider that some parts of road are not recommended to drive from gather locations, such as from $i_4$ to $i_2$ and from $i_1$ to $i_2$.
We can describe those constraints as following:
$\psi_1$ $:=$ G($g_1$ $\rightarrow$ $\neg$($i_4$ $\wedge$ X$i_2$)$\Un$$u_1$) and
$\psi_2$ $:=$ G($g_2$ $\rightarrow$ $\neg$($i_1$ $\wedge$ X$i_2$)$\Un$$u_2$).
If the gathering task should have an order such as $g_3$, $g_1$, $g_2$, $g_3$, $g_1$, $g_2$, $\ldots$, then the following formula could be considered:
$\phi_O$ := 
(($\neg$$g_1$ $\wedge$ $\neg$$g_2$)$\Un$$g_3$) $\wedge$
G($g_3$ $\rightarrow$ X(($\neg$$g_2$ $\wedge$ $\neg$$g_3$)$\Un$$g_1$)) $\wedge$
G($g_1$ $\rightarrow$ X(($\neg$$g_1$ $\wedge$ $\neg$$g_3$)$\Un$$g_2$)) $\wedge$
G($g_2$ $\rightarrow$ X(($\neg$$g_1$ $\wedge$ $\neg$$g_2$)$\Un$$g_3$)).
Now, we can informally describe the mission. The mission is ``Always gather data from g3, g1, g2 in this order and upload the collected data to $u_1$ and $u_2$. Once data gathering is finished, do not visit gather locations until the data is uploaded. Once uploading is finished, do not visit upload locations until gathering data. You should always avoid the road from $i_4$ to $i_2$ when you head to $u_1$ from $g_1$ and the road from $i_1$ to $i_2$ when you head to $u_2$ from $g_2$''. The following formula represents this mission:
\begin{center}
$\phi_{single}$ := $\phi_O \wedge \phi_G \wedge \phi_U \wedge \psi_1 \wedge \psi_2 \wedge$ GF($\pi$).
\end{center}

Assume that initially, the robot is in $i_3$ and all nodes are final nodes.
When we made a cross product with the road and the specification, we could get 36824 states, 350114 edges, and 450 final states.
Not removing some atomic propositions, the specification was not satisfiable.
AAMRP took 15 min 34.572 seconds, and suggested removing $g_3$.
Since the original specification has many $g_3$ in it, we had to trace which $g_3$ from the specification should be removed. Hence, we revised the LTL2BA (\cite{GastinO01cav}), indexing each atomic proposition on the transitions and states (see \citealt{LTL2BA_CPSLAB}).%
Two $g_3$ are mappped to the same transition on the specification automaton in   
($\neg$$g_1$ $\wedge$ $\neg$$g_2$)$\Un$$g_3$ of $\phi_O$ and in $\varphi_g := g_1 \vee g_2 \vee g_3$ in $\phi_U$.
\exmend
\end{exmp}

The last example shows somewhat different missions with multiple robots. If the robots execute the gather and upload mission, persistently, we could assume that the battery in the robots should be recharged.

\begin{figure}[t]
\centering
\includegraphics[width=7cm]{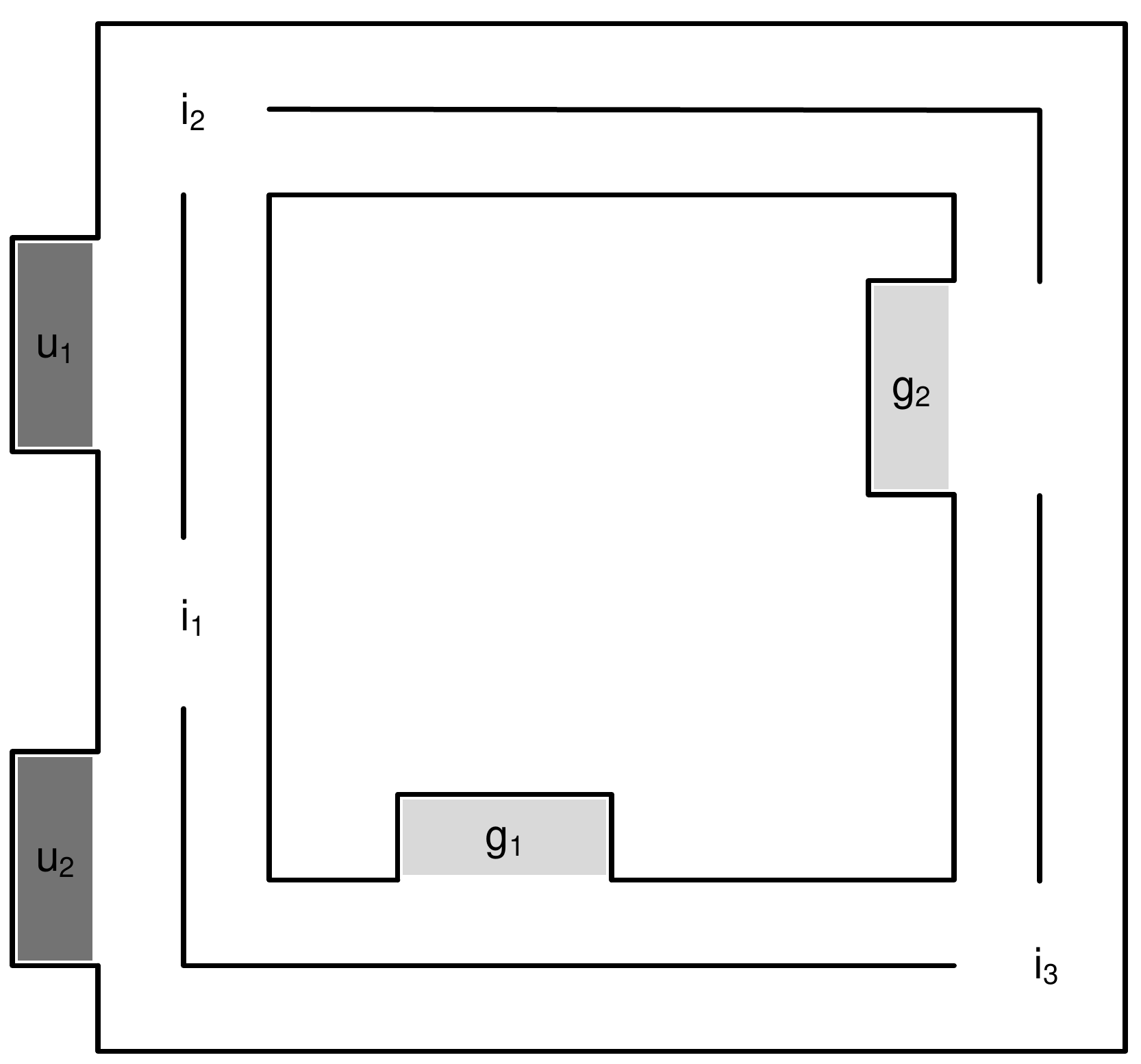}
\caption{Schematic illustration of the simple road network environment of Example \ref{exm:charging_mission}. The robots can stay upload locations $u_1$ and $u_2$ to recharge the battery.}
\label{pic:multi_robot_road}
\end{figure}

\begin{exmp}[Charging while Uploading]
\label{exm:charging_mission}
In this exaple, we assume that robots can recharge their battery in upload locations so that robots are reqired to stay at the upload locations as much as possible.
We also assume that each gathering localtion has a dedicated upload location such that $g_1$ has $u_1$ as an upload location, and $g_2$ has $u_2$ as an upload location.
For this example, we revised the road network so that we remove the gather location $g_3$ and the intersection $i_4$ to make the network simpler for this mission.
We also positioned the upload locations next to each other.
We assume that the power source is shared and it has just two charging statations (see in Fig. 12).
We can describe the mission as follows:
``Once $robot_1$ finishes gathering data at $g_1$, $robot_1$ should not visit the gather locations until the data is uploaded at $u_1$.
Once $robot_2$ fisniehs gathering data at $g_2$, $robot_2$ shoud not visit the gather locations until the data is uploaded at $u_2$.
Once the data is uploaded at $u_1$ or $u_2$, $robot_1$ or $robot_2$ should stay there until a gather locaiton is not occupied.
Persistently, gather data from $g_1$ and $g_2$, avoiding the road from $g_2$ to $i_2$.''
The following formula represents this mission: 


\begin{center}
  $\phi_{charging}$ := G($g_{11}$ $\rightarrow$ X($\neg$$g_{11}$ $\wedge$ $\neg$$g_{21}$) $\Un$$u_{11}$) $\wedge$\\
  G($g_{22}$ $\rightarrow$ X($\neg$$g_{22}$ $\wedge$ $\neg$$g_{12}$) $\Un$$u_{22}$) $\wedge$\\
  G($u_{11}$ $\rightarrow$ $u_{11}$ $\Un$ $\neg$$g_{22}$) $\wedge$\\
  G($u_{22}$ $\rightarrow$ $u_{22}$ $\Un$ $\neg$$g_{11}$) $\wedge$\\
  GF$g_{11}$ $\wedge$ GF$g_{22}$ $\wedge$\\
  G$\neg$($g_{21}$ $\wedge$ X$i_{21}$) $\wedge$\\
  G$\neg$($g_{22}$ $\wedge$ X$i_{22}$).
\end{center}

Assume that initially, $robot_1$ is in $i_1$, $robot_2$ is in $i_2$, and all nodes are final nodes.
From the cross product with the road and the specification, there was 65966 states, 253882 transitions, and 504 final nodes.
For this example, we computed a synchronized environtment for two robots, and in this environment, atomic propositions were duplicationed for each robot.
For example, a gather location $g_1$ is duplicated to $g_{11}$ for $robot_1$ and $g_{12}$ for $robot_2$.
With this synchronized environment, we could avoid robots to be colliding and to be in the same location at the same time.
However, not removing some atomic propositions, the specification was unsatisfiable.
AAMRP took 24 min 22.578 seconds, and suggested removing $u_{22}$ from $robot_2$.
The two occurances of $u_{22}$ were in G($g_{22}$ $\rightarrow$ X($\neg$$g_{22}$ $\wedge$ $\neg$$g_{12}$) $\Un$$u_{22}$) and in the second $u_{22}$ of G($u_{22}$ $\rightarrow$ $u_{22}$ $\Un$ $\neg$$g_{11}$) as indicated by our modified LTL2BA toolbox.
The suggested path from AAMRP for each robot is as followings:
\begin{center}
$path_{robot_1}$ = $i_{11}$$i_{21}$$u_{11}$$u_{11}$$i_{11}$($i_{21}$$g_{21}$$i_{31}$$g_{11}$$i_{11}$$i_{21}$$u_{11}$$u_{11}$$u_{11}$$u_{11}$$u_{11}$$u_{11}$$u_{11}$$u_{11}$$u_{11}$$i_{11}$)$^+$
\end{center}
\begin{center}
$path_{robot_2}$ = $i_{22}$$u_{12}$$i_{12}$$u_{22}$$u_{22}$($u_{22}$$u_{22}$$u_{22}$$u_{22}$$u_{22}$$u_{22}$$u_{22}$$i_{32}$$g_{12}$$i_{12}$$i_{22}$$g_{22}$$i_{32}$$g_{12}$$i_{12}$$u_{22}$)$^+$
\end{center}

\exmend
\end{exmp}

For the experiments, we utilized the ASU super computing center which consists of clusters of Dual 4-core processors, 16 GB Intel(R) Xeon(R) CPU X5355 @2.66 Ghz. 
Our implementation does not utilize the parallel architecture. 
The clusters were used to run the many different test cases in parallel on a single core. 
The operating system is CentOS release 5.5.

In order to assess the experimental approximation ratio of AAMRP, we compared the solutions returned by AAMRP with the brute-force search.
The brute-force search is guaranteed to return a minimal solution to the MRP problem.

\begin{table*}
\begin{center}
\resizebox{\textwidth}{!}{%
\begin{tabular}{|c|ccc|ccc|c|ccc|ccc|c|ccc|}
\hline
Nodes & \multicolumn{7}{c|} {BRUTE-FORCE SEARCH} & \multicolumn{7}{c|} {AAMRP} & \multicolumn{3}{c|} {RATIO} \\ 
\hline
      & \multicolumn{3}{c|} {TIMES (SEC)} & \multicolumn{3}{c|} {SOLUTIONS (SIZE)}  &       & \multicolumn{3}{c|} {TIMES (SEC)} & \multicolumn{3}{c|} {SOLUTIONS (SIZE)}  &         & \multicolumn{3}{c|} {} \\
\hline
      & min     & avg       & max      & min     & avg       & max     & succ  & min  	& avg   	& max      & min     & avg       & max     & succ    &  min  &    avg   & max   \\
\hline
9     &  0.037  &  0.104   &  1.91   & 1     &  1.97    & 5   & 200/200 &  0.022  & 0.061   & 1.17   & 1    & 1.975  & 5 & 200/200 &     1 & 1.0016 & 1.333 \\
\hline
100   &  0.069  &  510.18  & 20786 & 1     &  3.277   & 13   &  198/200 &  0.038  & 0.076 & 0.179  & 1    & 3.395  & 15 & 200/200 &     1 & 1.0006 & 1.125 \\
\hline
196   &  0.066  &  1025.44 & 25271 & 1     &  3.076   & 8   &  171/200 &  0.007  & 0.188  & 0.333  & 1    & 4.285  & 17 & 200/200 &     1 & 1        & 1 \\
\hline
324   &  0.103  &  992.68 & 25437 & 1     &  2.379   & 6   &  158/200 &  0.129  & 0.669  & 1.591  & 1    & 4.155  & 20 & 200/200 &     1 & 1        & 1.2 \\
\hline
400   &  0.087 &  1110.05 & 17685 & 1     &  2.692   & 6   &  143/200 &  0.15   &  0.669 & 1.591  & 1    & 5  & 24 & 200/200 &     1 & 1        & 1 \\
\hline
529   &  0.14  &  2153.90 & 26895 & 1     &  2.591   & 5   &  137/200 &  0.382  &  1.88    & 4.705  & 1    & 5.115  & 30 & 200/200 &     1 & 1        & 1 \\
\hline
\end{tabular}
}
\end{center}
\caption{Numerical Experiments: Number of nodes versus the results of brute-force search and AAMRP. 
Under the brute-force search and AAMRP columns the numbers indicate computation times in $\sec$.
RATIO indicates the experimentally observed approximation ratio to the optimal solution.} 
\label{experiment_result_brute} 
\end{table*}

\begin{table*}
\begin{center}
\begin{tabular}{|c|ccc|c|}
\hline
Nodes & \multicolumn{4}{c|} {AAMRP} \\ 
\hline 
      & \multicolumn{3}{c|} {TIMES} &    \\ 
\hline 
	  &    min   &    avg   &   max    & succ  \\
\hline
1024   &   0.125  &    0.23  &   0.325  & 9/10  \\
\hline
10000  &  15.723  &   76.164 & 128.471  & 9/10  \\
\hline
20164  &   50.325 &  570.737 & 1009.675 & 8/10  \\
\hline
50176  &  425.362 & 1993.449 & 4013.717 & 3/10  \\
\hline
60025  & 6734.133 & 6917.094 & 7100.055 & 2/10  \\
\hline
\end{tabular}
\end{center}
\caption{Numerical Experiments: Number of nodes versus the results of AAMRP. 
Under the TIMES columns the numbers indicate computation times in $\sec$.}
\label{experiment_result_various_only} 
\end{table*}

We performed a large number of experimental comparisons on random benchmark instances of various sizes.
We used the same instances which were presented in \citealt{KimFS12icra, KimFS12iros}.
The first experiment involved randomly generated DAGs.
Each test case consisted of two randomly generated DAGs which represented an environment and a specification. 
Both graphs have self-loops on their leaves so that a feasible lasso path can be found. 
The number of atomic propositions in each instance was equal to four times the number of nodes in each acyclic graph. 
For example, in the benchmark where the graph had 9 nodes, each DAG had 3 nodes, and the number of atomic propositions was 12. 
The final nodes are chosen randomly and they represent 5\%-40\% of the nodes. 
The number of edges in most instances were 2-3 times more than the number of nodes.

Table \ref{experiment_result_brute} compares the results of the brute-force search with the results of AAMRP on test cases of different sizes (total number of nodes).
For each graph size, we performed 200 tests and we report minimum, average and maximum computation times in second and
minimum, average and maximum numbers of atomic propositions for each instance solution.
AAMRP was able to finish the computation and returned a minimal revision for all the test cases, but brute-force search was not able to finish all the computation within a 8 hours window.

Our brute-force search checks all the combinations of atomic propositions.
For example, given $n$ atomic propositions, it checks at most $2^n$ cases.
It uses breath first search to check the reachability for the prefix and the lasso part.
If it is reachable with the chosen atomic propositions, then it is finished.
If it is not reachable, then it chooses another combination until it is reachable.
Since brute-force search checks all the combinations of atomic propositions, the success mostly depends on the time limit of the test. We remark that the brute-force search was not able to provide an answer to all the test cases within a 8 hours window.
The comparison for the approximation ratio was possible only for the test cases where brute-force search successfully completed the computation.
Note that in the case of 529 Nodes, even though the maximum RATIO is 1, the maximum solution from brute-force does not match with the maximum solution from AAMRP. One is 5 and another is 30.
This is because the number of success from brute-force search is 137 / 200 and only comparing this success with the ones from AAMRP, the maximum RATIO is still 1.

An interesting observation is that the maximum approximation ratio is experimentally determined to be less than 2.
For the randomly generated graphs that we have constructed the bound apppears to be 1.333.
However, as we showed in the example \ref{exmp:heur:03}, it is not easy to construct random examples that produce higher approximation ratios. Such example scenarios must be carefully constructed in advance.

In the second numerical experiment, we attempted to determine the problem sizes that our prototype implementation of AAMRP in Python can handle.
The results are presented in Table \ref{experiment_result_various_only}.
We observe that approximately 60,025 nodes would be the limit of the AAMRP implementation in Python.

\section{Related work}

The automatic specification revision problem for automata based planning techniques is a relatively new problem.

A related research problem is query checking \citealt{ChechikG03cav}, \citealt{GurfinkelDC02sigsoft}. In query checking, given a model of the system and a temporal logic formula $\phi$, some subformulas in $\phi$ are replaced with placeholders.
Then, the problem is to determine a set of Boolean formulas such that if these formulas are placed into the placeholders. Then, the problem is to determine a set of Boolean formulas such that if these formulas are placed into the placeholders, then $\phi$ holds on the model.
The problem of revision as defined here is substantially different from query checking.
For one, the user does not know where to position the placeholders in the formula when the planning fails.

The papers \citealt{DingZ05ismis}, \citealt{FingerW08bsai} present an also related problem.
It is the problem of revising a system model such that it satisfies a temporal logic specification.
Along the same lines, one can study the problem of maximally permissive controllers for automata specification \citealt{ThistleW94}.
Note that in this paper, we are trying to solve the opposite problem, i.e., we are trying to relax the specification such that it can be realized on the system.
The main motivation for our work is that the model of the system, i.e., the environment and the system dynamics, cannot be modified and, therefore, we need to understand what we can be achieved with the current constraints.

Finding out why a specification is not satisfiable on a model is a problem that is very related to the problems of {\it vacuity} and {\it coverage} in model checking \citealt{Kupferman2008fmcad}.
Another related problem is the detection of the causes of unrealizability in LTL games.
In this case, a number of heuristics have been developed in order to localize the error and provide meaningful information to the user for debugging \citealt{CimattiRST,KonighoferHB09fmcad}.
Along these lines, LTLMop \citealt{RamanK11cav} was developed to debug unrealizable LTL specifications in reactive planning for robotic applications.
\yhl{\citealt{RamanLFLMK13} also provided an integrated system for non-expert users to control robots for high-level, reactive tasks through natural language. This system gives the user natural language feedback when the original intention is unsatisfiable. \citealt{RamanK13} introduced an approach to analyze unrealizable robot specifications due to environment's limitation. They provide how to find the minimal unsatisfiable cores, such as deadlock and livelock, for propositional encodings, searching for some sequence of states in the environment.}

Over-Subscription Planning (OSP) \citealt{Smith04} and Partial Satisfaction Planning (PSP) \citealt{vandenBriel2004} are also very related problems.
OSP finds an appropriate subset of an over-subscribed, conjunctive goal to meet the limitation of time and energy consumption. PSP explains the planning problem where the goal is regarded as soft constraints and trying to find a good quality plan for a subset of the goals.
OSP and PSP have almost same definition, but there is also a difference.
OSP regards the resource limitations as an important factor of partial goal to be satisfied, while PSP chooses a trade-off between the total action costs and the goal utilities where handling the plan quality.


In \citealt{Gobelbecker2010}, the authors investigated situations in which a planner-based agent cannot find a solution for a given planning task. They provided a formalization of coming up with excuses for not being able to find a plan and determined the computational complexity of finding excuses. On the practical side, they presented a method that is able to find good excuses on robotic application domains.

Another related problem is the Minimum Constraint Removal Problem (MCR) \citealt{Kris12}. MCR concentrates on finding the least set of violating geometric constraints so that satisfaction in the specification can be achieved.

\yhl{In \citealt{CizeljB13}, authors introduced a related problem which is of automatic formula revision for Probabilistic Computational Tree Logic (PCTL) with noisy sensor and actuator. Their proposed approach uses some specification update rules in order to revise the specification formula until the supervisor is satisfied.}
\citealt{TumovaEtAl13acc} is closely related with our work. It takes as input a transition system, and a set of sub-specifications in LTL with each reward, and constructs a strategy maximizing the total reward of satisfiable sub-specifications. If a whole sub-specification is not feasible, then it is discarded. In our case, we try to minimize revising the sub-specification if it is infeasible.
In \citealt{KimF14icra}, we also expended our approach with quantitative preference. While revising the sub-specification, it has two approaches to get revision. Instead of finding minimum number of atomic propositions, it tries to minimize the sum of preference levels of the atomic propositions and to minimize the maximum preference level of the atomic propositions.

\section{Conclusions}
In this paper, we proved that the minimal revision problem for specification automata is NP-complete. We also provided a polynomial time approximation algorithm for the problem of minimal revision of specification automata and established its upper bound for a special case.
Furthermore, we provided examples to demonstrate that an approximation ratio cannot be established for this algorithm.

The minimal revision problem is useful when automata theoretic planning fails and the modification of the environment is not possible. 
In such cases, it is desirable that the user receives feedback from the system on what the system can actually achieve.
The challenge in proposing a new specification automaton is that the new specification should be as close as possible to the initial intent of the user.
Our proposed algorithm experimentally achieves approximation ratio very close to 1.
Furthermore, the running time of our prototype implementation is reasonable enough to be able to handle realistic scenarios.

Future research will proceed along several directions.
Since the initial specification is ultimately provided in some form of natural language, we would like the feedback that we provide to be in a natural language setting as well.
Second, we plan on developing a robust and efficient publicly available implementation of our approximation algorithm.







\section{Appendix: NP-completeness of the Minimal Connecting Edge Problem}

\newcommand\CEdge{\hat E}
\newcommand\ParCEdge{\Ec}

We will prove the Minimal Connecting Edge (MCE) problem is NP-Complete.
MCE is a slightly simpler version of the Minimal Accepting Path (MAP) problem and, thus, MAP is NP-Complete as well.

In MCE, we consider a directed graph $G = (E,V)$ with a source $s$ and a sink $t$ where there is no path from $s$ to $t$.
We also have a set of candidate edges $\CEdge$ to be added to $E$ such that the graph becomes connected and there is a path from $s$ to $t$.
Note that if the edges in $\CEdge$ have no dependencies between them, then there exists an algorithm that can solve the problem in polynomial time.
For instance, Dijkstra's algorithm \citealt{CormenLRS01} applied on the weighted directed graph $G = (V,E\cup \CEdge,w)$ where the edges in $\CEdge$ are assigned weight 1 and the edges in E are assigned weight $0$ solves the problem efficiently.

However, in MCE, the set $\CEdge$ is partitioned in a number of classes $\CEdge_1, ..., \CEdge_n$ such that if an edge $e_i$ is added from $\CEdge_i$, then all the other edges in $\CEdge_i$ are added as well to $G$.
This corresponds to the fact that if we remove a predicate from a transition in $\ASPEC$, then a number of transitions on $G_{\Ac}$ are affected.
Let us consider the $G_{\Ac}$ in Fig. \ref{fig:product} as an example.
Here, $e_0$, $e_2$ and $e_4$ correspond to $y((s_1,s_1),\pi_0)$, $e_1$ and $e_5$ to $y((s_1,s_1),\pi_2)$ and $e_3$ to $y((s_1,s_1),\pi_3)$.
Thus, $\{e_0,e_1,e_2,e_3,e_4,e_5\} \in \CEdge$ and there exist three classes $\CEdge_i$, $\CEdge_j$ and $\CEdge_j$ in the partition such that $\{e_0, e_2, e_4\} \subseteq  \CEdge_i$, $\{e_1, e_5\} \subseteq  \CEdge_j$ and $e_3 \in \CEdge_k$.

\ifthenelse {\boolean{BGRAPHPDF}}
{
\begin{figure}[h]
\centering

\includegraphics[width=8cm]{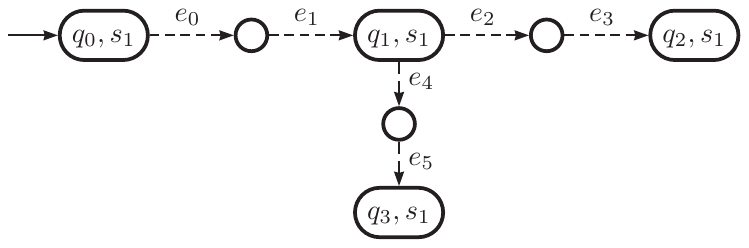}

\caption{The MCE instance that corresponds to $G_{\Ac}$ from Fig. \ref{fig:product}.
The dashed edges denote candidate edges in $\CEdge$.}
\label{fig:graph:mce}
\end{figure}

}
{
\begin{figure}[h]
\centering
\VCDraw{%
\begin{VCPicture}{(-1,-2.4)(9,1.2)}
\ChgStateLabelScale{0.8}
\StateVar[{q_0, s_1}]{(-1,1)}{A} \StateVar[q_1, s_1]{(4,1)}{B} \StateVar[q_2, s_1]{(9,1)}{C} \StateVar[q_3, s_1]{(4,-2)}{D}
\SmallState\StateVar[]{(1.5,1)}{A1} \SmallState\StateVar[]{(6.5,1)}{B1} \SmallState\StateVar[]{(4,-0.5)}{B2}
\Initial{A}
\ChgEdgeLineStyle{dashed} 
\ChgEdgeLabelScale{0.8}
\EdgeL{A}{A1} {e_0}
\EdgeL{A1}{B} {e_1}
\EdgeL{B}{B1} {e_2}
\EdgeL{B1}{C} {e_3}
\EdgeL{B}{B2} {e_4}
\EdgeL{B2}{D} {e_5}
\end{VCPicture}}
\caption{The MCE instance that corresponds to $G_{\Ac}$ from Fig. \ref{fig:product}.
The dashed edges denote candidate edges in $\CEdge$.}
\label{fig:graph:mce}
\end{figure}
}

\begin{prob}[Minimal Connecting Edge (MCE)]
\mathsc{Input:} Let $G=(V,E)$ be a directed graph with a source $s$
and a distinguished sink node $t$.  We assume that there is no path in
$G$ from $s$ to $t$. Let $\CEdge \subseteq V \times V$ be a set such that
$\CEdge \cap E = \emptyset$. We partition $\CEdge$ into $\ParCEdge =
\{\CEdge_1,\ldots,\CEdge_m\}$.  Each edge $e \in \CEdge$ has a weight $W(e) \geq 0$.

\mathsc{Output:} Given a weight limit $W$, determine if there is a
selection of edges $R \subseteq \CEdge$ such that 
\begin{enumerate}
\item there is a path from
$s$ to $t$ in the graph with all edges $E \cup R$, 
\item $\sum_{e \in
  \cup \R} W(e) \leq W$ and 
\item For each $\CEdge_i \in\ParCEdge$, if $\CEdge_i \cap R
\not=\emptyset$ then $\CEdge_i \subseteq R$.
\end{enumerate}
\end{prob}

\begin{thm}
MCE is NP-complete.
\end{thm}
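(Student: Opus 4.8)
My plan is to prove membership in NP directly and then prove NP-hardness by a polynomial reduction from the decision version of \textsc{Set Cover}, which is NP-complete. For membership, I would take the selection $R \subseteq \CEdge$ itself as the certificate: all three requirements can be checked in polynomial time, namely $s$--$t$ reachability in $(V,E\cup R)$ by one graph traversal, the weight bound $\sum_{e\in R}W(e)\le W$ by addition, and the closure property ``if $\CEdge_i\cap R\neq\emptyset$ then $\CEdge_i\subseteq R$'' by scanning each class. Since this last property forces $R$ to be a union of classes, the summed edge weight coincides with the intended objective $\sum_{e\in\cup\R}W(e)$, so MCE is in NP.

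For NP-hardness I would reduce from \textsc{Set Cover}. Given a universe $U=\{u_1,\dots,u_n\}$, sets $S_1,\dots,S_m\subseteq U$ (without loss of generality each $u_i$ lies in some $S_j$), and a bound $k$, I would build an MCE instance with a chain $s=x_0,x_1,\dots,x_n=t$ of $n$ ``gates'', one per element. For each incidence $u_i\in S_j$ I insert a fresh vertex $w_{ij}$ together with two candidate edges $(x_{i-1},w_{ij})$ and $(w_{ij},x_i)$, both of weight $0$; and for each $j$ I insert a fresh ``token'' candidate edge $(a_j,b_j)$ of weight $1$ on two fresh vertices $a_j,b_j$. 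I set $E:=\emptyset$, let $\CEdge_j=\{(x_{i-1},w_{ij}),(w_{ij},x_i)\;:\;u_i\in S_j\}\cup\{(a_j,b_j)\}$, and let $\ParCEdge=\{\CEdge_1,\dots,\CEdge_m\}$, which is a genuine partition of $\CEdge$ because the intermediate vertices are all distinct. Finally I set the weight limit $W:=k$. The construction is clearly polynomial, $G=(V,E)$ has no $s$--$t$ path since $E=\emptyset$, $\CEdge\cap E=\emptyset$, and the graph is a simple digraph.

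To verify correctness, note that by condition (3) any feasible $R$ is exactly $\bigcup_{j\in C}\CEdge_j$ for some index set $C\subseteq\{1,\dots,m\}$, and then $\sum_{e\in R}W(e)=|C|$ because the tokens $(a_j,b_j)$ are the only positively weighted edges, one per selected class. An $s$--$t$ path in $(V,R)$ must traverse every gate, i.e.\ for each $i$ it uses some pair $(x_{i-1},w_{ij}),(w_{ij},x_i)$ with $j\in C$ and $u_i\in S_j$; conversely, choosing one such $j$ per gate yields a path. Hence a path exists in $E\cup R$ if and only if $\{S_j\;:\;j\in C\}$ covers $U$. Combining these observations, the MCE instance admits a feasible selection of weight at most $W=k$ if and only if the \textsc{Set Cover} instance has a cover of size at most $k$, which together with membership in NP establishes NP-completeness.

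The step I expect to require the most care is the gadget design rather than any single calculation: the MCE objective sums weights over the \emph{edges} of the selected classes, whereas \textsc{Set Cover} counts selected \emph{sets}, so I deliberately concentrate a single unit of weight in one token edge per class while zeroing out the connectivity-carrying edges; at the same time I must keep the instance a legal simple digraph with the classes forming an honest partition of $\CEdge$, which is exactly why the construction spends a fresh vertex $w_{ij}$ on every element--set incidence and fresh endpoints $a_j,b_j$ on every token. Everything else is routine bookkeeping.
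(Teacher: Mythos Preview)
Your proof is correct and takes a genuinely different route from the paper.

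For NP-hardness the paper reduces from \textsc{3-CNF-SAT}: it builds a chain of variable gadgets followed by a chain of clause gadgets, with two edge classes $P_i,N_i$ per variable $x_i$; the weight limit is set to the number $n$ of variables so that at most one of $P_i,N_i$ can be picked per $i$, which encodes a consistent truth assignment, and the clause gadgets then check satisfaction. Your reduction from \textsc{Set Cover} is structurally simpler: a single chain of element ``gates'' with one class $\CEdge_j$ per set $S_j$, and a unit-weight token edge per class so that the summed edge weight equals the number of selected classes. The correspondence ``classes chosen $\leftrightarrow$ sets chosen'' and ``$s$--$t$ path exists $\leftrightarrow$ every element is covered'' is immediate, and your use of fresh vertices $w_{ij}$ and $a_j,b_j$ cleanly guarantees that $\ParCEdge$ is an honest partition. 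What your approach buys is economy: no consistency gadget is needed because \textsc{Set Cover} already matches the ``pick a minimum union of classes'' shape of MCE; what the paper's approach buys is the familiarity of a \textsc{3-SAT} template and an explicit illustration of how variable/clause gadgets interact with the partition condition. Both arguments are polynomial and both correctly handle the closure condition (3), so either establishes the theorem.
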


\begin{proof}
The problem is trivially in NP. Given a selection of edges from $\CEdge$,
we can indeed verify that the source and sinks are connected, the
weight limit is respected and that the selection is made up of a union
of sets from the partition.

We now claim that the problem is NP-Complete. We will reduce from
3-CNF-SAT.  Consider an instance of 3-CNF-SAT with variables
$X=\{x_1,\ldots,x_n\}$ and clauses $C_1,\ldots,C_m$. Each clause is a
disjunction of three literals.  We will construct graph $G$ and family
of edges $\ParCEdge$.  The graph $G$ has edges $E$ made up of variable and
clause ``gadgets''.

\paragraph{Variable Gadgets} For each variable $x_i$, we create $6$ nodes
$u_i$, $u_i^{t}$, $v_i^{t}$, $u_i^f$, $v_i^{f}$,  and $v_i$. The gadget is
shown in Fig.~\ref{fig:gadget:var}.  The node $u_i$ is called the
entrance to the gadget and $v_i$ is called the exit. The idea is that if the
variable is assigned true, we will take the path
\[u_i \rightarrow u_i^{t} \rightarrow v_i^{t} \rightarrow v_i\] 
to traverse through the gadget from its entrance to exit. The missing
edge $u_i^{t} \rightarrow v_i^{t}$ will be supplied by one of the edge
sets. If we assign the variable to false, we will instead traverse
\[u_i \rightarrow u_i^{f} \rightarrow v_i^{f} \rightarrow v_i\] 

Variable gadgets are connected to each other in $G$ by adding edges
from $v_1$ to $u_2$, $v_2$ to $u_3$ and so on until $v_{n-1}
\rightarrow u_n$. The node $u_1$ is the source node.

\ifthenelse {\boolean{BGRAPHPDF}}
{
\begin{figure}[ht]
\centering

\includegraphics[width=4.5cm]{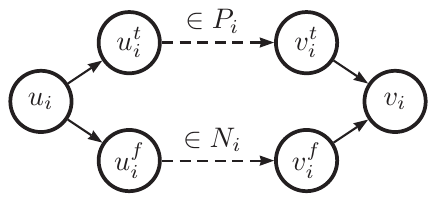}
\caption{ A single variable gadget. 
Solid edges are present in the original graph $G$ that will be constructed.  
Dashed edges $(u_i^t,v_i^t)$ or between $(u_i^f,v_i^f)$ will be supplied by one of the edge sets in $\CEdge$.}
\label{fig:gadget:var}
\end{figure}

}
{
\begin{figure}[ht]
\begin{center}
\VCDraw{%
\begin{VCPicture}{(-1,-1.2)(6,2)}
\ChgStateLabelScale{0.8}
\FixStateDiameter{1.1cm}
\State[u_i]{(-0.5,0)}{A} 
\State[u_i^t]{(1,1)}{B} \State[u_i^f]{(1,-1)}{C} 
\State[v_i^t]{(4,1)}{D} \State[v_i^f]{(4,-1)}{E} 
\State[v_i]{(5.5,0)}{F} 
\EdgeL{A}{B}{} 
\EdgeL{A}{C}{} 
\EdgeL{D}{F}{} 
\EdgeL{E}{F}{} 
\ChgEdgeLineStyle{dashed} 
\ChgEdgeLabelScale{0.8}
\EdgeL{B}{D}{\in P_i} 
\EdgeL{C}{E}{\in N_i} 
\end{VCPicture}}\\
\end{center}
\caption{ A single variable gadget. 
Solid edges are present in the original graph $G$ that will be constructed.  
Dashed edges $(u_i^t,v_i^t)$ or between $(u_i^f,v_i^f)$ will be supplied by one of the edge sets in $\CEdge$.}
\label{fig:gadget:var}
\end{figure}
}

\paragraph{Clause Gadgets} For each clause $C_j$ of the form 
$(\ell_{j1} \lor \ell_{j2} \lor \ell_{j3})$, we add a clause gadget
consisting of eight nodes: entry node $a_j$, exit node $b_j$ and nodes
$a_{j1}, b_{j1}$, $a_{j2},b_{j2}$ and $a_{j3},b_{j3}$ corresponding to
each of the three literals in the clause.  The idea is that a path
from the entry node $a_j$ to exit node $b_j$ will exist if the clause
$C_j$ will be satisfied. Figure~\ref{fig:gadget:cls} shows how the
nodes in a clause gadget are connected.

\ifthenelse {\boolean{BGRAPHPDF}}
{
\begin{figure}[ht]
\centering

\includegraphics[width=4.5cm]{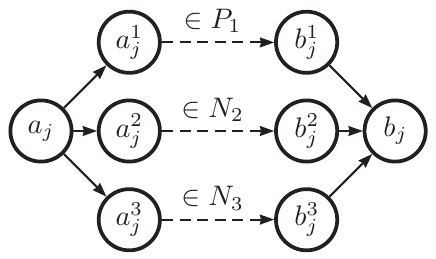}

\caption{The clause gadget for a clause with three literals. 
The clause shown here is $(x_{1}\ \lor\ \overline{x_{2}}\ \lor\ \overline{x_{3}})$. 
The corresponding missing edges will be added to the set $P_{1}, N_{2}, N_{3}$, respectively, as shown in figure. }
\label{fig:gadget:cls}
\end{figure}

}
{
\begin{figure}[ht]
\begin{center}
\VCDraw{%
\begin{VCPicture}{(-1,-1.7)(6,1.7)}
\ChgStateLabelScale{0.8}
\FixStateDiameter{1.1cm}
\State[a_j]{(-0.5,0)}{A} 
\State[a_j^1]{(1,1.5)}{B1} \State[a_j^2]{(1,0)}{B2} \State[a_j^3]{(1,-1.5)}{B3}
\State[b_j^1]{(4,1.5)}{C1} \State[b_j^2]{(4,0)}{C2} \State[b_j^3]{(4,-1.5)}{C3}
\State[b_j]{(5.5,0)}{D} 
\EdgeL{A}{B1}{} 
\EdgeL{A}{B2}{} 
\EdgeL{A}{B3}{} 
\EdgeL{C1}{D}{} 
\EdgeL{C2}{D}{} 
\EdgeL{C3}{D}{} 
\ChgEdgeLineStyle{dashed} 
\ChgEdgeLabelScale{0.8}
\EdgeL{B1}{C1}{\in P_1} 
\EdgeL{B2}{C2}{\in N_2} 
\EdgeL{B3}{C3}{\in N_3} 
\end{VCPicture}}\\
\end{center}
\caption{The clause gadget for a clause with three literals. 
The clause shown here is $(x_{1}\ \lor\ \overline{x_{2}}\ \lor\ \overline{x_{3}})$. 
The corresponding missing edges will be added to the set $P_{1}, N_{2}, N_{3}$, respectively, as shown in figure. }
\label{fig:gadget:cls}
\end{figure}
}

\paragraph{Structure}
We connect $v_n$ the exit of the last variable gadget for variable
$x_n$ to $a_1$, the entrance for first clause gadget.  The sink node
is $b_m$, the exit for the last clause
gadget. Figure~\ref{fig:gadget:all} shows the overall high level
structure of the graph $G$ with variable and clause gadgets.

\ifthenelse {\boolean{BGRAPHPDF}}
{
\begin{figure}[h]
\centering

\includegraphics[width=7cm]{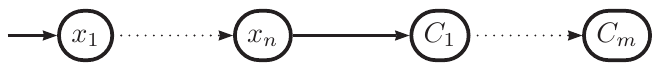}

\caption{Connection between gadgets for variables and clauses.}
\label{fig:gadget:all}
\end{figure}

}
{
\begin{figure}[h]
\centering
\VCDraw{%
\begin{VCPicture}{(0,0.1)(9,-0.1)}
\ChgStateLabelScale{0.8}
\StateVar[x_1]{(0,0)}{A} 
\StateVar[x_n]{(3,0)}{B} 
\StateVar[C_1]{(6,0)}{C} 
\StateVar[C_m]{(9,0)}{D} 
\Initial{A}
\EdgeL{B}{C} {}
\ChgEdgeLineStyle{dotted} 
\EdgeL{A}{B} {}
\EdgeL{C}{D} {}
\end{VCPicture}}
\caption{Connection between gadgets for variables and clauses.}
\label{fig:gadget:all}
\end{figure}
}

\paragraph{Edge Sets}
We design a family $\ParCEdge=\{P_1,\ldots,P_n,N_1,\ldots,N_n\}$.  The set
$P_i$ will correspond to a truth assignment of true to variable $x_i$
and $N_i$ correspond to a truth assignment of false to $x_i$.

$P_i$ has the edge  $(u_i^{t},v_i^{t})$ of weight $1$ and for each
clause $C_j$ containing the literal $x_i$, we add the missing edge
$(a_j^{i},b_j^{i})$ corresponding to this literal in the clause gadget
for $C_j$ to the set $P_i$ with weight $0$.

Similarly, $N_i$ has the edge from $(u_i^f,v_i^f)$ of weight $1$ and
for each clause $C_j$ containing the literal $\overline{x_i}$ it has
the missing edge in the clause gadget for $C_j$ with weight $0$.  We
ask if there is a way to connect the source $u_1$ with the sink $b_m$
with weight limit $\leq n$, where $n$ is the number of variables.

We verify that the sets $P_1,\ldots,P_n, N_1,\ldots,N_n$ partition the
set of missing edges.

\begin{clm}
 If there is a satisfying solution to the problem,
then $u_1$ can be connected to $b_m$ by a choice of edge sets with
total edge weight $\leq n$.
\end{clm}

\begin{proof}
 Take a satisfying solution. If it assigns true to
$x_i$, then choose all edges in $P_i$ else choose all edges $N_i$ if
it assigns false. We claim that this will connect $u_1$ to $b_m$.
First it is clear that since all variables are assigned, it will
connect $u_1$ to $v_n$ by connecting one of the two missing links in
each variable gadget. Corresponding to each clause, $C_j$ there will
be a path from $a_j$ to $b_j$ in the clause gadget for $C_j$. This is
because, at least one of the literals in the clause is satisfied and
the corresponding set $P_i$ or $N_i$ will supply the missing
edge. Furthermore, the weight of the selection will be precisely $n$,
since we add exactly one edge in each variable gadget.
\end{proof}

\begin{clm}
If there is a way to connect source to sink with
weight $\leq n$ then a satisfying assignment exists.
\end{clm}

\begin{proof}
First of all, the total weight for any edge connection
from source to sink is $\geq n$ since we need to connect $u_1$ to
$v_n$ there are $n$ edges missing in any shortest path.  The edges
that will connect have weight $1$, each.  Therefore, if there is a way
to connect source to sink with weight $\leq n$, the total weight must
in fact be $n$. This allows us to conclude that for every variable
gadget precisely one of the missing edges is present. As a result, we
can now form a truth assignment setting $x_i$ to true if $P_i$ is
chosen and false if $N_i$ is. Therefore, the truth assignment will assign
either true to $x_i$ or false and not both thanks to the weight limit
of $n$.
\end{proof}

Next, we prove that each $a_j$ will be connected to $b_j$ in each 
clause gadget corr. to clause $C_j$. Let us assume that this
was using the edge $(a_j^i,b_j^i) \in N_i$. Then, by construction have
that $\overline{x_i}$ was in the clause $C_j$ which is now satisfied
since $N_i$ is chosen, assigning $x_i$ to false.  Similar reasoning
can be used if $(a_j,b_j) \in P_i$. Combining, we conclude that all
clauses are satisfied by our truth assignment.
\end{proof}



\section{Appendix: Upper Bound of the Approximation Ratio of AAMRP}
\label{app:bound}

We shall show the upper bound of the approximation algorithm (AAMRP) for a special case.

\begin{thm}
AAMRP on planar Directed Acyclic Graphs (DAG) where all the paths merge on the same node is a polynomial-time 2-approximation algorithm for the Minimal Revision Problem (MRP).
\end{thm}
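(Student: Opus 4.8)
The plan is to exploit the very restricted shape that the hypothesis forces on $G_{\Ac}$, reduce Problem~\ref{prb:formal} to two shortest‑path‑type computations (prefix and lasso), observe that each of these is solved \emph{exactly} by {\sc FindMinPath} on this shape, and then let the factor $2$ come out of the fact that AAMRP commits to the best prefix before looking at the lasso. First I would pin down the structure. If $G_{\Ac}$ is a DAG in which every $v_s$–$v_f$ path merges on one node $m$, then $m$ is the only vertex of in‑degree $\geq 2$; discarding vertices that lie on no $v_s$–$v_f$ path, the graph induced on $V\setminus\{m\}$ is therefore an out‑tree rooted at $v_s$ whose leaves carry edges into $m$, and $m$ carries only the (self‑)cycle that realizes the lasso. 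Hence a valid relaxation is obtained by choosing a $v_s$–$m$ path that deletes as few elements of $\APRem$ as possible (the prefix) and closing a cycle at $m$ that deletes as few \emph{additional} elements as possible (the lasso); throughout I would use the elementary bound $|X\cup Y|\le|X|+|Y|\le 2\,|X\cup Y|$ for finite sets.

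Next I would show that, on this structured graph, the Dijkstra‑style procedure {\sc FindMinPath} returns the exact optima of the prefix and lasso subproblems. This is where the hypothesis does the work: the cost of a path (the size of the union of the $\Lambda$‑labels along it) is \emph{not} monotone under taking subpaths — which is precisely why AAMRP is only a heuristic in general, cf.\ Example~\ref{exmp:heur:03} — but in the out‑tree part every vertex $v$ is reached by a \emph{unique} path, so the relaxation of Alg.~\ref{alg:aamrp:relax} accumulates in $\Mc[v,1]$ exactly the union of labels along that path. The only branching choice is at $m$, and since $m$ is terminal (its sole outgoing edge is the lasso self‑loop), picking the incoming edge of least union‑size is globally optimal for reaching $m$; thus the prefix phase returns a set $P_a$ with $|P_a|=\mathrm{OPT}_{\mathrm{pref}}:=\min$ over $v_s$–$m$ paths of their cost. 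The same unique‑path argument applied to the lasso search from $m$, seeded in line~\ref{alg:main:store} of Alg.~\ref{alg:aamrp:main} with $P_a$, shows it returns $\min_C|P_a\cup C|$ over $m$‑cycles $C$ exactly, which is at most $|P_a|+\mathrm{OPT}_{\mathrm{lasso}}$ where $\mathrm{OPT}_{\mathrm{lasso}}$ is the least cost of an $m$‑cycle.

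Combining the two phases gives the bound. By the previous paragraph AAMRP outputs a revision of size at most $\mathrm{OPT}_{\mathrm{pref}}+\mathrm{OPT}_{\mathrm{lasso}}$. Conversely, an optimal valid relaxation $\BUCHI^{*}$ induces an accepting run on $\FTS\times\BUCHI^{*}$, hence a prefix path with label‑set $P^{*}$ and a lasso path with label‑set $L^{*}$ on $G_{\Ac}$, and by Def.~\ref{defn:dist} we have $\mathbf{dist}_{\ASPEC}(\BUCHI^{*})=|P^{*}\cup L^{*}|\ge\max(|P^{*}|,|L^{*}|)\ge\max(\mathrm{OPT}_{\mathrm{pref}},\mathrm{OPT}_{\mathrm{lasso}})\ge\tfrac12(\mathrm{OPT}_{\mathrm{pref}}+\mathrm{OPT}_{\mathrm{lasso}})$; here I must also note that the final node used by $\BUCHI^{*}$ is one of the nodes over which AAMRP minimizes in the loop at line~\ref{alg:main:loop}, so the comparison in line~\ref{alg:main:less} retains a value this small. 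Hence AAMRP's output is at most $2\,\mathbf{dist}_{\ASPEC}(\BUCHI^{*})$, i.e.\ AAMRP is a $2$‑approximation; since it makes $O(|V_f|)$ calls to a Dijkstra variant (cf.\ the running‑time analysis above) it runs in polynomial time.

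The step I expect to be the crux is the combination argument: one must verify that AAMRP truly attains $\mathrm{OPT}_{\mathrm{pref}}+\mathrm{OPT}_{\mathrm{lasso}}$ rather than something larger despite the carrying of prefix deletions into the lasso phase (line~\ref{alg:main:store}), and must treat carefully the regime in which the factor $2$ is tight — namely when $\BUCHI^{*}$ deliberately uses a non‑minimal prefix so as to share many deletions with its lasso, making $|P^{*}\cup L^{*}|$ far below $|P^{*}|+|L^{*}|$. A secondary point to settle is exactly how planarity is used: presumably to exclude the nested‑merge configurations of Example~\ref{exmp:heur:03} (where {\sc FindMinPath} would no longer be exact) and to justify that the ``all paths merge on a single node'' hypothesis really reduces to the out‑tree‑into‑$m$ picture above.
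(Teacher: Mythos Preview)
Your core inequality is right, and in fact cleaner than what the paper does: given $|\Lambda_a|\le|\Lambda_{\mathrm{opt}}|$ and a common tail set $\Lambda$, one line
\[
|\Lambda_a\cup\Lambda|\;\le\;|\Lambda_a|+|\Lambda|\;\le\;|\Lambda_{\mathrm{opt}}|+|\Lambda|\;\le\;2\max(|\Lambda_{\mathrm{opt}}|,|\Lambda|)\;\le\;2\,|\Lambda_{\mathrm{opt}}\cup\Lambda|
\]
replaces the paper's four-case contradiction analysis entirely. The problem is where you locate the two pieces of this inequality.

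You attribute the factor $2$ to the split between AAMRP's prefix phase and its lasso phase (lines~\ref{alg:main:prefix} and~\ref{alg:main:lasso} of Alg.~\ref{alg:aamrp:main}). The paper does not: on a DAG the lasso is a trivial self-loop, and the paper's proof works with a single $v_0$--$v_k$ path, splitting it at the merge node $v_j$ into the part where $P_a$ and $P_{\mathrm{opt}}$ differ and a \emph{common suffix} $P_{jk}$. Your structural claim that the merge node $m$ ``carries only the self-cycle'' is unwarranted: the hypothesis forces $m$ to be the unique vertex of in-degree $\ge 2$, but after $m$ there can be a nontrivial single path to $v_f$, carrying a label set $S\neq\emptyset$. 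With that suffix present, your step ``{\sc FindMinPath} returns $|P_a|=\mathrm{OPT}_{\mathrm{pref}}$'' fails. {\sc FindMinPath} chooses at $m$ the branch $P_a$ minimizing $|P_a|$, then accumulates $S$; but $\mathrm{OPT}_{\mathrm{pref}}=\min_P|P\cup S|$ may be attained by a different branch. Consequently your bound $|L|\le\mathrm{OPT}_{\mathrm{pref}}+\mathrm{OPT}_{\mathrm{lasso}}$ picks up an extra $|S|$ and does not close.

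The repair is immediate and matches the paper's decomposition: take $\Lambda_a$ and $\Lambda_{\mathrm{opt}}$ to be the label-unions up to the merge node (where the out-tree structure does give $|\Lambda_a|\le|\Lambda_{\mathrm{opt}}|$, this is the paper's Remark~2), and take $\Lambda$ to be the \emph{entire} common tail (suffix $S$ together with the lasso labels). Then your one-line inequality above yields the $2$-approximation directly. So your plan goes through once you move the split from ``prefix vs.\ lasso'' to ``before merge vs.\ after merge''; the planarity and single-merge hypotheses are used exactly to guarantee that $P_a$ and $P_{\mathrm{opt}}$ diverge and rejoin only once, which is what the paper's setup of $P_{0i},P_{ij},P'_{ij},P_{jk}$ encodes.
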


\begin{proof}

We have already seen that the AAMRP runs in polynomial time.

Let $Y = \{y_1,\ldots,y_m\}$ be a set of Boolean variables
and $G:(V,E)$ be a graph with a labeling function
$L: E \rightarrow \Pc(Y)$, wherein each edge $e \in E$ is labeled
with a set of Boolean variables $L(e) \subseteq Y$.
The label on an edge indicates that the edge is \emph{enabled}
iff all the Boolean variables on the edge are set to true.
Let $v_0 \in V$ be a marked initial state
and $F \subseteq V$ be a set of marked final vertices.

Consider two functions $w':E^* \rightarrow \Pc(Y)$,
and $w:E^* \rightarrow \mathbb{N}$
where $E^*$ represents the set of all finite sequences of edges
of the graph $G$.
Hence, $w'(P)$ for a path $P = \langle v_0,\ldots,v_k \rangle $ is a set of boolean variables
of its constituent edges which makes them enabled on the path $P$:
\[ w'(P) = \bigcup \limits_{i=1}^k {L(v_{i-1},v_i)} \]
while $w(P)$ is the number of the boolean variables of
\[ w(P) = \Bigl \lvert \bigcup \limits_{i=1}^k {L(v_{i-1},v_i)} \Bigr \rvert = \Bigl \lvert w'(P) \Bigr \rvert \]

Given a initial vertex $v_0$, two vertices $v_i$, $v_j$, and a final vertex $v_k$,
let $P_{opt}$ denote the path that produces an optimal revision for the given graph.
Let $P_a$ denote a general revision by {\sc AAMRP}.
Suppose that $P_{opt}$ consists of subpaths $P_{0 i}$, $P_{i j}$, $P_{j k}$,
and $P_a$ consists of subpaths $P_{0 i}$, $P'_{i j}$, $P_{j k}$.

We will discuss the cases when $P_{0 i}$ and $P_{j k}$ are empty later.
The former case can occur when $P_{opt}$ and $P_a$ do not have any common edges from $v_0$ to $v_i$ in the sense that each path takes a different neighbor out of $v_0$.
This case can also occur when $0 < i$ if from $v_0$ to $v_i$ there is no boolean variables to be enabled to make the path activated. 
Likewise, the latter case can occur when $i < j < k$ or when $i \le j = k$.
We do not take $i = j$ unless $j = k$.
Considering both cases together, we can get the possibility that $P_{opt}$ and $P_a$ are entirely different from $v_0$ to $v_k$.

In $v_j$, the AAMRP should relax the weight of the path from $v_0$ to $v_j$, comparing between two paths $P_{i j}$ and $P'_{i j}$.
Thus, we can denote:
\[ w'(P_m) = w'(P_{0 i}) \cup w'(P'_{i j}) \cup w'(P_{j k}), \]
\[ w'(P^*) = w'(P_{0 i}) \cup w'(P_{i j}) \cup w'(P_{j k}). \]

Let $w'(P_{0 i}) \cup w'(P'_{i j}) = \Lambda_a$, $w'(P_{0 i}) \cup w'(P_{i j}) = \Lambda_{opt}$, and $w'(P_{j k}) = \Lambda$. Then, we can denote:
\[ w'(P_a) = \Lambda_a \cup \Lambda, \]
\[ w'(P_{opt}) = \Lambda_{opt} \cup \Lambda. \]
Recall that
\[ w(P_a) = \lvert \Lambda_a \cup \Lambda \rvert, \]
\[ w(P_{opt}) = \lvert \Lambda_{opt} \cup \Lambda \rvert. \]



We will show that  $w(P_a) \le 2w(P_{opt})$.

Note that $w(P_{opt}) \ge 1$, so that $\lvert \Lambda_{opt} \cup \Lambda \rvert \ge 1$. This is because if $w(P_{opt}) = 0$, then it is reachable from $v_0$ to $v_f$ without enabling any boolean variables which are atomic propositions of the specification.

\begin{rem}
\label{rem1}
$\lvert \Lambda_{opt} \cup \Lambda \rvert \ge 1$.
\end{rem}

Note that $\lvert \Lambda_a \rvert \le \lvert \Lambda_{opt} \rvert$. This is because the AAMRP only relaxes the path when it has less number of boolean variables.

\begin{rem}
\label{rem2}
$\lvert \Lambda_a \rvert \le \lvert \Lambda_{opt} \rvert$.
\end{rem}

Note that if $\lvert \Lambda_a \rvert = 0$,
then $\lvert \Lambda_{opt} \cup \Lambda \rvert = \lvert \Lambda_a \cup \Lambda \rvert \le 2\lvert \Lambda_{opt} \cup \Lambda \rvert$.
In this case, 
$\Lambda_a$ is the optimal path if $\Lambda_a = 0$ since
$\Lambda$ is common for the two paths.
I.e., $w(P_a) \le 2w(P_{opt})$.



Consider the case $\lvert \Lambda_a \rvert \ge 1$.
We will prove the claim by contradiction. Assume that $2w(P_{opt}) < w(P_a)$
so that $2\lvert \Lambda_{opt} \cup \Lambda \rvert < \lvert \Lambda_a \cup \Lambda \rvert$.
Let $\lvert \Lambda_a \rvert = \mu$, $\lvert \Lambda_{opt} \rvert = \eta$ and $\lvert \Lambda \rvert = \tau$.
There are four cases.


Case 1: if $\Lambda_{opt} \cap \Lambda = \emptyset$ and $\Lambda_a \cap \Lambda = \emptyset$, then
\[ 2\lvert \Lambda_{opt} \cup \Lambda \rvert < \lvert \Lambda_a \cup \Lambda \rvert \Rightarrow 2(\eta + \tau) < \mu + \tau \]
\[ 2\eta + 2\tau < \mu + \tau \Rightarrow 2\eta + \tau < \mu \]
However, $\mu \le \eta$ by Remark \ref{rem2}. Thus, $\eta + \tau < 0$ which is not possible and it contradicts our assumption.



Case 2: if $\Lambda_{opt} \cap \Lambda \ne \emptyset$ and $\Lambda_a \cap \Lambda = \emptyset$, then
let $\lvert \Lambda_{opt} \cap \Lambda \rvert = \zeta$,
where $1 \le \zeta \le \min(\eta,\tau)$.
\[ 2\lvert \Lambda_{opt} \cup \Lambda \rvert < \lvert \Lambda_a \cup \Lambda \rvert \Rightarrow  2(\eta + \tau - \zeta) < \mu + \tau \]
\[ 2\eta + 2\tau - 2\zeta < \mu + \tau \Rightarrow 2\eta - 2\zeta + \tau < \mu \]

If $\eta \le \tau$, then $\zeta \le \eta$ and $\eta = \zeta + \alpha$, for some $\alpha \ge 0$.
\[ 2(\zeta + \alpha) - 2\zeta + \tau < \mu \Rightarrow 2\alpha + \tau < \mu \]

However, $\eta \le \tau$ and $\mu \le \eta$ by Remark \ref{rem2}. Thus, $2\alpha < 0$ which is not possible and it contradicts our assumption.




If $\eta > \tau$ and $\eta = \tau + \beta$, for some $\beta > 0$, then $\zeta \le \tau$ and $\tau = \zeta + \alpha$, for some $\alpha \ge 0$.
\[ 2(\tau + \beta) - 2\zeta + \tau < \mu \Rightarrow 2\tau - 2\zeta + 2\beta + \tau < \mu \]
\[ 2(\zeta + \alpha) - 2\zeta + 2\beta + \tau < \mu \Rightarrow 2\zeta + 2\alpha - 2\zeta + 2\beta + \tau < \mu \]
\[ 2\alpha + 2\beta + \tau < \mu \Rightarrow 2\alpha + 2\beta + \eta - \beta < \mu \]
\[ 2\alpha + \beta + \eta < \mu \]

However, $\mu \le \eta$ by Remark \ref{rem2}. Thus, $2\alpha + \beta < 0$ which is not possible and it contradicts our assumption.


Case 3: if $\Lambda_{opt} \cap \Lambda = \emptyset$, $\Lambda_a \cap \Lambda \ne \emptyset$, then let $\lvert \Lambda_a \cap \Lambda \rvert = \theta$, where $1 \le \theta \le \min(\mu,\tau)$.
\[ 2\lvert \Lambda_{opt} \cup \Lambda \rvert < \lvert \Lambda_a \cup \Lambda \rvert \Rightarrow 2(\eta + \tau) < \mu + \tau - \theta \]
\[ 2\eta + \tau < \mu - \theta \Rightarrow 2\eta + \tau < \mu \]

However, $\mu \le \eta$ by Remark \ref{rem2}. Thus, $\eta + \tau < 0$ which is not possible and it contradicts our assumption.






Finally the last case: if $\Lambda_{opt} \cap \Lambda \ne \emptyset$, $\Lambda_a \cap \Lambda \ne \emptyset$, then let $\lvert \Lambda_{opt} \cap \Lambda \rvert = \zeta$, where $1 \le \zeta \le \min(\eta,\tau)$, and $\lvert \Lambda_a \cap \Lambda \rvert = \theta$, where $1 \le \theta \le \min(\mu,\tau)$.
\[ 2\lvert \Lambda_{opt} \cup \Lambda \rvert < \lvert \Lambda \cup \Lambda \rvert \Rightarrow 2(\eta + \tau - \zeta) < \mu + \tau - \theta \]
\[ 2\eta + 2\tau - 2\zeta < \mu + \tau - \theta \Rightarrow 2\eta + \tau - 2\zeta < \mu - \theta < \mu \]


If $\eta \le \tau$, $\zeta \le \eta$ and $\eta = \zeta + \alpha$, for some $\alpha \ge 0$, then
\[ 2(\zeta + \alpha) + \tau - 2\zeta < \mu \Rightarrow 2\alpha + \tau < \mu \]

However, $\mu \le \eta$ and $\eta \le \tau$ by Remark \ref{rem2}. Thus, $2\alpha < 0$ which is not possible and it contradicts our assumption.

If $\eta > \tau$, $\eta = \tau + \beta$, for some $\beta > 0$, $\zeta \le \tau$ and $\tau = \zeta + \alpha$, for some $\alpha \ge 0$, then
\[ 2\eta + \tau - 2\zeta < \mu \Rightarrow \eta + (\tau + \beta) + \tau - 2\zeta < \mu \]
\[ \eta + (\zeta + \alpha) + \beta + (\zeta + \beta) -2\zeta < \mu \Rightarrow \eta + 2\zeta + \alpha + 2\beta - 2\zeta < \mu \]
\[ \eta + \alpha + 2\beta < \mu \]

However, $\mu \le \eta$ by Remark \ref{rem2}. Thus, $\alpha + 2\beta < 0$ which is not possible and it contradicts our assumption.

\end{proof}

Therefore, $ \lvert \Lambda_a \cup \Lambda \rvert \le 2\lvert \Lambda_{opt} \cup \Lambda \rvert$, and we can conclude that $w(P_a) \le 2w(P_{opt})$.

\bibliographystyle{plainnat}
\bibliography{mypapers,fainekos_bibrefs}

\begin{thebibliography}{47}
\providecommand{\natexlab}[1]{#1}
\providecommand{\url}[1]{\texttt{#1}}
\expandafter\ifx\csname urlstyle\endcsname\relax
  \providecommand{\doi}[1]{doi: #1}\else
  \providecommand{\doi}{doi: \begingroup \urlstyle{rm}\Url}\fi

\bibitem[Bhatia et~al.(2010)Bhatia, Kavraki, and Vardi]{BhatiaKV10icra}
A.~Bhatia, L.~E. Kavraki, and M.~Y. Vardi.
\newblock Sampling-based motion planning with temporal goals.
\newblock In \emph{International Conference on Robotics and Automation}, pages
  2689--2696. IEEE, 2010.

\bibitem[Bobadilla et~al.(2011)Bobadilla, Sanchez, Czarnowski, Gossman, and
  LaValle]{BobadillaEtAlRSS11}
Leonardo Bobadilla, Oscar Sanchez, Justin Czarnowski, Katrina Gossman, and
  Steven LaValle.
\newblock Controlling wild bodies using linear temporal logic.
\newblock In \emph{Proceedings of Robotics: Science and Systems}, Los Angeles,
  CA, USA, June 2011.

\bibitem[Buchi(1960)]{Buchi60}
J.~R. Buchi.
\newblock Weak second order arithmetic and finite automata.
\newblock \emph{Zeitschrift f{\"u}r Math. Logik und Grundlagen Math.},
  6:\penalty0 66--92, 1960.

\bibitem[Chechik and Gurfinkel(2003)]{ChechikG03cav}
Marsha Chechik and Arie Gurfinkel.
\newblock Tlqsolver: A temporal logic query checker.
\newblock In \emph{Proceedings of the 15th International Conference on Computer
  Aided Verification}, volume 2725, pages 210--214. Springer, 2003.

\bibitem[Choset et~al.(2005)Choset, Lynch, Hutchinson, Kantor, Burgard,
  Kavraki, and Thrun]{ChosetLHKBKT05book}
Howie Choset, Kevin~M. Lynch, Seth Hutchinson, George Kantor, Wolfram Burgard,
  Lydia~E. Kavraki, and Sebastian Thrun.
\newblock \emph{Principles of Robot Motion: Theory, Algorithms and
  Implementations}.
\newblock MIT Press, March 2005.

\bibitem[Cimatti et~al.(2008)Cimatti, Roveri, Schuppan, and
  Tchaltsev]{CimattiRST}
A.~Cimatti, M.~Roveri, V.~Schuppan, and A.~Tchaltsev.
\newblock Diagnostic information for realizability.
\newblock In Francesco Logozzo, Doron Peled, and Lenore Zuck, editors,
  \emph{Verification, Model Checking, and Abstract Interpretation}, volume 4905
  of \emph{LNCS}, pages 52--67. Springer, 2008.

\bibitem[Cizelj and Belta(2013)]{CizeljB13}
Igor Cizelj and Calin Belta.
\newblock Negotiating the probabilistic satisfaction of temporal logic motion
  specifications.
\newblock In \emph{IEEE/RSJ International Conference on Intelligent Robots and
  Systems}, 2013.

\bibitem[Clarke et~al.(1999)Clarke, Grumberg, and Peled]{ClarkeGP99}
Edmund~M. Clarke, Orna Grumberg, and Doron~A. Peled.
\newblock \emph{Model Checking}.
\newblock MIT Press, Cambridge, Massachusetts, 1999.

\bibitem[Cormen et~al.(2001)Cormen, Leiserson, Rivest, and Stein]{CormenLRS01}
Thomas~H. Cormen, Charles~E. Leiserson, Ronald~L. Rivest, and Cliff Stein.
\newblock \emph{Introduction to Algorithms}.
\newblock MIT Press/McGraw-Hill, second edition, September 2001.

\bibitem[Ding and Zhang(2005)]{DingZ05ismis}
Yulin Ding and Yan Zhang.
\newblock A logic approach for ltl system modification.
\newblock In \emph{15th International Symposium on Foundations of Intelligent
  Systems}, volume 3488 of \emph{LNCS}, pages 435--444. Springer, 2005.
\newblock ISBN 3-540-25878-7.

\bibitem[Dzifcak et~al.(2009)Dzifcak, Scheutz, Baral, and
  Schermerhorn]{DzifcakSBS09icra}
Juraj Dzifcak, Matthias Scheutz, Chitta Baral, and Paul Schermerhorn.
\newblock What to do and how to do it: Translating natural language directives
  into temporal and dynamic logic representation for goal management and action
  execution.
\newblock In \emph{Proceedings of the IEEE international conference on robotics
  and automation}, 2009.

\bibitem[Fainekos(2011)]{Fainekos11icra}
Georgios~E. Fainekos.
\newblock Revising temporal logic specifications for motion planning.
\newblock In \emph{Proceedings of the IEEE Conference on Robotics and
  Automation}, May 2011.

\bibitem[Fainekos et~al.(2009)Fainekos, Girard, Kress-Gazit, and
  Pappas]{FainekosGKGP09automatica}
Georgios~E. Fainekos, Antoine Girard, Hadas Kress-Gazit, and George~J. Pappas.
\newblock Temporal logic motion planning for dynamic robots.
\newblock \emph{Automatica}, 45\penalty0 (2):\penalty0 343--352, February 2009.

\bibitem[Filippidis et~al.(2012)Filippidis, Dimarogonas, and
  Kyriakopoulos]{FilippidisDK12cdc}
Ioannis Filippidis, Dimos~V. Dimarogonas, and Kostas~J. Kyriakopoulos.
\newblock Decentralized multi-agent control from local {LTL} specifications.
\newblock In \emph{51st IEEE Conference on Decision and Control}, pages pp.
  6235--6240, 2012.

\bibitem[Finger and Wassermann(2008)]{FingerW08bsai}
Marcelo Finger and Renata Wassermann.
\newblock Revising specifications with {CTL} properties using bounded model
  checking.
\newblock In \emph{Brazilian Symposium on Artificial Intelligence}, volume 5249
  of \emph{LNAI}, page 157–166, 2008.

\bibitem[Gastin and Oddoux(2001)]{GastinO01cav}
P.~Gastin and D.~Oddoux.
\newblock Fast {LTL} to {B{\"u}chi} automata translation.
\newblock In G.~Berry, H.~Comon, and A.~Finkel, editors, \emph{Proceedings of
  the 13th CAV}, volume 2102 of \emph{LNCS}, pages 53--65. Springer, 2001.

\bibitem[Giacomo and Vardi(1999)]{GiacomoV99ecp}
Giuseppe~De Giacomo and Moshe~Y. Vardi.
\newblock Automata-theoretic approach to planning for temporally extended
  goals.
\newblock In \emph{European Conference on Planning}, volume 1809 of
  \emph{LNCS}, pages 226--238. Springer, 1999.

\bibitem[G{\"o}belbecker et~al.(2010)G{\"o}belbecker, Keller, Eyerich, Brenner,
  and Nebel]{Gobelbecker2010}
Moritz G{\"o}belbecker, Thomas Keller, Patrick Eyerich, Michael Brenner, and
  Bernhard Nebel.
\newblock Coming up with good excuses: What to do when no plan can be found.
\newblock In \emph{Proceedings of the 20th International Conference on
  Automated Planning and Scheduling (ICAPS)}. AAAI Press, may 2010.

\bibitem[Gurfinkel et~al.(2002)Gurfinkel, Devereux, and
  Chechik]{GurfinkelDC02sigsoft}
Arie Gurfinkel, Benet Devereux, and Marsha Chechik.
\newblock Model exploration with temporal logic query checking.
\newblock \emph{SIGSOFT Softw. Eng. Notes}, 27\penalty0 (6):\penalty0 139--148,
  2002.

\bibitem[Hauser(2012)]{Kris12}
Kris Hauser.
\newblock The minimum constraint removal problem with three robotics
  applications.
\newblock In \emph{In Proceedings of the International Workshop on the
  Algorithmic Foundations of Robotics (WAFR)}, 2012.

\bibitem[Karaman et~al.(2008)Karaman, Sanfelice, and Frazzoli]{KaramanSF08cdc}
S.~Karaman, R.~Sanfelice, and E.~Frazzoli.
\newblock Optimal control of mixed logical dynamical systems with linear
  temporal logic specifications.
\newblock In \emph{IEEE Conf. on Decision and Control}, 2008.

\bibitem[Kim and Fainekos(2012)]{KimFS12iros}
K.~Kim and G.~Fainekos.
\newblock Approximate solutions for the minimal revision problem of
  specification automata.
\newblock In \emph{Proceedings of the IEEE/RSJ International Conference on
  Intelligent Robots and Systems}, 2012.

\bibitem[Kim(2014{\natexlab{a}})]{LTL2BA_CPSLAB}
Kangjin Kim.
\newblock {LTL2BA} modification for indexing, 2014{\natexlab{a}}.
\newblock URL \url{https://git.assembla.com/ltl2ba_cpslab.git}.

\bibitem[Kim(2014{\natexlab{b}})]{SRPT}
Kangjin Kim.
\newblock Temporal logic specification revision and planning toolbox,
  2014{\natexlab{b}}.
\newblock URL
  \url{https://subversion.assembla.com/svn/temporal-logic-specification-revision-and-planning-toolbox/}.

\bibitem[Kim and Fainekos(2014)]{KimF14icra}
Kangjin Kim and Georgios Fainekos.
\newblock Revision of specification automata under quantitative preferences.
\newblock In \emph{Proceedings of the IEEE Conference on Robotics and
  Automation}, 2014.

\bibitem[Kim et~al.(2012)Kim, Fainekos, and Sankaranarayanan]{KimFS12icra}
Kangjin Kim, Georgios Fainekos, and Sriram Sankaranarayanan.
\newblock On the revision problem of specification automata.
\newblock In \emph{Proceedings of the IEEE Conference on Robotics and
  Automation}, May 2012.

\bibitem[Kloetzer and Belta(2010)]{KloetzerB10tro}
M.~Kloetzer and C.~Belta.
\newblock Automatic deployment of distributed teams of robots from temporal
  logic specifications.
\newblock \emph{IEEE Transactions on Robotics}, 26\penalty0 (1):\penalty0
  48--61, 2010.

\bibitem[Konighofer et~al.(2009)Konighofer, Hofferek, and
  Bloem]{KonighoferHB09fmcad}
R.~Konighofer, G.~Hofferek, and R.~Bloem.
\newblock Debugging formal specifications using simple counterstrategies.
\newblock In \emph{Formal Methods in Computer-Aided Design}, pages 152 --159.
  IEEE, November 2009.

\bibitem[Kress-Gazit et~al.(2008)Kress-Gazit, Fainekos, and
  Pappas]{KressGazitFP08ar}
Hadas Kress-Gazit, Georgios~E. Fainekos, and George~J. Pappas.
\newblock Translating structured english to robot controllers.
\newblock \emph{Advanced Robotics}, 22\penalty0 (12):\penalty0 1343–--1359,
  2008.

\bibitem[Kress-Gazit et~al.(2009)Kress-Gazit, Fainekos, and
  Pappas]{KressGazitFP09tro}
Hadas Kress-Gazit, Gerogios~E. Fainekos, and George~J. Pappas.
\newblock Temporal logic based reactive mission and motion planning.
\newblock \emph{IEEE Transactions on Robotics}, 25\penalty0 (6):\penalty0 1370
  -- 1381, 2009.

\bibitem[Kupferman et~al.(2008)Kupferman, Li, and Seshia]{Kupferman2008fmcad}
Orna Kupferman, Wenchao Li, and Sanjit~A. Seshia.
\newblock A theory of mutations with applications to vacuity, coverage, and
  fault tolerance.
\newblock In \emph{Proceedings of the International Conference on Formal
  Methods in Computer-Aided Design}, pages 25:1--25:9, Piscataway, NJ, USA,
  2008. IEEE Press.

\bibitem[Lacerda and Lima(2011)]{LacerdaL2011rss}
Bruno Lacerda and Pedro Lima.
\newblock Designing petri net supervisors from ltl specifications.
\newblock In \emph{Proceedings of Robotics: Science and Systems}, Los Angeles,
  CA, USA, June 2011.

\bibitem[LaValle(2006)]{LaValle05}
Steven~M. LaValle.
\newblock \emph{Planning Algorithms}.
\newblock Cambridge University Press, 2006.
\newblock URL \url{http://msl.cs.uiuc.edu/planning/}.

\bibitem[LaViers et~al.(2011)LaViers, Egerstedt, Chen, and
  Belta]{LaViersEtAl2011iccps}
Amy LaViers, Magnus Egerstedt, Yushan Chen, and Calin Belta.
\newblock Automatic generation of balletic motions.
\newblock \emph{IEEE/ACM International Conference on Cyber-Physical Systems},
  0:\penalty0 13--21, 2011.

\bibitem[Park(1981)]{Park1981cai}
David Park.
\newblock Concurrency and automata on infinite sequences.
\newblock In \emph{Proceedings of the 5th GI-Conference on Theoretical Computer
  Science}, pages 167--183. Springer-Verlag, 1981.

\bibitem[Raman and Kress-Gazit(2011)]{RamanK11cav}
Vasumathi Raman and Hadas Kress-Gazit.
\newblock Analyzing unsynthesizable specifications for high-level robot
  behavior using {LTLMoP}.
\newblock In \emph{23rd International Conference on Computer Aided
  Verification}, volume 6806 of \emph{LNCS}, pages 663--668. Springer, 2011.

\bibitem[Raman and Kress{-}Gazit(2013)]{RamanK13}
Vasumathi Raman and Hadas Kress{-}Gazit.
\newblock Towards minimal explanations of unsynthesizability for high-level
  robot behaviors.
\newblock In \emph{2013 {IEEE/RSJ} International Conference on Intelligent
  Robots and Systems, Tokyo, Japan, November 3-7, 2013}, pages 757--762, 2013.
\newblock \doi{10.1109/IROS.2013.6696436}.
\newblock URL \url{http://dx.doi.org/10.1109/IROS.2013.6696436}.

\bibitem[Raman et~al.(2013)Raman, Lignos, Finucane, Lee, Marcus, and
  Kress{-}Gazit]{RamanLFLMK13}
Vasumathi Raman, Constantine Lignos, Cameron Finucane, Kenton C.~T. Lee,
  Mitchell~P. Marcus, and Hadas Kress{-}Gazit.
\newblock Sorry dave, i'm afraid {I} can't do that: Explaining unachievable
  robot tasks using natural language.
\newblock In \emph{Robotics: Science and Systems IX, Technische
  Universit{\"{a}}t Berlin, Berlin, Germany, June 24 - June 28, 2013}, 2013.
\newblock URL \url{http://www.roboticsproceedings.org/rss09/p23.html}.

\bibitem[Roy et~al.(2011)Roy, Tabuada, and Majumdar]{RoyTM11hscc}
Pritam Roy, Paulo Tabuada, and Rupak Majumdar.
\newblock Pessoa 2.0: a controller synthesis tool for cyber-physical systems.
\newblock In \emph{Proceedings of the 14th international conference on Hybrid
  systems: computation and control}, HSCC '11, pages 315--316, New York, NY,
  USA, 2011. ACM.

\bibitem[Smith(2004)]{Smith04}
David~E. Smith.
\newblock Choosing objectives in over-subscription planning.
\newblock In \emph{Proceedings of the 14th International Conference on
  Automated Planning and Scheduling (ICAPS-04)}, pages 393--401, 2004.

\bibitem[Sniedovich(2006)]{Sniedovich06cc}
Moshe Sniedovich.
\newblock Dijkstra's algorithm revisited: the dynamic programming connexion.
\newblock \emph{Control and Cybernetics}, 35\penalty0 (3):\penalty0 599--620,
  2006.

\bibitem[Thistle and Wonham(1994)]{ThistleW94}
J.~G. Thistle and W.~M. Wonham.
\newblock Supervision of infinite behavior of discrete-event systems.
\newblock \emph{SIAM J. Control Optim.}, 32\penalty0 (4):\penalty0 1098--1113,
  1994.
\newblock ISSN 0363-0129.

\bibitem[Tumova et~al.(2013)Tumova, Castro, Karaman, Frazzoli, and
  Rus]{TumovaEtAl13acc}
J.~Tumova, L.~I.~Reyes Castro, S.~Karaman, E.~Frazzoli, and D.~Rus.
\newblock Minimum-violating planning with conflicting specifications.
\newblock In \emph{American Control Conference}, 2013.

\bibitem[Ulusoy et~al.(2011)Ulusoy, Smith, Ding, Belta, and
  Rus]{UlusoyEtAl2011iros}
Alphan Ulusoy, Stephen~L. Smith, Xu~Chu Ding, Calin Belta, and Daniela Rus.
\newblock Optimal multi-robot path planning with temporal logic constraints.
\newblock In \emph{IEEE/RSJ International Conference on Intelligent Robots and
  Systems,}, pages 3087 --3092, 2011.

\bibitem[Ulusoy et~al.(2012)Ulusoy, Smith, Ding, and Belta]{UluosoySDB2012}
Alphan Ulusoy, Stephen~L. Smith, Xu~Chu Ding, and Calin Belta.
\newblock Robust multi-robot optimal path planning with temporal logic
  constraints.
\newblock In \emph{2012 IEEE International Conference on Robotics and
  Automation (ICRA)}, 2012.

\bibitem[van~den Briel et~al.(2004)van~den Briel, Sanchez, Do, and
  Kambhampati]{vandenBriel2004}
Menkes van~den Briel, Romeo Sanchez, Minh~B. Do, and Subbarao Kambhampati.
\newblock Effective approaches for partial satisfaction (over-subscription)
  planning.
\newblock In \emph{Proceedings of the 19th national conference on Artifical
  intelligence}, AAAI'04, pages 562--569. AAAI Press, 2004.
\newblock ISBN 0-262-51183-5.

\bibitem[Wongpiromsarn et~al.(2010)Wongpiromsarn, Topcu, and
  Murray]{WongpiromsarnTM10hscc}
Tichakorn Wongpiromsarn, Ufuk Topcu, and Richard~M. Murray.
\newblock Receding horizon control for temporal logic specifications.
\newblock In \emph{Proceedings of the 13th ACM international conference on
  Hybrid systems: computation and control}, pages 101--110, New York, NY, USA,
  2010. ACM.
\newblock ISBN 978-1-60558-955-8.

\end{thebibliography}
\end{document}